\newtheorem{theorem}{Theorem}[section]
\newtheorem{corollary}[theorem]{Corollary}
\newtheorem{lemma}[theorem]{Lemma}
\theoremstyle{definition}
\newtheorem{Definition}[theorem]{Definition}
\newtheorem{Remark}[theorem]{Remark}
\begin{document}

\newcommand\relatedversion{}

\def\prob#1#2{\mbox{Pr}_{#1}\left[ #2 \right]}
\def\pvec#1#2{\vec{\mbox{P}}^{#1}\left[ #2 \right]}
\def\expec#1#2{{\mathbb{E}}_{#1}\left[ #2 \right]}
\def\var#1{\mbox{\bf Var}\left[ #1 \right]}

\def\defeq{\stackrel{\mathrm{def}}{=}}
\def\setof#1{\left\{#1  \right\}}
\def\sizeof#1{\left|#1  \right|}

\def\floor#1{\left\lfloor #1 \right\rfloor}
\def\ceil#1{\left\lceil #1 \right\rceil}

\def\dim#1{\mathrm{dim} (#1)}
\def\sgn#1{\mathrm{sgn} (#1)}

\def\union{\cup}
\def\intersect{\cap}
\def\Union{\bigcup}
\def\Intersect{\bigcap}

\def\abs#1{\left|#1  \right|}

\def\norm#1{\left\| #1 \right\|}
\def\smallnorm#1{\| #1 \|}

\def\insertt{\mathsf{insert}}
\def\update{\mathsf{update}}
\def\delete{\mathsf{delete}}
\def\query{\mathsf{query}}

\newcommand{\wrap}[1]{\left(#1\right)}
\newcommand{\brac}[1]{\left[#1\right]}
\newcommand{\nm}[1]{\left\lVert #1\right\rVert}
\newcommand{\set}[1]{\left\{#1\right\}}
\newcommand{\ket}[1]{\left\lvert #1 \right\rangle}
\newcommand{\bra}[1]{\left\langle #1\right\rvert}
\newcommand{\inner}[1]{\left\langle #1\right\rangle}

\newcommand{\heart}{\heartsuit}
\newcommand{\spade}{\spadesuit}
\newcommand{\h}{\hslash}

\newcommand{\Int}{\text{Int}}
\newcommand{\Ext}{\text{Ext}}
\newcommand{\Bd}{\text{Bd}}
\newcommand{\cut}{\setminus}
\renewcommand{\subset}{\subseteq}
\renewcommand{\supset}{\supseteq}

\renewcommand{\Pr}{\text{Pr}}

\newcommand{\class}[1]{\ensuremath{\mathsf{#1}}}
\renewcommand{\P}{\class{P}}
\newcommand{\BPP}{\class{BPP}}
\newcommand{\NP}{\class{NP}}
\newcommand{\coNP}{\class{coNP}}
\newcommand{\AM}{\class{AM}}
\newcommand{\coAM}{\class{coAM}}
\newcommand{\IP}{\class{IP}}

\newcommand{\cc}{\mathrm{CC}}

\renewcommand{\bar}[1]{\overline{#1}}

\newcommand{\Mat}{\text{Mat}}
\newcommand{\mat}[2]{\left[\begin{array}{#1}#2\end{array}\right]}
\newcommand{\infint}{\int_{-\infty}^\infty}

\newcommand{\hlc}[2]{{\sethlcolor{#1}\hl{#2}}}

\newcommand{\eqdef}{\mathrel{\mathop=}:}

\newcommand{\calP}{\mathcal{P}}
\newcommand{\calQ}{\mathcal{Q}}
\newcommand{\calR}{\mathcal{R}}

\def\endpoints{\mathsf{End}}
\def\vol{\mathbf{vol}}
\def\calP{\mathcal{P}}
\def\updatesingle{\mathtt{Update}}
\def\updateseq{\mathtt{UpdateSeq}}
\def\newupdateseq{\mathtt{NewUpdateSeq}}
\def\sparsifier{\mathsf{Sparsifier}}
\def\contract{\mathsf{Contract}}

\title{\Large Fully Dynamic Min-Cut of Superconstant Size in Subpolynomial Time\relatedversion}
\author{Wenyu Jin\thanks{University of Illinois at Chicago. Email: \href{mailto:wjin9@uic.edu}{wjin9@uic.edu}.}
\and Xiaorui Sun\thanks{University of Illinois at Chicago. This project is supported by the National Science Foundation (NSF) grant 2240024.  Email: \href{mailto:xiaorui@uic.edu}{xiaorui@uic.edu}. }
\and Mikkel Thorup\thanks{University of Copenhagen. This project is supported by the VILLUM Foundation grant 16582. Email: \href{mailto:mikkel2thorup@gmail.com}{mikkel2thorup@gmail.com}. }
}

\date{}

\clearpage
\pagenumbering{arabic}
\setcounter{page}{1}

\maketitle

\begin{abstract} \small\baselineskip=9pt 
We present a deterministic fully dynamic algorithm with subpolynomial worst-case time per graph update such that after processing each update of the graph, the algorithm outputs a minimum cut of the graph if the graph has a cut of size at most $c$ for some $c = (\log n)^{o(1)}$.
Previously, the best update time was  $\widetilde O(\sqrt{n})$ for any $c > 2$ and $c = O(\log n)$ [Thorup, Combinatorica'07].

\end{abstract}

\section{Introduction}

In the study of dynamic graphs, we consider the scenario that a graph over a fixed vertex set undergoes edge insertions and deletions. 
For a property of the graph, 
the goal of a dynamic graph algorithm is to build a data structure that can process graph updates and queries of the graph property efficiently, assuming the updates and queries are presented online, meaning that the algorithm needs to process each update or query without knowing anything about the future.


Update and query time can be  categorized into two types:
\emph{worst-case}, i.e.,  the upper bound on the running time of any  update or query operation, 
and \emph{amortized}, i.e., the running time amortized over a sequence of operations. In this paper we focus on worst-case time, so all operation times are assumed to be worst-case unless explicitly specified.

In this paper, we study the fully dynamic minimum $c$-cut problem for a given integer $c$, in which the dynamic algorithm needs to output a minimum edge cut with size at most $c$ of the graph if such a cut exists after receiving each update of the graph. 
The dynamic minimum $c$-cut problem is a generalization of the dynamic $c$-edge connectivity problem.
The dynamic $c$-edge connectivity problem is to determine if the edge connectivity of the graph is at least $c$, where the edge connectivity of a graph is the size of the minimum cut of the graph. 

The dynamic minimum $c$-cut problem has been extensively studied since 1980s, mostly in the context of dynamic edge connectivity problem~\cite{frederickson1985data, galil1991fully, galil1991fully_3, westbrook1992maintaining,
henzinger1995randomized,
eppstein1997sparsification, frederickson1997ambivalent, henzinger1997fully,  henzinger1997sampling, henzinger1999randomized, thorup2000near, holm2001poly, kapron2013dynamic, wulff2013faster, kejlberg2016faster, nanongkai2017dynamica, nanongkai2017dynamic,  wulff2017fully, holm2018dynamic, chuzhoy2019deterministic, aaman2021optimal}. 
The fully dynamic $1$-edge connectivity problem is the classic fully dynamic connectivity problem. For a  graph with $n$ vertices and $m$ edges,
the best known algorithms for the fully dynamic $1$-edge connectivity have 
deterministic polylogarithmic amortized update time by Holm et al.~\cite{holm2001poly},
Monte Carlo randomized polylogarithmic worst-case update time by Kapron et al.~\cite{kapron2013dynamic},
and $n^{o(1)}$ deterministic worst-case update time by Chuzhoy et al.~\cite{chuzhoy2019deterministic}.

The study of fully dynamic 2-edge connectivity dates back to the work by Westbrook and Tarjan~\cite{westbrook1992maintaining} in a context of maintaining 2-edge connected components, where the
$c$-edge-connected components of a graph are the maximal induced subgraphs with edge connectivity at least $c$. 
Galil and Italiano~\cite{galil1991fully} obtained 
an  algorithm
with  $O(m^{2/3})$ update time. 
The update time  was improved to  $ O(\sqrt{m})$ by 
Frederickson~\cite{frederickson1997ambivalent}, and  $O(\sqrt{n})$ by Eppstein et al.~\cite{eppstein1997sparsification}. All these running times are worst-case. 
To date, the best known worst-case update time is still $O(\sqrt{n})$.
For the amortized case, 
polylogarithmic amortized update time algorithm were proposed in~\cite{henzinger1997fully,thorup2000near,holm2001poly,  holm2018dynamic}. 

The dynamic $c$-edge connectivity problem gets much harder for $c>2$. For $c\leq 2$, two vertices are in the same $c$-edge connected component 
iff they are $c$-edge connected (i.e., the two vertices cannot be disconnected by removing fewer than $c$ edges),
but this is not the case for higher $c$. For example, if two vertices are connected by three edge disjoint paths of length two, then the two vertices are $3$-edge connected but the $3$-edge connected components are all singleton vertices.
For the fully dynamic $3$-edge
and $4$-edge connectivity, $O(n^{2/3})$ and $O(n\alpha(n))$ update time algorithms were given by Eppstein et al.~\cite{eppstein1997sparsification}.
Thorup~\cite{thorup2007fully} presented a deterministic fully dynamic minimum $c$-cut algorithm for any polylogarithmic $c$ with $\widetilde O(\sqrt{n})$ time per update. 
Recently, 
Goranci et al. gave two fully dynamic algorithms for arbitrary edge connectivity with $\widetilde O(n)$ worst-case update time and $\widetilde O(m^{1 - 1/ 16})$ amortized update time respectively~\cite{goranci2023fully}.


All of the above results are for exact $c$-edge connectivity. 
For the fully dynamic approximate edge connectivity
problem, 
Thorup
and Karger~\cite{thorup2000dynamic} presented a deterministic $\sqrt{2+o(1)}$-approximate algorithm for arbitrary edge connectivity in amortized polylogarithmic time per update.
Thorup~\cite{thorup2007fully} presented a Monte Carlo randomized $(1 + o(1))$-approximate algorithm in $\widetilde O(\sqrt{n})$ update time, which only works for oblivious adversary.

On the other hand, a closely related problem to the problem studied in this paper is the dynamic $(s, t)$ $c$-edge connectivity problem~\cite{galil1991fully, galil1991fully_3, eppstein1997sparsification, frederickson1997ambivalent, henzinger1997fully, js20}. 
In the dynamic $(s, t)$ $c$-edge connectivity problem, 
graph updates and queries each containing two vertices are presented online so that for each query, the dynamic algorithm needs to determine if the edge connectivity between the two queried vertices is at least $c$, where the edge connectivity between two vertices in a graph is the size of the minimum cut separating the two vertices. 
We remark that the dynamic minimum $c$-cut problem (as well as dynamic $c$-edge connectivity problem) and the dynamic $(s, t)$ $c$-edge connectivity problem are related, but do not reduce to each other efficiently. 
To solve the dynamic minimum $c$-cut problem, 
for any vertex $s$, there exists a vertex $t$ such that the minimum cut separating $s$ and $t$ is a global minimum cut. However, since $t$ could be any vertex in the graph, it is not efficient to use fully dynamic $(s, t)$ $c$-edge connectivity algorithm to solve the dynamic minimum $c$-cut problem.

For the fully dynamic $(s, t)$ $c$-edge connectivity problem, 
polylogarithmic, $O(\sqrt{n})$, and $O(n^{2/3})$ worst-case update and query time algorithms for $c=1, 2,$ and $3$ were presented in~\cite{henzinger1999randomized, frederickson1997ambivalent}, and \cite{galil1991fully_3} respectively. 
But it was unknown if there is a sublinear time fully dynamic $(s, t)$ $c$-edge connectivity algorithm for $c> 3$, until very recently a fully dynamic $(s, t)$ $c$-edge connectivity algorithm with $n^{o(1)}$ update and query time for any $c = (\log n)^{o(1)}$ was proposed in~\cite{js20} by Jin and Sun. 

An obvious open question is if the fully dynamic minimum $c$-cut problem can also be solved in subpolynomial update time. 
In this paper, we give an affirmative answer to this question by proving the following result. 

\begin{theorem}\label{thm:main}
For any $c = (\log n)^{o(1)}$, 
there is a deterministic fully dynamic algorithm for an undirected graph with $n^{o(1)}$ running time per update such that after processing each update of the graph, the algorithm outputs a (global) minimum cut of the graph if the graph has a cut of size at most $c$, or outputs an empty set if the (global) minimum cut of the graph is of size greater than $c$. 
\end{theorem}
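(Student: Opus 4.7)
The plan is to bridge the gap between the dynamic $(s,t)$ $c$-edge connectivity algorithm of Jin and Sun~\cite{js20}, which already achieves $n^{o(1)}$ worst-case time for $c=(\log n)^{o(1)}$, and the global min-cut problem. The starting point is the elementary observation that for any fixed vertex $s$ the global minimum cut equals $\min_{t\neq s}\operatorname{mincut}_G(s,t)$: there is always some $t$ lying on the opposite side of a global min-cut from $s$. Since we cannot afford to maintain $n-1$ copies of the Jin--Sun structure, the task reduces to dynamically maintaining a small set $T\subset V$ of ``candidate partners'' — of size $n^{o(1)}$ — such that some global minimum cut separates $s$ from a vertex of $T$.

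The first step is to compress the graph with a Nagamochi--Ibaraki-style certificate, maintained dynamically, that preserves every cut of value at most $c$ while keeping the edge count at $O(cn)=n^{1+o(1)}$; this is standard and can be pieced together from known sparsification tools. The second, more substantive step is to maintain a recursive vertex-contraction hierarchy. At each level we contract a group of vertices known to be $(c+1)$-edge-connected — no cut of size $\leq c$ can cross such a group — into a single super-node, so that all min-cuts of size at most $c$ survive in the quotient. If each level shrinks the vertex count by a factor of $n^{\Omega(1/L)}$, then after $L$ levels (with $L$ a slow-growing function of $n$) the top-level quotient graph has only $n^{o(1)}$ super-nodes. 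Step three then installs the Jin--Sun $(s,t)$ $c$-edge connectivity data structure for every ordered pair of super-nodes in the top graph; the global minimum $c$-cut is read off as the minimum over all of these, and the corresponding cut is lifted back through the hierarchy to an actual min-cut in $G$.

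The main obstacle is the dynamic maintenance of the vertex-contraction hierarchy, because identifying maximal $(c+1)$-edge-connected pieces is essentially the problem we are trying to solve in the first place — this is precisely the difficulty highlighted in the introduction when noting that global min $c$-cut does not reduce efficiently to $(s,t)$ $c$-edge connectivity. The resolution I would pursue is a bootstrapping scheme in which the hierarchy at level $i$ is maintained by an instance of the algorithm restricted to a smaller contracted graph at level $i+1$, so that ``finding $(c+1)$-edge-connected pieces'' at the top level only ever calls the algorithm on strictly smaller instances. The $n^{o(1)}$ per-update time then follows from showing that each external edge update triggers at most $n^{o(1)}$ local modifications at each of the $L$ levels, using the Jin--Sun machinery as the base case when the instance becomes trivial; the parameter $L$ is chosen so that $n^{O(L)\cdot o(1)}$ is still $n^{o(1)}$, which forces $c=(\log n)^{o(1)}$ exactly as in the statement. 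The final, relatively routine step is to check that the cut returned by the top-level $(s,t)$ queries, when uncontracted, is a valid global minimum cut of $G$, which follows from the cut-preservation property of each contraction level.
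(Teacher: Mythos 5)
Your plan has a genuine structural gap at its core: contracting $(c+1)$-edge-connected groups does not, in general, shrink the graph at all, so the hierarchy never reaches a quotient with $n^{o(1)}$ super-nodes. Any vertex of degree at most $c$ is its own $(c+1)$-edge-connected class (a path, a cycle, or any bounded-degree sparse graph for $c\geq 2$ is an extreme case), so the quotient can retain $\Theta(n)$ super-nodes at every level, and installing a Jin--Sun $(s,t)$ structure for every ordered pair of top-level super-nodes is then $\Theta(n^2)$ instances rather than $n^{o(1)}$. This is exactly why the paper does not contract highly-connected pieces; instead it takes a $1/n^{o(1)}$-expander decomposition and builds a \emph{terminal} $c$-edge connectivity sparsifier whose size is proportional to the number of boundary (intercluster) vertices, which the decomposition guarantees is a constant factor smaller than the previous level. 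The price of that choice is that the sparsifier only preserves cuts that separate terminals, so minimum cuts that live entirely inside one cluster (``local cuts'') are lost and must be handled by a separate mechanism --- the characterization in Lemma~\ref{lem:char} and Lemma~\ref{lem:min_cut_char}, the auxiliary graph, the DFS-based cut enumeration, and the priority queues $\Lambda_{G_*}$ whose dynamic maintenance (showing only $O(|\updateseq|)$ vertices are ``affected'' per batch) is the bulk of the technical work. Your proposal has no analogue of this local-cut machinery, and it is not optional: it is precisely the case where no small cut crosses between the pieces you would like to query.

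The second gap is the bootstrapping step, which as described is circular rather than recursive on smaller instances. To maintain the level-$i$ contraction you must detect when an edge deletion creates a $c$-cut \emph{inside} a currently contracted group; that is a global minimum-$c$-cut problem on the induced subgraph of the level-$i$ (finer, larger) graph, and the coarser level-$(i+1)$ quotient contains no information about it, since the group is a single super-node there. So the recursion does not bottom out on strictly smaller instances in the way you claim. (A smaller additional issue: a fully dynamic Nagamochi--Ibaraki-type certificate with $n^{o(1)}$ \emph{worst-case} update time is not a standard off-the-shelf tool; known sparsification frameworks of this kind give $\widetilde O(\sqrt n)$-type or amortized bounds, which is exactly the barrier the paper is trying to break.) In short, the elementary reduction $\min_{t}\operatorname{mincut}(s,t)$ is fine, but the two load-bearing steps --- a vertex-count-reducing hierarchy and its dynamic maintenance --- would fail, and the paper's expander-decomposition/terminal-sparsifier route with explicit local-cut enumeration is what replaces them.
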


Consider the simpler decremental case where we start from a graph with $m$ edges that can only be deleted (no edge insertions). Plugging our fully dynamic algorithm into the decremental reduction from
\cite{aaman2021optimal}, we get
\begin{corollary}
    For any $c = (\log n)^{o(1)}$, there exists a Monte Carlo randomized decremental $c$-edge-connectivity data structure which as in Theorem \ref{thm:main} maintains a minimum cut of size at most $c$ in $O(m+n^{1+o(1)})$ total time.
\end{corollary}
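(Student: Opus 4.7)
The plan is to use Theorem~\ref{thm:main} as a black box inside the decremental reduction of \cite{aaman2021optimal}. The point is that for decremental $c$-edge-connectivity, one never actually needs to feed the full graph into a fully dynamic algorithm: it suffices to maintain a small sparsifier and run the dynamic algorithm only on that.

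First, in $O(m+n)$ preprocessing time, I would build a Nagamochi--Ibaraki style $c$-edge sparsifier $H$ of the initial graph $G$, consisting of $c$ edge-disjoint spanning forests of $G$. This has $|E(H)| = O(cn) = n^{1+o(1)}$ edges (since $c = (\log n)^{o(1)}$), and it preserves every cut of size at most $c$ exactly, so a size-$\le c$ minimum cut of $G$ is a size-$\le c$ minimum cut of $H$ and vice versa. Feed $H$ into the fully dynamic algorithm of Theorem~\ref{thm:main}. When the adversary deletes an edge $e$ from $G$: if $e \notin H$ I only delete it from an edge list; if $e \in H$ I delete $e$ in the dynamic data structure and, using the sparsifier bookkeeping of \cite{aaman2021optimal}, locate a replacement edge in $G\setminus H$ (if one exists) and insert it in place of $e$. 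This keeps $H$ a valid $c$-sparsifier throughout the decremental sequence.

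To account for the total cost, I would use the standard charging from \cite{aaman2021optimal}: each non-sparsifier edge gets promoted into $H$ at most $O(c)$ times, so the total number of sparsifier updates (and hence insert/delete calls into Theorem~\ref{thm:main}) over the entire deletion sequence is $O(cn) = n^{1+o(1)}$. Each such call costs $n^{o(1)}$ worst-case time by Theorem~\ref{thm:main}, so all dynamic-structure work sums to $n^{1+o(1)}$. Adding the $O(m)$ preprocessing and $O(m)$ charged to scanning deleted non-sparsifier edges, the grand total is $O(m + n^{1+o(1)})$, matching the claim; the Monte Carlo aspect of the corollary comes from the randomized replacement-finding machinery in \cite{aaman2021optimal}, not from Theorem~\ref{thm:main} itself. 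The only real piece of work is verifying that the interface of Theorem~\ref{thm:main} (worst-case $n^{o(1)}$ per insert and delete on $n$ vertices) is exactly what the reduction of \cite{aaman2021optimal} calls for; the substantive combinatorial content is entirely in \cite{aaman2021optimal}, and once that reduction is invoked as a black box the bound follows with no further calculation.
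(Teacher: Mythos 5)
Your top-level plan --- treat Theorem~\ref{thm:main} as a black box and plug it into the decremental reduction of \cite{aaman2021optimal} --- is exactly what the paper does; its justification of the corollary is nothing more than that one sentence. The trouble is the mechanism and the accounting you supply in place of the black box. If the reduction really were what you describe (maintain a Nagamochi--Ibaraki-style sparsifier $H$ of $c$ edge-disjoint forests, and whenever the adversary deletes an edge of $H$ promote a replacement from $G\setminus H$), then your charging step ``each non-sparsifier edge gets promoted into $H$ at most $O(c)$ times, so the total number of insert/delete calls into Theorem~\ref{thm:main} is $O(cn)$'' does not follow and is false as stated: a per-edge promotion bound caps the number of dynamic updates by $O(cm)$, not $O(cn)$, and an adversary that keeps deleting current sparsifier edges (already for $c=1$ in a dense graph, repeatedly deleting edges of the current spanning tree) forces $\Theta(m)$ distinct edges to pass through $H$ over the deletion sequence. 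At $n^{o(1)}$ time per dynamic update this yields total time $m\cdot n^{o(1)}$, which is not $O(m+n^{1+o(1)})$ for dense graphs, so your argument does not establish the claimed bound.

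This is not a cosmetic issue: keeping the number of updates fed to the fully dynamic structure down to roughly $n$ (rather than $\Theta(m)$) is precisely the hard part that \cite{aaman2021optimal} solves, and it is where the Monte Carlo randomization enters; their reduction is not replacement-edge bookkeeping for a global $c$-forest sparsifier, but a randomized scheme that spends only $O(1)$ time per edge outside of $\widetilde O(n)$-size instances handed to the dynamic algorithm. So your proposal is correct exactly to the extent that it defers to \cite{aaman2021optimal} as a black box --- which is all the paper does --- but the concrete scheme and the $O(cn)$ bound you offer as its justification would fail. If you want to say anything beyond the one-line invocation, state the interface of the reduction accurately (the total number of dynamic updates it issues and the constant-per-edge budget for everything else, plus the fact that the randomness, and hence the Monte Carlo guarantee, lives in the reduction while Theorem~\ref{thm:main} stays deterministic), rather than re-deriving it via Nagamochi--Ibaraki bookkeeping.
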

Above, the amortized cost per edge deletions
is constant if we start with more
than $n^{1+o(1)}$ edges.

\subsection{Techniques}
While the minimum cut of a graph is a global property, 
we present a localization approach using a vertex partition of the graph so that the process of finding a minimum cut of the graph with cut size at most $c$ is reduced to finding a minimum cut with certain properties among the induced subgraphs on all the vertex sets in the vertex partition.

We first give a static algorithm to  illustrate the localization idea, and then show how to make the static algorithm fully dynamic. 




\subsubsection{A Static \texorpdfstring{$c$}{c}-Edge Connectivity Algorithm via Localization}

We start with an iterative algorithm for a \emph{static} graph based on the terminal edge connectivity sparsifier, which was proposed in \cite{chalermsook2020vertex, js20}.
For simplicity, the algorithm is for the $c$-edge connectivity problem, ignoring the issue of retrieving a corresponding minimum cut in the dynamic minimum $c$-cut problem.

The static algorithm illustrates the iterative computation structure, and we will show that the iterative computation structure can be maintained efficiently also for a dynamically changing graph.

\vspace{.15cm}\noindent \textbf{Terminal Edge Connectivity Sparsifier \ \ }
We first define terminal edge connectivity sparsifier~\cite{chalermsook2020vertex, js20}. 
Roughly speaking, for a given set of vertices called terminals, the terminal edge connectivity sparsifier is a graph that preserves the minimum cut size for any partition of the terminals. 
Formally, for a graph $G$, a terminal vertex set $T$, and an integer $c$, 
the \emph{terminal $c$-edge connectivity sparsifier} with respect to $G$ and $T$ is a graph containing all the terminals such that for any $\emptyset \subsetneq T' \subsetneq T$, the size of the minimum cut separating $T'$ and $T\setminus T'$ in $G$ is the same as the size of the minimum cut separating $T'$ and $T\setminus T'$ in the sparsifier, if the minimum cut size is smaller than or equal to $c$.
Since a terminal sparsifier only preserves the minimum cut size for terminal partitions, 
a sparsifier can be potentially much smaller than the original graph in terms of the number of vertices and edges so that the $c$-edge connectivity between the terminals can be efficiently determined in the sparsifier.

As we are going to apply a vertex partition to the graph, we define the terminal edge connectivity sparsifier with respect to a vertex partition.  
For a graph $G$ and a vertex partition $\calP$ of $G$, 
the terminal $c$-edge connectivity sparsifier for $G$ and $\calP$ takes the boundary vertices of $\calP$ as terminals, where the boundary vertices of each cluster (i.e., vertex set in $\calP$) are those who have a neighbor in a different cluster. 
We denote the terminals for $G$ and $\calP$ as $T_{G, \calP}$.
A terminal $c$-edge connectivity sparsifier for $G$ and $\calP$ is constructed as follows: start with $G$, for each cluster $P\in \calP$, replace the induced subgraph on $P$ by 
a terminal $c$-edge connectivity sparsifier for the induced subgraph with the boundary vertices in the cluster (i.e., $T_{G, \calP} \cap P$) as terminals.
One can verify that the sparsifier constructed in this way is a terminal $c$-edge connectivity sparsifier for $G$ with $T_{G, \calP}$ as terminals.

Now, suppose we can construct the terminal $c$-edge connectivity sparsifier for a given graph and a vertex partition.
We are going to discuss how to use it to compute the $c$-edge connectivity for a graph. 

\vspace{.15cm}\noindent \textbf{Global $c$-Edge Connectivity via Localization\ \ }
Suppose we have a vertex partition $\calP$ for graph $G$, and a terminal sparsifier for $G$ and $\calP$ that preserves the minimum cut size for any partition of the boundary vertices with respect to $\calP$ for some integer $c$. We show how to compute the $c$-edge connectivity of $G$.  

The high level idea of our algorithm is to ``localize" the computing of $c$-edge connectivity based on a given vertex partition $\calP$ of the input graph $G$. 
By localization, we mean that the process of computing the $c$-edge connectivity of $G$ is reduced to finding a minimum $c$-cut with certain properties among the induced subgraphs on all the clusters of $\calP$. Thus, even though the $c$-edge connectivity is a global graph property, with the localization, it is sufficient to examine the cuts within the induced subgraph on each cluster, without knowing other clusters.

To achieve the localization, we make use of the terminal $c$-edge connectivity sparsifier for $G$ and $\calP$. 
We show that if there is a minimum $c$-cut of $G$ that is a terminal cut (i.e., a cut that partitions terminals into two non-empty sets), then the $c$-edge connectivity of $G$ is the same as the $c$-edge connectivity of the terminal sparsifier. Otherwise, all the minimum $c$-cuts of $G$ are \emph{local cuts},
where a cut $C$ of $G$ is a local cut if it is a non-terminal cut (i.e., a cut such that all the terminals $T_{G,\calP}$ are in the same side of the cut), and the cut-set of the cut is contained in the induced subgraph of a cluster in $\calP$. 

Thus, for a vertex subset $P\in \calP$, let the \emph{local $c$-edge connectivity} for $G[P]$ be the size of the minimum $c$-local cut in $G[P]$ if any.
The $c$-edge connectivity of $G$ is the minimum of the terminal sparsifier's $c$-edge connectivity 
and the local $c$-edge connectivity for all the clusters in $\calP$.

Furthermore, if the induced subgraph on each cluster has high conductance, then each local cut has a side with subpolynomial volume.  
Thus the local cuts can be efficiently enumerated, and the local $c$-edge connectivity in the induced subgraph can be efficiently computed. 
This occurs when $\mathcal{P}$ is a $1/n^{o(1)}$-expander decomposition, i.e., every cluster in $\mathcal{P}$ is a $1/n^{o(1)}$-expander.



\vspace{.15cm}\noindent \textbf{Iterative Construction\ \ }
Notice that in order to compute the $c$-edge connectivity for the graph by the localization, we still need to compute the $c$-edge connectivity of the sparsifier. 
Hence, we have an iterative construction starting from the original graph $G^{(0)} = G$. 
Starting with $i = 0$, we compute an $1 / n^{o(1)}$-expander decomposition $\calP^{(i)}$ of $G^{(i)}$ with terminals $T_{G^{(i)},\calP^{(i)}}$. Next, we compute the local $c$-edge connectivity for each cluster of $\calP^{(i)}$.  We are now done if $\calP^{(i)}$ has only one cluster and no terminals. Otherwise, we construct a terminal $c$-edge connectivity sparsifier $G^{(i+1)}$ of $G^{(i)}$ and $\calP^{(i)}$,  and then repeat the above process on $G^{(i+1)}$. 
The $c$-edge connectivity of $G$ is then the smallest local $c$-edge connectivity over all the clusters in the decomposition $\calP^{(i)}$ for all $i$.

In this paper, we aim to make use of the above iterative construction of expander decomposition and terminal $c$-edge connectivity sparsifier to study the fully dynamic minimum $c$-cut problem. 
To construct a minimum $c$-cut corresponding to the smallest local $c$-edge connectivity found at some cluster $P$ in $\calP^{(i)}$, we have to trace how
the induced subgraph of $G^{(i)}$ on $P$ was created from the lower $G^{(j)}$  and $\calP^{(j)}$ with $j<i$.
If the graph is dynamic, the trace of local cuts becomes more complex, as it is affected by the updates for graph, expander decomposition, and terminals throughout the iterative construction.

We remark that Goranci et al. proposed and dynamically maintained the expander hierarchy, which iteratively computes the expander decomposition and contracts each cluster into a single vertex~\cite{goranci2021expander}. 
However, the work by Goranci et al. can only approximate edge connectivity, and cannot compute the exact edge connectivity, 
which is the issue we aim to address in this paper.


\subsubsection{Retrieving A Minimum \texorpdfstring{$c$}{c}-Cut From Terminal Sparsifier}


Our fully dynamic minimum $c$-cut algorithm makes use of the terminal edge connectivity data structure presented in~\cite{js20} by Jin and Sun, where the terminal edge connectivity data structure was used to solve the fully dynamic $(s, t)$ edge connectivity problem. 

The data structure is based on a construction of terminal $c$-edge connectivity sparsifier for any $c = (\log n)^{o(1)}$ such that the number of vertices and edges in the sparsifier is linear to the number of terminals, and such a sparsifier can be updated in $n^{o(1)}$ time if the input graph has the conductance at least $1 /n^{o(1)}$.
In order to construct and update the terminal $c$-edge connectivity sparsifier efficiently for an arbitrary graph, the efficient update algorithm for the terminal edge connectivity sparsifier of the graphs that are $1 /n^{o(1)}$-expanders is combined with the update algorithms of expander decomposition~\cite{saranurak2019expander, chuzhoy2019deterministic}.

Since the computation of 
the $c$-edge connectivity between two vertices in the sparsifier may still be inefficient if a $1/n^{o(1)}$-expander decomposition is used for the input graph, the sparsification is applied iteratively until the vertex partition contains only a single cluster. 

In this paper, we show that if there is a minimum $c$-cut of $G$ that is a terminal cut, then any minimum $c$-cut of the terminal $c$-edge connectivity sparsifier corresponds to a minimum $c$-cut of $G$, and thus finding a minimum terminal $c$-cut in the graph is reduced to finding a minimum cut in the terminal $c$-edge connectivity sparsifier (Lemma~\ref{lem:sparsifier_terminal_cut}). 
This observation requires careful analysis of the aforementioned terminal $c$-edge connectivity sparsifier construction because the definition of the terminal $c$-edge connectivity sparsifier only promises to preserve the minimum cut size for the terminal cuts. 
Moreover, since the vertex set of the sparsifier does not contain all vertices of $G$, with a minimum $c$-cut of the sparsifier, it is not straightforward to define the corresponding cut of $G$. 
By investigating the construction of the terminal $c$-edge connectivity sparsifier, we build the correspondence between minimum $c$-cuts of the sparsifier and cuts of the same cut size in $G$, and show that given a minimum $c$-cut of the sparsifier, the corresponding terminal cut of the same cut size in $G$ can be computed efficiently.

We remark that although Saranurak also presented an edge connectivity sparsifier based on expander decomposition~\cite{saranurak2021simple}, it is unknown how to efficiently maintain the sparsifier proposed in~\cite{saranurak2021simple} under graph updates. Thus, for the purpose of a dynamic algorithm, we utilize the data structure proposed in~\cite{js20}.

\subsubsection{Minimum \texorpdfstring{$c$}{c}-Cut Data Structure}\label{sec:intro_min_data_structure}

For a graph $G$ and a vertex partition $\mathcal P$ which is a $1/n^{o(1)}$-expander decomposition of $G$, in the previous section, we discussed the way to find a global minimum cut if it is a terminal cut with respect to partition $\mathcal P$. In the rest of this section, we address the remaining problem of finding a global minimum cut which is a local cut.
 

We discuss the data structure used to maintain a minimum $c$-local cut for a graph $G$ and a vertex partition $\calP$. 
The main difficulty of keeping track of the minimum local cuts is that within a single cluster of a vertex partition, the number of local cuts can be $n^{O(c)}$, and these local cuts change significantly even with a single update of the graph. 
In order to update the data structure efficiently, we need to carefully organize the local cuts.


 In our solution, we characterize some necessary conditions for a local cut that can potentially be a global minimum cut of the graph. 
 This characterization reduces the number of local cuts that need to consider from $n^{O(c)}$ to $n \cdot c^{O(c)}$. 
 Furthermore, we show that these local cuts can be enumerated efficiently. 
 We use a priority queue to keep all these local cuts in the ascending order of the cut sizes.
 Thus, if the global minimum cut of the graph is a local cut, it is the first cut in the priority queue. 

The minimum $c$-cut data structure contains a multi-level terminal $c$-edge connectivity data structure, and for each level, the minimum $c$-cut data structure stores an additional priority queue that maintains all local cuts that can potentially be a global minimum cut according to our characterization.  

\subsubsection{Fully Dynamic Minimum \texorpdfstring{$c$}{c}-Cut Algorithm}\label{sec:intro-update}
We give a brief overview of our fully dynamic algorithm. We focus on maintaining the minimum $c$-local cut data structure with respect to the updates of the graph and the vertex partition. 

The main difficulty is that a lot of cuts may switch between local and non-local cuts when the graph gets updated.
When the graph gets updated, the expander decomposition also gets changed. 
Consequently, since the terminals are defined as the boundary vertices of the vertex partition, 
the terminal set is also changed. 
Some previous terminals can become non-terminals (e.g., the incident edges to other clusters are removed), and some previous non-terminals can become terminals (e.g., an incident edge for a non-terminal to another cluster is inserted). 
Since local cuts require that the terminals within the induced subgraph are on the same side of the cuts, 
the change of terminals may turn some non-local cuts to local cuts, and some local cuts to non-local cuts. 
Thus, there can be a large number of $c$-cuts switching between local and non-local cuts after only one update of the graph.

As our main observation, we show that even though the total number of cuts switching between local and non-local can be large after graph updates, 
the number of local cuts in the priority queue that need to be updated is always small based on the characterization in Section~\ref{sec:intro_min_data_structure}.
Furthermore, we can efficiently find all local cuts in the priority queue that need to be updated.



Our solution considers the following offline update scenario: 
Let $G$ be a graph, and $G'$ be a graph after applying graph updates to $G$. 
Let $\calP$ and $\calP'$ be $1 / n^{o(1)}$-expander decomposition of $G$ and $G'$ respectively, such that each cluster of $\calP'$ is a subset of a cluster of $\calP$, and for each cluster $P'\in \calP'$, the induced subgraph of $G'$ on $P'$ is the same as the induced subgraph of $G$ on $P'$. 

Note that before and after the updates, we only study local cuts of size at most $c$ in $1 / n^{o(1)}$ expanders, so local cuts always have a side of volume at most $c \cdot n^{o(1)}$.  As our main observation, we show that if a cut switches from a local cut to a non-local cut, then for the local cut before the update, the side with volume at most $c \cdot n^{o(1)}$ has at least one of the following two types of vertices:
\begin{itemize}
    \item a vertex that switches between a terminal and non-terminal in the update;
    \item a vertex that is an endpoint of an edge inserted or deleted in the update.
\end{itemize}

Similarly, if a cut switches from a non-local cut to a local cut, then for the local cut after the update, at least one of the two aforementioned types of vertices exists on the side of volume at most $c \cdot n^{o(1)}$.

Therefore, we update the priority queue containing local cuts of $G$ to a priority queue containing local cuts of $G'$ in the following way: 
\begin{enumerate}
\item List all of the aforementioned two types of vertices in $G$. These are the vertices that are ``affected'' by the graph updates.
\item Remove all local cuts of $G$ in the priority queue that have an affected vertex on the side of volume at most $c \cdot n^{o(1)}$.
\item Enumerate local cuts of $G'$ that contain an affected vertex on the side of volume at most $c \cdot n^{o(1)}$ and add the enumerated cuts back to the priority queue.
\end{enumerate}

We show the number of ``affected'' vertices is linear in the number of graph updates up to a subpolynomial factor. Therefore, the total number of cuts that need to be removed from and added to the local cut data structures in each level is at most a subpolynomial factor to the number of updates. This allows us to update the data structure efficiently (Lemma \ref{lem:one-up-new}). 
\section{Preliminaries}\label{sec:preliminary}

\vspace{.15cm}\noindent \textbf{Multigraph \ \ }
For a multigraph $G = (V, E)$, the edge set $E$ is a set of triples $(u, v, \alpha)$ such that $u$ and $v$ are two vertices in $V$, and $\alpha$ is a positive integer which represents the edge multiplicity of the edge with $u$ and $v$ as two endpoints. 
We assume $(u, v, \alpha)$ is equivalent to $(v, u, \alpha)$ so that for every two vertices $u$ and $v$, there is at most one triple with $u$ and $v$ as first two elements. 
For a graph $G$, we use $V(G)$ to denote the vertex set of $G$, and $E(G)$ to denote the edge set of $G$. 

We use the degree reduction technique~\cite{harary6graph} to transform the input simple graph with an arbitrary maximum degree to a multigraph such that every vertex has at most a constant number of distinct neighbors. Furthermore, each edge insertion or deletion for the simple graph corresponds to a constant number of multigraph updates with the following update operations:
\begin{itemize}
\item $\insertt(u, v, \alpha)$: insert edge $(u, v, \alpha)$ to the graph
\item $\delete(u, v)$: delete the edge with $u$ and $v$ as two endpoints from the graph, no matter what the edge multiplicity is 
\item $\insertt(v)$: insert a new vertex $v$  to the graph
\item $\delete(v)$: delete isolated vertex $v$ from the graph.
\end{itemize}
See Appendix~\ref{sec:transform} for the detailed discussion on the multigraph construction and update.
In this paper, unless specified, we assume graphs are undirected multigraphs with a constant number of neighbors.

\vspace{.15cm}\noindent \textbf{Notations \ \ }
We say $E'$ is an edge subset of $G = (V, E)$ if $E'$ is a subset of $E$.
For a subset of edges $E' \subset E$, 
we use $|E'|$ to denote $\sum_{(u, v, \alpha)\in E'}\alpha$, use $\endpoints(E')$ to denote the set of endpoints of all the edges in $E'$, and use $G\cut E'$ to denote the graph $(V, E\cut E')$.

For a graph $G = (V, E)$ and a subset of vertices $V'\subset V$, 
we use $\vol_G(V')$ to denote the volume of $V'$ in $G$, which is defined as 
\[\vol_G(V') \defeq \sum_{u\in V'} \sum_{(u, v, \alpha) \in E} \alpha.\] 
We use $G[V']$ to denote the induced subgraph of $G$ on $V'$, and use $\partial_G(V')$ to denote the set of edges with one endpoint in $V'$ and another endpoint in $V\cut V'$. 

Let $\calP$ be a vertex partition of $V$ for graph $G$.
We call each set in $\calP$ a cluster. 
We say an edge of $G$ is an intercluster edge with respect to $\calP$ if the two endpoints of the edge belong to different clusters, otherwise, the edge is an intracluster edge. 
We use $G[\calP]$ to denote the union of the induced subgraphs on all the clusters in $\calP$ on $G$, i.e., $G[\calP] = (V, E')$ where $E' = \{(u, v, \alpha) \in E: \exists P \in \calP \text{ s.t. } u, v\in P\}$. 
We use $\partial_G(\calP)$ to denote the union of $\partial_G(P)$ for all the $P \in \calP$, i.e., the intercluster edges with respect to $\calP$.
A vertex is a boundary vertex with respect to $\calP$ if there exists an intercluster edge incident to the vertex. 

A \textit{cut} is a bipartition of the graph vertices. The \textit{cut-set} of a cut is the set of edges with two endpoints belonging to different sides of the cut. The \textit{size of a cut} is the sum of edge multiplicities for all the edges in the cut's cut-set. 
For a given positive integer $c$, we say a cut is a \textit{$c$-cut} if the size of the cut is at most $c$.

\vspace{.15cm}\noindent \textbf{Expander Decomposition \ \ }
We also use expander decomposition in this paper. 

\def\truefactor{2^{\delta \log^{1/3} n \log^{2/3} \log n}}

\begin{Definition}[Expander and Expander Decomposition]
For a graph $G = (V, E)$ of $n$ vertices and $m$ edges,  the conductance of $G$ is 
$$\min_{\emptyset\subsetneq S\subsetneq V}\frac{\abs{\partial_G(S)}}{\min\set{\vol_G(S), \vol_G(V\cut S)}}$$
A graph $G$ is a $\phi$-expander if the conductance of $G$ is at least $\phi$.

A $(\phi, \epsilon)$-expander decomposition of $G$ is a vertex partition $\mathcal P$ of $V$ such that for each $P\in\mathcal P$, the induced subgraph $G[P]$ is a $\phi$-expander, and the number of intercluster edges is at most $\epsilon \cdot m$. 
For convenience, a $(\phi, \epsilon)$-expander decomposition is also called a $\phi$-expander decomposition. 
\end{Definition}

We utilize the deterministic expander decomposition algorithm by  
Chuzhoy et al.~\cite{chuzhoy2019deterministic}, and the expander pruning algorithm 
by Saranurak and Wang in \cite{saranurak2019expander} used in this paper.

\begin{theorem}[Corollary 7.1 of ~\cite{chuzhoy2019deterministic}]\label{thm:expander_decomposition}
Given a graph $G = (V, E)$ of $m$ edges and a parameter $0 < \phi < 1$,
there is 
a deterministic algorithm  to compute a $\left(\phi, \phi \cdot  \truefactor\right)$-expander decomposition of $G$ in time $(m/ \phi^2)^{1+o(1)}$ for some constant $\delta > 0$.
\end{theorem}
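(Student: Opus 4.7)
The plan is to invoke the standard recursive expander decomposition framework, whose key primitive is a deterministic \emph{cut-or-certify} subroutine: given a subgraph $H$ with $m'$ edges and a target conductance $\phi$, the primitive either certifies that $H$ is a $(\phi/\truefactor)$-expander, or returns a cut of $H$ with conductance at most $\phi$ whose smaller side has volume within a $1/\truefactor$ factor of the most balanced $\phi$-sparse cut. The top-level decomposition algorithm maintains a list of active clusters, runs the primitive on each, stops on a cluster when it is certified, and otherwise recursively descends into both sides of the returned cut.

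I would build the cut-or-certify primitive by combining (i) a deterministic almost-linear-time approximate maximum flow / minimum cut algorithm with approximation factor $\truefactor$, and (ii) a deterministic instantiation of the cut-matching game of Khandekar--Rao--Vazirani, in which the matching player is implemented using the deterministic flow routine. After $\mathrm{polylog}(n)$ rounds of play, either an expander has been embedded into $H$ (certifying the claimed conductance lower bound by the standard flow-embedding argument), or the failed matching exposes a balanced sparse cut. All of the subpolynomial slack $\truefactor$ enters through the approximation quality of the deterministic flow routine and the flow--cut gap of the cut-matching game.

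The number of intercluster edges is then bounded by the usual charging argument: every cut produced during the recursion has conductance at most $\phi$, so the edges it cuts can be charged to the volume of the smaller side, and the balanced-cut guarantee ensures each vertex lies on the smaller side in at most $O(\log m)$ levels of the recursion. Summing telescopically, the total number of intercluster edges is at most $\phi \cdot \truefactor \cdot m$, which (after absorbing polylogarithmic factors into the exponent of $\truefactor$) matches the claimed bound. For the running time, each recursive invocation costs $(m'/\phi^{2})^{1+o(1)}$ time, dominated by the deterministic flow calls; summing over the $O(\log m)$ levels of recursion gives the overall $(m/\phi^{2})^{1+o(1)}$ bound.

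The main obstacle is constructing the cut-or-certify primitive deterministically within these bounds. Randomized analogues, for instance via spectral sketching or random walks, are comparatively easy, but the deterministic realization is delicate: it requires composing the deterministic approximate-flow machinery of Chuzhoy et al.\ with deterministic low-step routing and vertex sparsification arguments, and it is precisely this derandomization that forces the subpolynomial loss $\truefactor$ to appear in both the quality of the decomposition and, through the running time of the flow subroutine, the time bound.
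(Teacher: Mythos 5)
This statement is not proved in the paper at all: it is imported verbatim as Corollary~7.1 of Chuzhoy et al.\ \cite{chuzhoy2019deterministic}, so there is no internal proof to compare against, and any acceptable ``proof'' here is really a reconstruction of that external work. Your sketch does capture the architecture of the cited proof correctly at a high level: a recursive decomposition driven by a deterministic cut-or-certify (``BalCutPrune''-type) primitive, a cut-matching-game implementation of that primitive, the standard charging of cut edges to the volume of the smaller side with $O(\log m)$ recursion depth, and the observation that the subpolynomial factor $2^{\delta\log^{1/3}n\log^{2/3}\log n}$ absorbs both the polylogarithmic charging loss and the approximation loss of the primitive (modulo a harmless reparametrization of $\phi$, since you certify $\phi$ divided by the subpolynomial factor while the theorem is stated with the loss placed on the intercluster-edge side).

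The genuine gap is that you assume exactly the ingredient whose construction is the entire content of the cited paper. You propose to implement the matching player with ``a deterministic almost-linear-time approximate maximum flow algorithm,'' but no such tool existed independently of this line of work at the time, and more importantly the hard part of the derandomization is the \emph{cut player}, not the matching player: the randomized cut-matching game chooses cuts via random projections/random walks, and Chuzhoy et al.\ replace this by a recursive bootstrapping in which the cut player is implemented by running the expander-decomposition (and related sparsification/routing) machinery on smaller instances, with the recursion depth of this bootstrapping being precisely where the factor $2^{O(\log^{1/3}n\log^{2/3}\log n)}$ and the $(m/\phi^2)^{1+o(1)}$ running time come from. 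The matching player, by contrast, can be handled with comparatively elementary deterministic bounded-length flow/augmenting-path computations in time roughly $m/\phi$. As written, your argument defers the central difficulty to a black box it does not construct, so it is a correct road map of the citation rather than a proof; for the purposes of this paper, simply invoking \cite{chuzhoy2019deterministic} (as the authors do) is the intended treatment.
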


\begin{theorem}[Theorem 1.3 of \cite{saranurak2019expander}, rephrased]\label{thm:pruning}
Let $G = (V, E)$ be a simple $\phi$-expander with $m$ edges. 
Given access to adjacency lists of $G$ and a set $D$ of $k \leq \phi m / 10 $ edges, 
there is a deterministic algorithm to find a pruned set $P \subseteq V$ in time $O(k \log m/\phi^2)$ such that all of the following conditions hold:
\begin{enumerate}
\item $\vol_G(P) = 8 k  / \phi$.
\item $|E_G(P, V \cut P)| \leq 4k$.
\item $G'[V \cut P]$ is a $\phi / 6$ expander, where $G' = (V, E \cut D)$.
\end{enumerate}
\end{theorem}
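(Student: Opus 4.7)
The statement is Theorem~1.3 of Saranurak--Wang, which the paper cites as a black box, so I sketch a proof from scratch. The plan is to use a bounded-height local push--relabel (``unit-flow'') computation: since $G$ is a $\phi$-expander and only $k \le \phi m / 10$ edges are deleted, at most volume $O(k/\phi)$ of vertices can fail to remain well-connected, and these can be located by a flow whose support is confined to exactly that region.

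First I would set up a flow instance on $G$. For each deleted edge $(u,v)\in D$, place a small constant number of units of source at each of $u$ and $v$, giving total source $\Theta(k)$. Each vertex $w$ receives sink capacity proportional to $\deg_G(w)$, and each edge has unit capacity. Then run a push--relabel with maximum label $h = \Theta(\log m / \phi)$ and \emph{local} bookkeeping (only vertices with positive excess are touched), which by the standard push--relabel analysis runs in $O(k\log m / \phi^2)$ time as claimed. Define $P$ as the set of vertices still carrying excess at termination, i.e., those stuck at the top label.

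Properties (1) and (2) then drop out of the flow bookkeeping. Property (1), $\vol_G(P)\le 8k/\phi$, holds because within $h$ hops in a $\phi$-expander a volume of $8k/\phi$ already supplies enough sink capacity to absorb the $\Theta(k)$ units of source, so once $P$ exceeds this volume some sink capacity inside $P$ must remain free and the flow would have continued. Property (2), $|E_G(P, V\setminus P)| \le 4k$, holds because each unit of flow crossing out of $P$ in the residual graph can be charged to a distinct unit of source originating in $D$, and the total source was $\Theta(k)$ with the right constants.

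The main obstacle is property (3), that $G'[V\setminus P]$ is a $\phi/6$-expander. Given any $S\subseteq V\setminus P$ with $\vol_{G'[V\setminus P]}(S)$ the smaller side, the $\phi$-expansion of $G$ yields $|\partial_G(S)|\ge \phi\,\vol_G(S)$, and this boundary splits into edges of $D$ (at most $k$ in total), edges from $S$ into $P$, and the desired edges $\partial_{G'[V\setminus P]}(S)$. Bounding the middle group is exactly where flow-termination is used: if too many $G$-edges crossed from $S$ into $P$, then $P$ would still have had a valid push back into $S\subseteq V\setminus P$ at termination, contradicting the stuck-at-top-label characterization of $P$. Chasing the constants through this inequality yields the $\phi/6$ bound, the factor of $6$ reflecting the specific source multiplicity together with the edges lost to $D$ and to $P$. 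The subtle part is this charging step, which converts a local ``no valid push remains'' condition into a global conductance bound on $V\setminus P$; the rest is accounting.
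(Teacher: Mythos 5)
This statement is imported verbatim (as a black box) from Saranurak--Wang; the paper you were given contains no proof of it, so the only meaningful comparison is with the cited source. Your overall strategy --- a bounded-height local push--relabel seeded at the deleted edges, with the pruned set $P$ read off from the region where the flow gets stuck, and with flow feasibility on $V\setminus P$ serving as an expansion certificate --- is indeed the strategy of the original proof, so the architecture of your sketch is right.

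However, your quantitative setup breaks the argument at exactly the step you flag as the crux. You place only a constant number of source units per deleted-edge endpoint (total source $\Theta(k)$) and give each edge unit capacity. Run the standard certificate computation with source $\sigma$ per missing boundary edge, edge capacity $\kappa$, and sink capacity $\deg$: for $S\subseteq V\setminus P$ one gets
$\sigma\bigl(|\partial_G(S)|-|\partial'(S)|\bigr)\le \vol_G(S)+\kappa\,|\partial'(S)|$,
and combining with $|\partial_G(S)|\ge\phi\,\vol_G(S)$ yields $(\sigma\phi-1)\,\vol_G(S)\le(\sigma+\kappa)\,|\partial'(S)|$. With $\sigma,\kappa=\Theta(1)$ the left-hand coefficient $\sigma\phi-1$ is negative for small $\phi$, so the inequality is vacuous and property (3) cannot be concluded; the original proof takes $\sigma$ and $\kappa$ of order $8/\phi$, which is precisely what makes the coefficient $\sigma\phi-1=7$ and produces the $\phi/6$ bound. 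The same miscalibration infects your property (1): with total source $\Theta(k)$ your own absorption argument would give $\vol_G(P)=O(k)$, not the stated $8k/\phi$, which is a sign that the mass scale is off rather than that you have improved the bound --- the $\Theta(k/\phi)$ total mass is what simultaneously forces the $8k/\phi$ volume and powers the expansion certificate. Finally, defining $P$ as ``vertices still carrying excess'' is not quite enough for property (2); the original argument takes $P$ to be a level cut of the push--relabel labeling (all vertices above a label threshold), so that every crossing edge is saturated in the appropriate direction and the $|E_G(P,V\setminus P)|\le 4k$ bound follows from dividing the total routed mass by the edge capacity. So: right method, but the source/capacity scale must be $\Theta(1/\phi)$ and $P$ must be a level cut, otherwise conclusions (1)--(3) do not follow.
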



\section{Review: Multi-level Terminal Edge Connectivity Data Structure}\label{sec:js20_review}

In this section, we review the definition and main properties of the multi-level terminal edge connectivity data structure defined in \cite{js20}. 
The $c$-edge connectivity between a pair of vertices is defined as follows. 
\begin{Definition}
For a multigraph $G = (V, E)$,  two vertices $s, t\in V$, and a positive integer $c$, 
two vertices $s$ and $t$ are $c$-edge connected if $s$ and $t$ are connected in $G\setminus E'$ for all $E'\subset E$ such that $|E'| < c$. 

The \emph{edge connectivity} of $s$ and $t$ for graph $G$, denoted as $\lambda_{G}(s, t)$, is the non-negative integer such that $s$ and $t$ are $\lambda_{G}(s, t)$ edge connected, but not $(\lambda_{G}(s, t)+1)$-edge connected. 

For a positive integer $c$, the \emph{$c$-edge connectivity} of $s$ and $t$ for graph $G$ is $\min\{c, \lambda_{G}(s, t)\}$, and the \emph{$c$-edge connectivity} for graph $G$ is 
\[\min\left\{c, \min_{u, v\in V: u \neq v} \lambda_{G}(s, t)\right\}.\]
\end{Definition}

The dynamic $(s, t)$ edge connectivity problem was studied in~\cite{js20}. 
In this problem, for a fixed positive integer $c$, 
the algorithm receives updates of the graph, as well as $c$-edge connectivity queries each containing two vertices. 
The goal of the algorithm is to process each update and to answer the $c$-edge connectivity for each pair of queried vertices as efficiently as possible. 
As the main result, 
a fully dynamic algorithm of $(s, t)$ $c$-edge connectivity for any $c = (\log n)^{o(1)}$ with $n^{o(1)}$ worst-case  update and query time was presented in~\cite{js20}.   

We remark that the fully dynamic global minimum cut problem, the problem we want to solve in this paper, cannot be solved in subpolynomial time using the fully dynamic $(s, t)$ edge connectivity algorithm. Even though there always exist two vertices in the graph such that the edge connectivity between the two vertices is the same as the global minimum cut size, 
it is non-trivial to find such a vertex pair after each update of the graph, and
the only way we know how to find such a pair of vertices is to first identify the global minimum cut.

\subsection{Terminal Edge Connectivity Sparsifier}
The high level intuition of \cite{js20} is to construct and update a terminal edge connectivity sparsifier for a subset of selected vertices, called terminals, such that the $c$-edge connectivity is preserved for any pair of terminals in the sparsifier compared with the input graph.

\begin{Definition}
Let $G = (V, E)$ be a multiple graph, $T\subset V$ be a terminal set, and $c$ be a positive integer. 
A multigraph $H = (V_H, E_H)$ is a \emph{terminal $c$-edge connectivity sparsifier} of $G = (V, E)$ with respect to $T$ if $T$ is a subset of both $V$ and $V_H$ such that the $c$-edge connectivity between any two vertices $u, v\in T$ for $H$ is the same as the $c$-edge connectivity between $u$ and $v$ for $G$. 

    
\end{Definition}


We use the cut containment set, which was proposed in~\cite{kr13, chalermsook2020vertex},  to construct a terminal edge connectivity sparsifier for a given terminal set. 
For a set of terminal vertices $T\subset V$, a cut is called a \textit{terminal cut} if the cut partitions $T$ into two non-empty sets, otherwise, the cut is called a \textit{non-terminal cut}. 
Terminal cuts for terminal set $T$ that are also $c$-cuts are called \emph{$(T, c)$-cuts}. A $(T, c)$-cut that partitions $T$ into $T'$ and $T\cut T'$ is called a $(T', T\cut T', c)$-cut.

\begin{Definition}[cut containment set]\label{def:cut_containment_set}
Let $G = (V, E)$ be a graph, $T\subset V$ be a terminal set, and $c$ be a positive integer. 
A set of edges $CC$ is a $c$-cut containment set with respect to $T$ if $CC$ corresponds to intercluster edges of a vertex partition of $G$, and
for every bipartition $(T', T\cut T')$ of $T$ such that $G$ has a $(T', T\cut T', c)$-cut, there exists a minimum $(T', T\cut T', c)$-cut $C$ such that all the edges in the cut-set of $C$ are in $CC$.
\end{Definition}

The terminal edge connectivity sparsifier also makes use of the contraction technique used in~\cite{henzinger1997fully, holm2001poly, nanongkai2017dynamic}. 
Given an unweighted forest $F = (V, E)$ and a set of terminals $K \subset V$, 
the contraction of $F$ with respect to $K$, denoted by $\contract_K(F)$, 
is obtained by repeatedly deleting degree-1 vertices not in $K$ and shortcuting degree-2 vertices not in $K$.
$\contract_K(F)$ may have more vertices than just terminals, but $\contract_K(F)$ 
is the minimal unweighted forest (in terms of the number of vertices and edges) satisfying the following two conditions:
\begin{enumerate}
    \item All the vertices in $K$ are vertices of $\contract_K(F)$ such that any two vertices in $K$ are connected in $\contract_K(F)$ iff they are connected in $F$. 
    \item Each edge of $\contract_K(F)$ corresponds to a path of $F$ that is part of a path connecting two vertices of $K$, and for any two edges of $\contract_K(F)$, their corresponding paths of $F$ are edge disjoint.
\end{enumerate}
As proved in~\cite{henzinger1997fully}, 
$\contract_K(F)$ contains $O(|K|)$ vertices and edges. 
See Appendix~\ref{sec:contraction} for more details of the contraction technique. We remark that $\contract_K(F)$ can contain more vertices than $K$.

In~\cite{js20}, 
given a graph $G$, a terminal set $T$, a $c$-cut containment set $CC$ for $G$ with respect to $T$ for some positive integer $c$, and a positive integer $\gamma$ greater than $c$, 
an unweighted spanning forest $F$ of $G\setminus CC$,
the terminal $c$-edge connectivity sparsifier of $G$ with respect to $T$, denoted as $\sparsifier_{c, \gamma}(G, T, CC, F)$, is 
the multigraph with the vertex set of $\contract_{T\cup \endpoints(CC)}(F)$ as vertices, 
 the union of $CC$ and the edge set of $\contract_{T\cup \endpoints(CC)}(F)$ as edges, and edge multiplicities defined as follow: for the edge with $x$ and $y$ as two endpoints in $\sparsifier_{c, \gamma}(G, T, CC, F)$

 \begin{itemize}
     \item if there is an edge $(x, y, \alpha) \in CC$ for some positive integer $\alpha$, the multiplicity of the edge with $x$ and $y$ as two endpoints in $\sparsifier_{c, \gamma}(G, T, CC, F)$ is $\alpha$;
     \item otherwise the multiplicity of the edge with $x$ and $y$ as two endpoints in $\sparsifier_{c, \gamma}(G, T, CC, F)$ is $\gamma$.
 \end{itemize}
\cite{js20} showed that $\sparsifier_{c, \gamma}(G, T, CC, F)$ has $O(|CC| + |T|)$ vertices and $O(|CC|+ |T|)$ edges with different endpoints, and the $c$-edge connectivity in $G$ for any pair of vertices within $T$ is the same as that of $\sparsifier_{c, \gamma}(G, T, CC, F)$.

For the purpose of efficient updates, 
the terminal $c$-edge connectivity sparsifier is also defined with respect to vertex partitions. 
Given a graph $G$ and a vertex partition $\calP$ of $G$, the terminal set for $G$ and $\calP$, denoted as $T_{G, \calP}$, is defined as $\endpoints(\partial_G(\calP))$.
For a positive integer $c$, 
an edge subset of $G$ is a \emph{$c$-cut containment set for $G$ and $\calP$} if the edge set is the union of $\partial_G(\calP)$ and the $c$-cut containment sets of $G[P]$ with $T_{G, \calP}\cap P$ as terminals for all $P\in\calP$. 

Given a graph $G$, a vertex partition $\calP$ of $G$, positive integers $c, \gamma$ such that $\gamma > c$, a $c$-cut containment set $CC$ for $G$ and $\calP$,  and an unweighted spanning forest $F$ of $G\setminus CC$,
the terminal $c$-edge connectivity sparsifier of $G$ with respect to $\calP$, $CC$ and $F$ is constructed as follows: start from $G$, for each $P \in \calP$, replace $G[P]$ in $G$ by 
\[\sparsifier_{c, \gamma}(G[P], T_{G, \calP}\cap P, CC\cap E(G[P]), F[P]).\]
Recall here that $T_{G, \calP}=\endpoints(\partial_G(\calP))$.
We use $\sparsifier_{c, \gamma}(G, \calP, CC, F)$ to denote the resulting sparsifier.


\begin{lemma}[\cite{js20}]
Let $G = (V, E)$ be a graph, $\calP$ be a vertex partition of $G$, $c$ and $\gamma$ be positive integers such that $\gamma > c$, $CC$ be a $c$-cut containment set for $G$ and $\calP$, and $F$ be an unweighted spanning forest of $G\setminus CC$.
$\sparsifier_{c, \gamma}(G, \calP, CC, F)$ is a terminal $c$-edge connectivity sparsifier of $G$ with respect to $T_{G, \calP}$.
\end{lemma}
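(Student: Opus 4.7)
Recall that $H := \sparsifier_{c, \gamma}(G, \calP, CC, F)$ is obtained from $G$ by replacing, for each $P \in \calP$, the subgraph $G[P]$ with the per-cluster sparsifier $H_P := \sparsifier_{c, \gamma}(G[P], T_P, CC_P, F[P])$, where $T_P := T_{G, \calP} \cap P$ and $CC_P := CC \cap E(G[P])$. My plan is to prove the stronger statement that for every bipartition $(T', T_{G, \calP} \setminus T')$ of $T_{G, \calP}$, the minimum cut size separating them in $G$ equals that in $H$ whenever either is at most $c$. This immediately implies the required pair $c$-edge connectivity preservation, since for any terminal pair $u, v \in T_{G, \calP}$ the minimum $u$-$v$ cut induces such a bipartition. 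The proof splits into a per-cluster bipartition lemma and a composition argument.

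\textbf{Per-cluster bipartition lemma.} I would first show, for each cluster $P$ and every bipartition $(T'_P, T_P \setminus T'_P)$, that the minimum cut sizes in $G[P]$ and $H_P$ agree when at most $c$. For ``$H_P \leq G[P]$'', the cut containment property (Definition~\ref{def:cut_containment_set}) supplies a minimum cut $C \subseteq CC_P$ in $G[P]$; since $CC_P$ edges appear in $H_P$ with their original multiplicities, $C$ is also an edge set of $H_P$, and I would show $C$ still separates $T'_P$ from $T_P \setminus T'_P$ there. The argument is that any path in $H_P \setminus C$ between two terminals lifts to a walk in $G[P] \setminus C$ by expanding each edge of $\contract_{T_P \cup \endpoints(CC_P)}(F[P])$ into its underlying $F[P]$ path in $G[P] \setminus CC_P$; contrapositively, separation in $G[P] \setminus C$ forces separation in $H_P \setminus C$. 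For ``$G[P] \leq H_P$'', the condition $\gamma > c$ forces any $c$-cut $D$ of $H_P$ to lie entirely in $CC_P$ (any forest-contracted edge has multiplicity $\gamma$, which alone already exceeds $c$); the same lifting argument then shows $D$ separates the same bipartition in $G[P]$ as well.

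\textbf{Composition and main obstacle.} Given a cut $(S, V \setminus S)$ of $G$ separating $T'$ from $T_{G, \calP} \setminus T'$, its cut-set decomposes into intercluster edges (all in $\partial_G(\calP) \subseteq CC$, preserved with the same multiplicities in $H$) plus, for each cluster $P$, an intra-cluster cut in $G[P]$ separating $T' \cap P$ from $T_P \setminus T'$. Invoking the per-cluster lemma, each intra-cluster piece can be swapped for a no-larger cut in $H_P$ separating the same local terminal bipartition. Stitching these new pieces together with the original intercluster edges yields a cut of $H$ that separates $T'$ from $T_{G, \calP} \setminus T'$ (boundary terminals lie in both $V$ and $V(H_P)$, and their membership on each side is unchanged, so the intercluster cut-set transfers verbatim); its cost is at most the cost of $(S, V \setminus S)$ in $G$. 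The reverse direction is symmetric, using that any $c$-cut of $H$ decomposes the same way with the intra-cluster pieces being $c$-cuts of the various $H_P$. The main obstacle will be verifying the lifting argument in the per-cluster lemma: checking that $F[P]$ is a spanning forest of $G[P] \setminus CC_P$ (which follows because $\partial_G(\calP) \subseteq CC$, so $F$ has no intercluster edges, and $F$ spans $G \setminus CC$), and confirming that $\contract_{T_P \cup \endpoints(CC_P)}(F[P])$ preserves exactly the $F[P]$-connectivity among its key vertices, so that cuts translate cleanly in both directions.
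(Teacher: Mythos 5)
This lemma is one the paper itself does not prove: it is imported verbatim from \cite{js20}, so there is no in-paper proof to measure you against. Judged on its own, your outline is the right reconstruction, and it is consistent in spirit with the arguments the paper \emph{does} give for the neighbouring statements (Lemma~\ref{lem:sparsifier_terminal_cut_pre} and Lemma~\ref{lem:sparsifier_terminal_cut}), which likewise exploit that $CC$ is the set of intercluster edges of a refinement $\calR$ and that every $c$-cut of the sparsifier, because of $\gamma>c$, uses only $CC$ edges. Your reduction from pairwise terminal connectivity to preservation of all terminal bipartitions, the per-cluster lemma, and the stitching of per-cluster replacements with the unchanged $\partial_G(\calP)$ edges are all sound.

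Two points deserve more care than your sketch gives them. First, in the per-cluster direction ``$G[P]\le H_P$'' it is not literally ``the same lifting argument'': lifting goes from $H_P$-paths to $G[P]$-walks by expanding contracted edges into $F[P]$-paths, whereas this direction needs the reverse translation, namely that for any $D\subseteq CC_P$ a terminal-to-terminal path in $G[P]\setminus D$ can be rerouted so that each maximal subpath inside $G[P]\setminus CC_P$ is replaced by an $F[P]$-path (using that $F[P]$ spans every component of $G[P]\setminus CC_P$) and then pushed through the contraction (using that $\contract_{T_P\cup\endpoints(CC_P)}(F[P])$ preserves connectivity among its key vertices, all endpoints of $CC_P$ edges being key). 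You flag this at the end, so it is an acknowledged verification task rather than a missing idea, but it is the crux and should be written out. Second, a subtlety created by the paper's multigraph convention: in $\sparsifier_{c,\gamma}$ a contracted-forest edge sharing its endpoints with a $CC$ edge $(x,y,\alpha)$ would be merged into a single edge of multiplicity $\alpha\le c$, and cutting it in $H_P$ would implicitly sever a forest path that survives in $G[P]$, breaking your ``$D\subseteq CC_P$ translates cleanly'' step. This never happens, but the reason is exactly the clause in Definition~\ref{def:cut_containment_set} that $CC_P$ is the intercluster edge set of a vertex partition of $G[P]$: the endpoints of any $CC_P$ edge lie in different clusters of that partition, hence in different components of $G[P]\setminus CC_P$, so no contracted edge of $F[P]$ can be parallel to a $CC_P$ edge. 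Adding these two verifications turns your proposal into a complete proof.
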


We remark that the edge connectivity parameter gets smaller after the sparsifier gets updated with respect to graph updates (see Lemma~\ref{lem:old_main}). In order to answer $c$-edge connectivity queries,
we set the edge connectivity parameter to be large enough so that as the sparsifier updated we can still use the sparsifier to answer $c$-edge connectivity queries, as 
a terminal $c'$-edge connectivity sparsifier is also a terminal $c$-edge connectivity sparsifier for any $c' > c$.


\subsection{One-Level Terminal Edge Connectivity Data Structure}

We summarize the one-level terminal edge connectivity data structure from~\cite{js20}, and the algorithms for the one-level data structure.


For the convenience of constructing and updating terminal edge connectivity sparsifiers with respect to a graph, we suppose that an input graph and its terminal edge connectivity sparsifier are maintained by a \emph{one-level terminal edge connectivity data structure}.
A one-level terminal edge connectivity data structure contains an input graph $G$, a vertex partition $\calP$, a cut containment set $CC$, an unweighted spanning forest $F$ of $G\setminus CC$, the corresponding terminal edge connectivity sparsifier $\sparsifier_{c, \gamma}(G, \calP, CC, F)$ for some positive integers $c$ and $\gamma$, and other auxiliary data structure so that when updates of $G$ occur, $\calP$ and $CC$, $F$, and their corresponding terminal edge connectivity sparsifier can be updated efficiently.

We suppose that $\calP$ is a $\phi$-expander decomposition of $G$ for some $\phi = 1 / n^{o(1)}$, as the terminal edge connectivity sparsifier can only be efficiently constructed and updated for such a $\calP$ by~\cite{js20} (see Lemma~\ref{lem:old_main_initialization} and Lemma~\ref{lem:old_main} for detailed discussion).
We also assume that the data structure contains $\contract_{\endpoints(CC)}(F)$ and the terminal set $T_{G, \calP}$ such that vertices of $T_{G, \calP}$ are stored by the cluster they belong to so that given the index of a cluster $P\in\calP$, it takes $O(1)$ time to return the access of $T_{G, \calP} \cap P$.

Thus, the one-level terminal edge connectivity data structure has three parameters $c$, $\phi$ and $\gamma$, where $c$ is the edge connectivity parameter for the cut containment set, $\phi$ is the conductance parameter for the expander decomposition, and $\gamma$ is the parameter to construct the terminal edge connectivity sparsifier satisfying $\gamma > c$.
Throughout the paper,
$\gamma$ is a fixed parameter, and the other two parameters, $c$ and $\phi$, get smaller after applying the updates to the input graph (see Lemma~\ref{lem:old_main}).

We use $(G, \calP, CC, F)$ to denote the one-level terminal edge connectivity data structure with respect to given $G, \calP$, $CC$, and $F$, which consists of $\sparsifier_{c, \gamma}(G, \calP, CC, F)$.\\


Now we review the initialization and update algorithms for the terminal edge connectivity data structure. 

\SetKwFunction{FTerminalInitialization}{Edge-Connectivity-One-Level-Initialization}

\SetKwFunction{FTerminalUpdatae}{Edge-Connectivity-One-Level-Update}

\def\factor{2^{O(\log^{1/3} n \log^{2/3} \log n)}}

\begin{lemma}[\cite{js20}]\label{lem:old_main_initialization}
Given a graph $G$, two integers $c$ and $\gamma$ such that $\gamma > c$, and a conductance parameter $0 < \phi < 1$, 
there is an algorithm \FTerminalInitialization to initialize a one-level terminal edge connectivity data structure $(G, \calP, CC, F)$ with parameters $c, \phi$ and $\gamma$ in time $m^{1+o(1)} \cdot c^{O(c)} / \mathrm{poly}(\phi)$ such that $\sparsifier_{c, \gamma}(G, \calP, CC, F)$ contains at most \[m \cdot \phi \cdot \factor\] vertices and edges. 
\end{lemma}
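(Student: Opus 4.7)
The plan is to build the four components of the one-level data structure in turn, following the definitions in Section~\ref{sec:js20_review}. First, run the deterministic expander-decomposition algorithm of Theorem~\ref{thm:expander_decomposition} on $G$ with parameter $\phi$ to obtain a $(\phi, \phi \cdot \truefactor)$-expander decomposition $\calP$ in time $(m/\phi^2)^{1+o(1)}$. By the guarantee of the decomposition, $|\partial_G(\calP)| \le m \cdot \phi \cdot \truefactor$, so the terminal set $T_{G,\calP} = \endpoints(\partial_G(\calP))$ has at most $2 m \phi \cdot \truefactor$ vertices; record for each cluster $P\in\calP$ the list $T_{G,\calP}\cap P$ in an array indexed by cluster id.

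Second, for each cluster $P \in \calP$ compute a $c$-cut containment set $CC_P$ for $G[P]$ with respect to $T_{G,\calP}\cap P$. The key leverage here is that $G[P]$ is a $\phi$-expander, so every cut of $G[P]$ of size at most $c$ has a side of volume at most $c/\phi$; thus all such cuts can be discovered by bounded-volume local searches seeded at terminals (or at endpoints of previously-discovered cut edges), yielding a cut containment set of size $|T_{G,\calP}\cap P| \cdot c^{O(c)}$ in time $|E(G[P])|^{1+o(1)} \cdot c^{O(c)} / \mathrm{poly}(\phi)$. Setting $CC \defeq \partial_G(\calP) \cup \bigcup_{P\in\calP} CC_P$ gives a $c$-cut containment set for $G$ and $\calP$ as required by Definition~\ref{def:cut_containment_set}. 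Summing over clusters yields a total running time of $m^{1+o(1)} \cdot c^{O(c)} / \mathrm{poly}(\phi)$ and total size $|CC| \le m\phi \cdot \truefactor \cdot c^{O(c)}$.

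Third, compute in $O(m)$ time any spanning forest $F$ of $G \setminus CC$, then form the contraction $\contract_{T_{G,\calP}\cup \endpoints(CC)}(F)$ by iteratively removing degree-$1$ vertices not in the protected set and short-cutting degree-$2$ vertices not in the protected set; as recalled in Section~\ref{sec:js20_review}, the result has $O(|T_{G,\calP}| + |\endpoints(CC)|) = O(|CC|)$ vertices and edges and can be produced in $O(m)$ time. Finally, assemble $\sparsifier_{c,\gamma}(G,\calP,CC,F)$ by taking this contracted vertex set together with all edges of $CC$ at their original multiplicity and all edges of the contraction at multiplicity $\gamma$, obtaining a graph of $O(|CC|)$ vertices and edges. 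Since $c = (\log n)^{o(1)}$ we have $c^{O(c)} = n^{o(1)}$, which can be absorbed into the $O(\cdot)$ exponent of $\factor$, giving the claimed size bound $m\phi \cdot \factor$.

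The main obstacle is step two: showing that a $c$-cut containment set of size $|T\cap P|\cdot c^{O(c)}$ can be produced in near-linear time inside each expander cluster. The approach relies on the ``small-side'' property of cuts in $\phi$-expanders to restrict the search space to volume-$O(c/\phi)$ neighborhoods around affected vertices, then enumerating all such cuts via standard local-cut algorithms and taking their union; one must verify that this union really is a cut containment set in the sense of Definition~\ref{def:cut_containment_set} (i.e., that it arises as the intercluster edges of some vertex partition of $G[P]$) and that the $c^{O(c)}$ overhead does not interact badly with the expander-decomposition loss when later combined across clusters and through the contraction step.
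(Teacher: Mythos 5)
There is no proof to compare against in this paper: Lemma~\ref{lem:old_main_initialization} is imported as a black box from \cite{js20}, so the only question is whether your reconstruction is sound. Its skeleton (expander decomposition via Theorem~\ref{thm:expander_decomposition}, per-cluster cut containment sets, spanning forest of $G\setminus CC$, contraction, assembly of $\sparsifier_{c,\gamma}(G,\calP,CC,F)$) does match the architecture described in Section~\ref{sec:js20_review}, but your step two --- the only technically substantial step --- is asserted rather than proved, and as sketched it does not work. Definition~\ref{def:cut_containment_set} demands that for \emph{every} bipartition $(T',T\setminus T')$ of the terminals of a cluster that admits a $(T',T\setminus T',c)$-cut, some \emph{minimum} such cut has its whole cut-set inside $CC$; there are exponentially many bipartitions, so coverage forces you to keep many cuts, while the lemma's size bound forces $|CC|$ to be proportional to the number of terminals with only a $c^{O(c)}$ overhead, independent of $\phi$. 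Your proposal resolves this tension by fiat. If you take the union of \emph{all} cuts of size at most $c$ found by bounded-volume searches seeded at terminals, coverage is fine (in a $\phi$-expander the small side of any relevant minimum cut has volume at most $c/\phi$ and contains a terminal), but the count of such cuts per seed is $(c/\phi)^{O(c)}$, not $c^{O(c)}$, so $|CC|$ picks up a $\mathrm{poly}(1/\phi)^{c}$ factor; in the regime the paper actually uses ($\phi\le 2^{-\log^{3/4}n}$) this makes $m\phi\cdot(1/\phi)^{O(c)}$ vastly exceed the claimed sparsifier size $m\cdot\phi\cdot\factor$. If instead you keep only one minimum cut per seed, coverage over all terminal bipartitions fails. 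Extracting a $\phi$-independent, $|T\cap P|\cdot c^{O(c)}$-size family that still covers every bipartition is precisely the nontrivial selection argument of \cite{kr13, chalermsook2020vertex, js20}, and it is exactly what your ``one must verify'' sentence defers; without it the lemma is not established.

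Two smaller points. First, your final absorption of $c^{O(c)}$ into $\factor$ invokes $c=(\log n)^{o(1)}$, a hypothesis the lemma does not have: as stated, the $c^{O(c)}$ belongs only in the running time, and the vertex/edge bound $m\cdot\phi\cdot\factor$ must hold without it. Second, your worry that a union of cut-sets might fail to be the intercluster edge set of a vertex partition is actually unfounded: if $(u,v,\alpha)$ lies in the cut-set of some cut whose cut-set is contained in $CC$, every $u$--$v$ path crosses that cut-set, so $u$ and $v$ fall in different components of $G\setminus CC$; taking $\calR$ to be the components of $G\setminus CC$ witnesses the partition condition. So the real gap is the size/coverage argument for the cut containment set, not this structural condition.
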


For the updates of the one-level data structure, \cite{js20} presented an offline update algorithm for the one-level terminal edge connectivity data structure, i.e., given an update sequence of the graph, the algorithm update the one-level terminal edge connectivity data structure according to the given update sequence of the graph. 
The update sequence consists of the multigraph update operations defined in Section~\ref{sec:preliminary}. 
Let $\updateseq$ be an update sequence. We use $|\updateseq|$ to denote the length of the update sequence.

The lemma below for the update algorithm shows that for a one-level terminal edge connectivity data structure $(G, \calP, CC, F)$, 
given an update sequence $\updateseq$ of $G$, the one-level terminal edge connectivity data structure can be updated in time linear with respect to the length of the update sequence. 

\begin{lemma}[\cite{js20}]\label{lem:old_main}

Given a one-level terminal edge connectivity data structure $(G, \calP, CC, F)$ with parameters $c^2 + 2c, \phi$ and $\gamma$ satisfying $\gamma > c^2 + 2c$, and an update sequence $\updateseq$ of $G$ which contains vertex/edge insertions/deletions, 
Algorithm 
\FTerminalUpdatae with running time $O(|\updateseq|)\cdot O(c/\phi)^{O(c^2)}$ updates the data structure to $(G', \calP', CC', F')$ with parameters $c, \phi/2^{O(\log^{1/3}n\log\log n)}$ and $\gamma$, and outputs an update sequence $\updateseq'$ and a vertex set $S$ satisfying the following properties:
\begin{enumerate}
    \item $G'$ is the resulted graph of applying $\updateseq$ to $G$.
    \item $\calP'$ is a $\phi/2^{O(\log^{1/3}n\log\log n)}$-expander decomposition of $G'$. Furthermore, each cluster $P'\in\mathcal P'$ either contains only a single vertex or is a subset of a cluster $P\in \mathcal P$ such that $G'[P'] = G[P']$.
    \item $CC'$ is a $c$-cut containment set for $G'$ and $\calP'$. 
    \item $F'$ is an unweighted spanning forest of $G'\setminus CC'$.     
    \item The sparsifier maintained in the data structure is $\sparsifier_{c, \gamma}(G', \calP', CC', F')$.
    \item $\updateseq'$ updates $\sparsifier_{c^2 + 2c, \gamma}(G, \calP, CC, F)$ to $\sparsifier_{c, \gamma}(G', \calP', CC', F')$. The length of $\updateseq'$ is at most $\abs{\updateseq}(10c)^{O(c)}$.
    \item $S = (\endpoints(\partial_G(\calP)) \setminus \endpoints(\partial_{G'}(\calP'))) \cup (\endpoints(\partial_{G'}(\calP')) \setminus \endpoints(\partial_G(\calP)))$ such that $|S| = O(|\updateseq|)$.
\end{enumerate}
\end{lemma}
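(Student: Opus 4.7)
The plan is to process the updates in $\updateseq$ one at a time, and for each single graph update, incrementally repair $\calP$, $CC$, and $F$, then record the resulting differences in the sparsifier as a batch of operations appended to the output sequence $\updateseq'$. The central mechanism for repairing $\calP$ is expander pruning (Theorem~\ref{thm:pruning}): for every cluster $P\in\calP$ whose induced subgraph is disturbed by the current update, I treat the inserted/deleted edges as a damage set $D$ and apply pruning to obtain a small vertex set $X\subseteq P$ with $\vol_G(X)=O(|D|/\phi)$; I detach $X$ as singleton clusters and keep $P\cut X$ as a single cluster, which is a $\phi/6$-expander of the updated induced subgraph by the third guarantee of Theorem~\ref{thm:pruning}. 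Iterating this across the $|\updateseq|$ updates accounts for the aggregate degradation of $\phi$ by $2^{O(\log^{1/3} n\log\log n)}$ claimed in item~2, and establishes the structural compatibility between $\calP'$ and $\calP$ stated there.

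Once $\calP'$ has been fixed, I read off the symmetric-difference set $S$ of item~7 by comparing boundary-vertex incidences, and the bound $|S|=O(|\updateseq|)$ follows from the volume bound on each pruned set together with the endpoints of the edges directly touched by $\updateseq$. For each cluster $P'\in\calP'$ that is either newly created, or whose terminal set or internal edge set changed, I recompute the local cut containment set $CC'\cap E(G'[P'])$ from scratch for the \emph{smaller} parameter $c$ (rather than $c^2+2c$), enumerating minimum $(T',T\cut T',c)$-cuts by a Gomory--Hu / isolating-cuts style procedure inside the $\phi$-expander $G'[P']$; the drop from $c^2+2c$ down to $c$ is exactly what creates the slack to absorb the new cut-containment edges introduced by the pruning step and by the new terminals in $S$. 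The forest $F'$ is obtained from $F$ by replacing, inside each affected cluster, the trees of $F[P]$ with any spanning forest of $G'[P']\cut CC'$, propagating tree-swap updates through $\contract_{T_{G,\calP}\cup\endpoints(CC)}(F)$. Finally, $\updateseq'$ is emitted by comparing, cluster by cluster, the old and new instances of $\sparsifier_{\cdot,\gamma}(G[P], T_{G,\calP}\cap P, CC\cap E(G[P]), F[P])$ and outputting the corresponding vertex/edge operations that transform one into the other.

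The main obstacle is simultaneously meeting the two quantitative bounds in the statement: the sparsifier-update length $|\updateseq'|\leq |\updateseq|(10c)^{O(c)}$ and the total running time $O(|\updateseq|)\cdot O(c/\phi)^{O(c^2)}$. The running-time bound forces tight control over both the number of clusters affected per graph update (controlled by the pruned-volume guarantee of Theorem~\ref{thm:pruning}) and the size of each recomputed local cut containment set, for which I will invoke the structural fact underlying Lemma~\ref{lem:old_main_initialization} that a $c$-cut containment set inside a $\phi$-expander with terminal set $T$ has size at most $|T|\cdot (c/\phi)^{O(c)}$. The length bound on $\updateseq'$ then calls for a careful charging argument: each vertex that enters or leaves $T_{G,\calP}$, of which there are $O(|\updateseq|)$ in aggregate, must be shown to contribute at most $(10c)^{O(c)}$ vertex/edge operations to the sparsifier, since both $\contract_{T\cup\endpoints(CC)}(F)$ and the local cut containment set within each affected cluster change on only $(10c)^{O(c)}$ objects per affected vertex. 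Establishing these two charging bounds rigorously, rather than the high-level structural description of the repair procedure, is where essentially all the work of the proof lies.
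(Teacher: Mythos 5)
Note first that the paper does not prove this lemma at all: it is imported verbatim from \cite{js20} (the citation is part of the statement), so what you are really attempting is a reconstruction of the Jin--Sun one-level update algorithm. As a reconstruction it has concrete gaps beyond the charging arguments you explicitly defer at the end. The most serious one is the step where you recompute $CC'\cap E(G'[P'])$ ``from scratch'' in every cluster whose terminal set or internal edge set changed. After pruning, the surviving cluster $P\setminus X$ acquires new terminals (the endpoints of edges towards the pruned vertices), so it qualifies as ``changed,'' yet it may contain almost all of $G$; a from-scratch enumeration of minimum terminal cuts inside it costs time polynomial in the cluster size, not in $|\updateseq|$, which is incompatible with the claimed $O(|\updateseq|)\cdot O(c/\phi)^{O(c^2)}$ bound. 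The mechanism in \cite{js20} instead patches the old cut containment set locally, and the parameter drop from $c^2+2c$ to $c$ is precisely the price of arguing that a containment set computed for the old partition and old terminals still captures minimum terminal $c$-cuts of the new instance. In your plan the affected clusters are recomputed at parameter $c$ anyway, so the drop plays no genuine role (your ``slack to absorb new edges'' explanation is not a property the definition of a cut containment set provides), which is a sign the mechanism you describe is not the one that makes the lemma true.

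Second, your derivation of item~7 does not go through as stated. Theorem~\ref{thm:pruning} only guarantees $\vol_G(X)=O(|\updateseq|/\phi)$ for the pruned set $X$, and if you detach $X$ as singleton clusters then every vertex of $X$ with nonzero degree, together with its neighbors in $P\setminus X$, becomes a boundary vertex; hence the symmetric difference of terminal sets that you can bound this way is $O(|\updateseq|/\phi)=|\updateseq|\cdot n^{o(1)}$, a factor $1/\phi$ larger than the $O(|\updateseq|)$ demanded by the lemma. The same $1/\phi$ contamination threatens item~6: the bound $|\updateseq'|\le|\updateseq|(10c)^{O(c)}$ is notably independent of $\phi$, while every intermediate quantity in your outline (pruned volume, local containment-set size $(c/\phi)^{O(c)}$, changes to the contracted forest) carries $\mathrm{poly}(1/\phi)$ factors. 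Reconciling these quantitative bounds is exactly where you concede ``essentially all the work'' lies, so what you have is a plausible high-level outline in the spirit of \cite{js20}, but not a proof of the statement.
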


\subsection{Multi-Level Terminal Edge Connectivity Data Structure}\label{sec:multi-s-t-ds}

Now we give a brief summary of the multi-level terminal edge connectivity data structure.
The motivation of the multi-level data structure is to apply the terminal edge connectivity sparsifier iteratively so that the size of the sparsifiers reduce gradually, and 
the edge connectivity is easy to compute in the sparsifier at the end.

A multi-level terminal edge connectivity data structure has three parameters $c, \phi$, and  $\gamma$,  where $c$ is the edge connectivity parameter, $\phi$ is the conductance parameter, and $\gamma$ is the parameter used to construct the sparsifiers for all the levels. 
A multi-level terminal edge connectivity data structure with parameters $c, \phi$, and $\gamma$
is a set of one-level data structures $\set{(G^{(i)}, \mathcal P^{(i)}, CC^{(i)}, F^{(i)})}_{i=0}^\ell$ satisfying the following conditions:
\begin{enumerate}
    \item[(c1)] $(G^{(i)}, \calP^{(i)}, CC^{(i)}, F^{(i)})$ is a one-level terminal edge connectivity data structure with parameters $c, \phi$  and $\gamma$. 
    \item[(c2)]  $G^{(0)}$ is the input graph, and $G^{(i)} = \sparsifier_{c, \gamma} (G^{(i-1)}, \mathcal P^{(i-1)}, CC^{(i-1)}, F^{(i)})$ for each $0 < i \leq \ell$.

    \item[(c3)] $\calP^{(\ell)}$ contains only one cluster.
\end{enumerate}

To initialize a multi-level data structure for a given graph, we apply the one-level initialization algorithm iteratively until the vertex partition contains only one cluster. 
And given an update sequence $\updateseq$ for graph $G$, a multi-level sparsifier $\set{(G^{(i)}, \mathcal P^{(i)}, CC^{(i)}, F^{(i)})}_{i=0}^\ell$ of $G$ is updated as follows: Suppose $\updateseq^{(0)} = \updateseq$. Starting from $i = 0$ to $\ell$, run the one-level update algorithm on the one-level terminal edge connectivity data structure $(G^{(i)}, \mathcal P^{(i)},  CC^{(i)}, F^{(i)})$ with update sequence $\updateseq^{(i)}$, and let $\updateseq^{(i + 1)}$ be the update sequence outputted by the one-level update algorithm. 
If at some level, the update sequence contains  too many updates so that the update algorithm is slower than recompute the sparsifier, then use the initialization algorithm to reconstruct the sparsifiers iteratively for the remaining levels. 


The fully dynamic algorithm is obtained by applying a framework proposed in~\cite{nanongkai2017dynamic} that converts a batched dynamic update algorithm 
to a fully dynamic algorithm. 
See Appendix~\ref{sec:online_batch_general} for more details of the framework proposed in~\cite{nanongkai2017dynamic}.

\begin{theorem}[Theorem 9.3 and 9.4 from \cite{js20}]\label{thm:js21main}
For any $c = (\log n)^{o(1)}$, 
there is a fully dynamic algorithm which maintains a set of multi-level terminal edge connectivity data structure such that each maintained terminal edge connectivity data structure has $O(\log^{1/10} n)$ levels all the time, and after processing each update, the algorithm provides the access to one of the maintained multi-level terminal edge connectivity data structure   $\set{(G^{(i)}, \mathcal P^{(i)}, CC^{(i)}, F^{(i)})}_{i=0}^\ell$ for the up-to-date graph with parameters $c, \phi, \gamma$ satisfying the following conditions:
\[\phi = 1/n^{o(1)}, \text{ and } \gamma > c. \]
The initialization time of the algorithm is $m^{1 + o(1)}$, and the update time of the algorithm is $n^{o(1)}$ per update.
\end{theorem}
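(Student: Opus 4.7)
The plan is to combine the one-level results with the batched-to-fully-dynamic framework of Nanongkai et al.~\cite{nanongkai2017dynamic} (cf.\ Appendix~\ref{sec:online_batch_general}). The argument has three ingredients: initializing a multi-level DS satisfying conditions (c1)--(c3), designing a batched update subroutine whose amortized per-update cost is $n^{o(1)}$, and deamortizing into a worst-case bound.

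First I would fix parameters. Let $\ell = \Theta(\log^{1/10} n)$ be the target number of levels. Pick $c_0 = (\log n)^{o(1)}$ slightly exceeding the target $c$, $\phi_0 = 1/n^{o(1)}$ satisfying $\phi_0 \cdot \factor \le 2^{-\log^{9/10} n}$, and $\gamma$ larger than any edge-connectivity parameter that will appear in the pipeline. Iterating Lemma~\ref{lem:old_main_initialization} $\ell$ times and invoking its size guarantee inductively yields $|E(G^{(i+1)})| \le |E(G^{(i)})|\cdot \phi_0 \cdot \factor$, so after $\ell$ levels the graph has a single vertex and $\calP^{(\ell)}$ has a single cluster, verifying (c3). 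Summing the initialization times $|E(G^{(i)})|^{1+o(1)}\cdot c_0^{O(c_0)}/\mathrm{poly}(\phi_0)$ over $i$ is dominated by the $i=0$ term and totals $m^{1+o(1)}$.

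For the batched update, given an update sequence $\updateseq^{(0)}$ of length $k$ for $G^{(0)}$, I would call Lemma~\ref{lem:old_main} at level~$0$, obtaining the updated level-$0$ DS and an output sequence $\updateseq^{(1)}$ of length at most $k\cdot (10c_0)^{O(c_0)}$. Feeding $\updateseq^{(1)}$ into level~$1$'s one-level update, iterating through all $\ell$ levels, and composing the blowups shows $|\updateseq^{(i)}| \le k\cdot (10c_0)^{O(c_0 \cdot i)} \le k\cdot n^{o(1)}$ for every $i \le \ell$, so the total batched time is $k \cdot n^{o(1)}$. Each call degrades parameters ($c \to \sqrt{c}$ and $\phi \to \phi/2^{O(\log^{1/3} n \log\log n)}$ per level), so after one batched pass the DS must be rebuilt; the slack built into $c_0$ and $\phi_0$ ensures the final $(c,\phi,\gamma)$ stay in the required range even after the worst-case cumulative degradation across all $\ell$ levels.

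To complete the proof, I would invoke the Nanongkai-style reduction of Appendix~\ref{sec:online_batch_general} to convert the batched algorithm into a fully dynamic worst-case one. That reduction maintains $O(\log n)$ parallel copies at geometrically growing batch sizes, rebuilding each on a staggered schedule so that for every update some currently-fresh copy is exposed. With batched cost $k\cdot n^{o(1)}$ and initialization cost $m^{1+o(1)}$, the resulting worst-case per-update time is $n^{o(1)}$, as required. The main obstacle is the bookkeeping across the hierarchy: every active copy at every scale must simultaneously satisfy (c1)--(c3), have $O(\log^{1/10} n)$ levels, and parameters $(c,\phi,\gamma)$ in the stated ranges at every moment, even as those copies accumulate degradation between rebuilds. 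This forces $c_0$, $\phi_0$, and the rebuild schedule to be coordinated so that even the most-degraded live copy meets the target bounds, and verifying this simultaneous consistency is the delicate step of the argument.
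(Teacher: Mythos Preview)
Your high-level plan matches the paper's sketch (the theorem is cited from~\cite{js20}; the paper only outlines the approach via Lemmas~\ref{lem:old_main_initialization}, \ref{lem:old_main}, and Lemma~\ref{lem:fully_framework}). Iterated one-level initialization, iterated one-level batched updates, and the Nanongkai--Saranurak--Wulff-Nilsen deamortization are indeed the three ingredients.

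There is, however, a genuine gap in the parameter bookkeeping. You write that $c_0$ should be chosen ``slightly exceeding the target $c$'' and that ``after one batched pass the DS must be rebuilt.'' Neither is compatible with achieving $n^{o(1)}$ worst-case update time. In Lemma~\ref{lem:old_main} a single application of the one-level update consumes the parameter headroom from $c^2+2c$ down to $c$, so to survive $\zeta$ batched passes the initial edge-connectivity parameter must be set to a tower of squares $c_0 \approx c^{2^{\zeta}}$, not merely ``slightly above $c$.'' More importantly, the framework of Lemma~\ref{lem:fully_framework} gives worst-case update time $O\bigl(4^{\xi}(\rti/w + w^{1/\xi}\rtu)\bigr)$ with $\xi\le\zeta$; taking $\zeta=1$ forces $\xi=1$, and then even the optimal choice of $w$ yields $\Theta(\sqrt{\rti\cdot\rtu}) = m^{1/2+o(1)}$, not $n^{o(1)}$. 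One must take $\zeta = \omega(1)$ (in the paper's later proof of Lemma~\ref{lem:update-new}, $\zeta = \Theta(\log\log\log n)$), choose $c_0 \approx (4c)^{2^{\zeta}} = (\log n)^{o(1)}$ accordingly, and verify that after up to $\zeta$ batched passes the degraded parameters still satisfy $\phi = 1/n^{o(1)}$ and the connectivity parameter is still at least $c$.

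A secondary omission: the batched multi-level update in the paper does not just iterate the one-level update through all levels; once the induced update sequence $\updateseq^{(i)}$ becomes comparable to $|E(G^{(i)})|$, it abandons the remaining levels and re-initializes from level $i$ onward (this is also what guarantees condition~(c3) continues to hold). Without this switch the running time at the upper levels is not controlled and the single-cluster property at the top level may be lost after updates.
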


\section{Minimum \texorpdfstring{$c$}{c}-Cut Data Structure}\label{sec:min-c-cut}
We aim to maintain a minimum $c$-cut data structure that allows to compute the minimum $c$-cut of the input graph efficiently. 
In this section, we present our minimum $c$-cut data structure. 
In Section~\ref{sec:init_update}, we present the initialization and update algorithms for the minimum $c$-cut data structure. 

\subsection{The Characterization of Minimum \texorpdfstring{$c$}{c}-cuts}

The high level intuition is to make use of the terminal edge connectivity sparsifier for $c$-edge connectivity described in Section~\ref{sec:js20_review} to speed up the process of finding the minimum $c$-cut if at least one of the minimum $c$-cuts of the input graph is captured by the sparsifier, as the terminal edge connectivity sparsifier is smaller than the input graph in terms of the number of vertices and edges. 
If no minimum $c$-cut is captured by the terminal edge connectivity sparsifier, we need to characterize these minimum $c$-cuts, and devise a new approach to find these cuts efficiently.

The two lemmas below show that the $c$-edge connectivity sparsifier does not only preserve the $c$-edge connectivity between any pair of terminals, but also allows us to reconstruct the $c$-terminal cuts. 

\begin{lemma}\label{lem:sparsifier_terminal_cut_pre}
Let $G$ be a graph, $T$ be a set of terminals, $CC$ be a $c$-cut containment set for $G$ for some positive integer $c$, and $F$ be an unweighted spanning forest of $G\setminus CC$. 
The following properties hold for $\sparsifier_{c, \gamma}(G, T, CC, F)$ for any $\gamma > c$:
\begin{enumerate}
    \item The cut-set of any $c$-cut in $\sparsifier_{c, \gamma}(G, T, CC, F)$ is the cut-set of a cut with the same cut size in $G$. 
    \item If there is a $c$-cut $C$ in $G$ such that $C$ partitions $T$ into two non-empty subsets $T'$ and $T\setminus T'$, then there is a cut $C'$ of size smaller than or equal to the size of $C$ in $\sparsifier_{c, \gamma}(G, T, CC, F)$ that partitions $T$ into $T'$ and $T\setminus T'$. 
\end{enumerate}
\end{lemma}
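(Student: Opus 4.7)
The plan is to exploit the key design choice that contracted-forest edges in the sparsifier carry multiplicity $\gamma > c$, so they can never appear in the cut-set of a $c$-cut. Consequently, any $c$-cut of $\sparsifier_{c,\gamma}(G,T,CC,F)$ uses only $CC$-edges, and conversely, any bipartition of $V(\sparsifier)$ that respects the contracted forest (i.e., keeps each contracted edge uncut) extends cleanly. Both properties will follow from this observation together with the fact that $\contract_{T\cup\endpoints(CC)}(F)$ records exactly the connectivity of $F$ through edge-disjoint $F$-paths within a single connected component.

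For property 1, I would take an arbitrary $c$-cut $(A,B)$ of the sparsifier. Because $\gamma > c$, the cut-set consists solely of $CC$-edges, which are literal edges of $G$ with identical multiplicities. I then lift $(A,B)$ to a bipartition $(A',B')$ of $V(G)$: for each vertex $v \in V(G)\setminus V(\sparsifier)$, note that $v$ is an interior vertex of a path in $F$ contracted to an edge $(x,y)$ of the sparsifier, whose endpoints $x,y$ must lie on the same side of $(A,B)$ (else that contracted edge is cut). Assign $v$ to that side; assign the vertices of any $F$-component disjoint from $V(\sparsifier)$ arbitrarily. Since $F$ is a spanning forest of $G\setminus CC$, every edge of $G\setminus CC$ (whether in $F$ or not) has both endpoints in the same $F$-component and is therefore not cut by $(A',B')$. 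Hence the cut-set of $(A',B')$ in $G$ equals the cut-set of $(A,B)$ in the sparsifier, with the same total multiplicity.

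For property 2, I would invoke Definition~\ref{def:cut_containment_set} applied to the given $(T',T\setminus T',c)$-cut $C$: this yields a minimum $(T',T\setminus T',c)$-cut $C''$ of $G$ whose cut-set $E''$ is contained in $CC$, with $|C''|\le|C|$. I then restrict $C''$ to $V(\sparsifier)$ to obtain a bipartition $(A,B)$ separating $T'$ from $T\setminus T'$ (both are subsets of $V(\sparsifier)$). The $CC$-edges cut by this restriction are exactly the edges of $E''$, contributing $|C''|$ to the cut size in the sparsifier because $CC$-multiplicities are preserved and $\endpoints(CC)\subseteq V(\sparsifier)$. The step that demands most care is showing no contracted-forest edge is cut: each such edge corresponds to a path in $F$, and since $E''\subseteq CC$ implies $C''$ cuts no edge of $F$, the two endpoints of every contracted edge lie in the same $F$-component and hence on the same side of $C''$. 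Combining, $(A,B)$ is a cut in the sparsifier of size at most $|C|$ partitioning $T$ into $T'$ and $T\setminus T'$, as required. The main subtlety throughout is keeping straight the correspondence between contracted-forest edges and $F$-paths inside a single $F$-component, but this is a direct consequence of the definition of $\contract_K(F)$ given in the excerpt.
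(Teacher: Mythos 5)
Your proposal is correct and follows essentially the same route as the paper: both arguments rest on the observation that the contracted-forest edges carry multiplicity $\gamma > c$, so any $c$-cut of the sparsifier uses only $CC$-edges, and cuts are then lifted/restricted by keeping each connected piece of $G\setminus CC$ entirely on one side (the paper phrases this via the partition $\calR$ whose intercluster edges are $CC$, you phrase it via $F$-components), which preserves the cut-set exactly; your use of Definition~\ref{def:cut_containment_set} for property 2 matches the paper as well. One small slip in your property-1 lifting: not every vertex of $V(G)\setminus V(\sparsifier_{c,\gamma}(G,T,CC,F))$ is an interior vertex of a contracted path (vertices pruned as degree-1 non-terminals during the contraction are not), but the fix is immediate---assign every vertex of an $F$-component that meets $V(\sparsifier_{c,\gamma}(G,T,CC,F))$ to the common side of that component's sparsifier vertices (they all lie on one side, since no contracted edge is cut), and place each $F$-component disjoint from the sparsifier wholly on one arbitrary side, after which your argument goes through unchanged.
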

\begin{proof}
By Definition~\ref{def:cut_containment_set}, $CC$ is the set of intercluster edges of a vertex partition of $G$. Let $\calR$ denote this vertex partition. 
By the construction of $\sparsifier_{c, \gamma}(G, T, CC, F)$, $\sparsifier_{c, \gamma}(G, T, CC, F)$ contains all the edges of $CC$. Comparing $G$ and $\sparsifier_{c, \gamma}(G, T, CC, F)$, 
for each $R\in \calR$, 
$G[R]$ in $G$ is replaced by a tree in  $\sparsifier_{c, \gamma}(G, T, CC, F)$ such that the edge multiplicity of each edge in the tree is greater than $c$. 
Furthermore, the vertices of $\sparsifier_{c, \gamma}(G, T, CC, F)$, which contain all the endpoints of edges in $CC$, are also vertices of $G$.

With these properties of $\sparsifier_{c, \gamma}(G, T, CC, F)$, any $c$-cut of $\sparsifier_{c, \gamma}(G, T, CC, F)$ only contains edges from $CC$.
Also,
for any two vertices $u$ and $v$ in $\sparsifier_{c, \gamma}(G, T, CC, F)$, 
if there is a $R\in \calR$ containing both $u$ and $v$,
$u$ and $v$ are in the same side of any $c$-cut of $\sparsifier_{c, \gamma}(G, T, CC, F)$. 
Let $C' = (V_1', V_2')$ be a $c$-cut of $\sparsifier_{c, \gamma}(G, T, CC, F)$. 
We construct cut $C = (V_1, V_2)$ of $G$ as follows: for any vertex $v$ of $G$, if there is a vertex $x\in V_1'$ such that there is a $R\in \calR$ containing both $v$ and $x$, then $v$ is in $V_1$, otherwise, $v$ is in $V_2$. 
By the construction of $C$, we have the following properties:
\begin{enumerate}
    \item For any two vertices $x$ and $y$ in $\sparsifier_{c, \gamma}(G, T, CC, F)$, if $x$ and $y$ are in the same side of $C'$, $x$ and $y$ are in the same side of $C$.
    \item For any two vertices $x$ and $y$ of $G$, if there is a $R\in \calR$ such that both $x$ and $y$ are in $R$, then $x$ and $y$ are in the same side of $C$.
\end{enumerate}
Since every vertex in $\endpoints(\partial_G(\calR))$ is an endpoint of some edge in $CC$, 
the cut-set of $C$ is the same as the cut-set of $C'$. Hence, the first property holds.

Now we prove the second property. 
By Definition~\ref{def:cut_containment_set}, there is a set of edges $W \subset CC$ such that $W$ is the cut-set of a $(T', T\setminus T', c)$-cut $C_1 = (V_1, V_2)$ of $G$ such that the cut size of $C_1$ is smaller than or equal to $C$. 
We construct a cut $C' = (V_1', V_2')$ of $\sparsifier_{c, \gamma}(G, T, CC, F)$ as follows: for any vertex $v$ of $\sparsifier_{c, \gamma}(G, T, CC, F)$, if there is a vertex $x\in V_1$ such that there is a $R\in \calR$ containing both $v$ and $x$, then $v$ is in $V_1'$, otherwise, $v$ is in $V_2'$. 
By the construction of $C'$, we have the following properties:
\begin{enumerate}
    \item For any two vertices $x$ and $y$ in $\sparsifier_{c, \gamma(G, T, CC, F)}$, if there is a $R\in \calR$ such that both $x$ and $y$ are in $R$, then $x$ and $y$ are in the same side of $C'$.
    \item For any two vertices $x$ and $y$ in $\sparsifier_{c, \gamma(G, T, CC, F)}$, if $x$ and $y$ are in the same side of $C$ for $G$, then $x$ and $y$ are in the same side of $C'$ for $\sparsifier_{c, \gamma(G, T, CC, F)}$.
\end{enumerate}
Thus, the cut-set of $C'$ is the same as the cut-set of $C_1$, and then the second property holds.
\end{proof}


\begin{lemma}\label{lem:sparsifier_terminal_cut}
Let $G$ be a graph, $\calP$ be a vertex partition of $G$, $CC$ be a $c$-cut containment set for $G$ and $\calP$ with some positive integer $c$, and $F$ be an unweighted spanning forest of $G\setminus CC$. 
The following properties hold for $\sparsifier_{c, \gamma}(G, \calP, CC, F)$ for any $\gamma > c$:
\begin{enumerate}
    \item The cut-set of any $c$-cut in $\sparsifier_{c, \gamma}(G, \calP, CC, F)$ is the cut-set of a cut with the same size in $G$. 
    \item If there is a $c$-cut $C$ in $G$ such that $C$  partitions terminals $T_{G, \calP}$ into two non-empty subsets $T'$ and $T_{G, \calP} \setminus T'$, then there is a cut $C'$ of size smaller than or equal to $C$ in the sparsifier $\sparsifier_{c, \gamma}(G, \calP, CC, F)$ such that $C'$ partitions $T_{G, \calP}$ into $T'$ and  $T_{G, \calP} \setminus T'$.
\end{enumerate}
\end{lemma}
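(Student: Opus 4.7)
The plan is to reduce both statements, cluster by cluster, to Lemma~\ref{lem:sparsifier_terminal_cut_pre}, using the fact that $\sparsifier_{c,\gamma}(G,\calP,CC,F)$ is obtained from $G$ by replacing each induced subgraph $G[P]$ with the per-cluster sparsifier $H_P \defeq \sparsifier_{c,\gamma}(G[P], T_{G,\calP}\cap P, CC\cap E(G[P]), F[P])$, while leaving every intercluster edge $e\in\partial_G(\calP)$ intact with its original multiplicity. In particular, every terminal $v\in T_{G,\calP}$ appears in both $G$ and the sparsifier, and every intercluster edge appears in both graphs with the same multiplicity.

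For property~1, I would take an arbitrary $c$-cut $C'=(V_1',V_2')$ of $\sparsifier_{c,\gamma}(G,\calP,CC,F)$. For each cluster $P$, the restriction $C'_P$ of $C'$ to $V(H_P)$ is a cut of $H_P$ of size at most $c$; by Lemma~\ref{lem:sparsifier_terminal_cut_pre}(1), it corresponds to a cut $C_P$ of $G[P]$ with the same cut-set, and moreover the construction in that proof classifies each vertex of $H_P$ (in particular each terminal in $T_{G,\calP}\cap P$) on the same side of $C_P$ as on $C'_P$. I then define a cut $C=(V_1,V_2)$ of $G$ by placing each vertex $v\in P$ on the side determined by $C_P$. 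The cut-set of $C$ decomposes into intracluster edges, which contribute exactly $\sum_P |C_P|=\sum_P|C'_P|$, and intercluster edges in $\partial_G(\calP)$, which contribute the same in $G$ as in the sparsifier because terminals are classified identically on the two graphs and $\partial_G(\calP)$-edges are preserved verbatim. Thus $|C|=|C'|$.

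For property~2, I would start with a $c$-cut $C=(V_1,V_2)$ of $G$ that splits $T_{G,\calP}$ into non-empty $T'$ and $T_{G,\calP}\setminus T'$. For each $P\in\calP$, restrict $C$ to $G[P]$ to obtain a cut of size at most $c$. If this restriction splits $T_{G,\calP}\cap P$ into two non-empty parts, Lemma~\ref{lem:sparsifier_terminal_cut_pre}(2) yields a cut $C'_P$ of $H_P$ of no larger size that induces the same partition of $T_{G,\calP}\cap P$; otherwise all terminals of $T_{G,\calP}\cap P$ lie on the same side of $C$, and I take $C'_P$ to be the trivial cut of $H_P$ (size $0$) placing every vertex on that side. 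Gluing the $C'_P$ together gives a cut $C'$ of the sparsifier: its intracluster contribution is $\sum_P|C'_P|$, bounded by the intracluster contribution of $C$ in $G$, and its intercluster contribution coincides with that of $C$ since terminals are partitioned identically into $T'$ and $T_{G,\calP}\setminus T'$. Hence $|C'|\le|C|$ and $C'$ induces the desired terminal bipartition.

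The main technical care lies in the consistency of terminal classifications across all clusters during the gluing step in property~2: one must verify that choosing the trivial cut on clusters whose terminals are monochromatic under $C$ is compatible with the partitions forced by Lemma~\ref{lem:sparsifier_terminal_cut_pre}(2) on the other clusters, and that the intercluster edges in $\partial_G(\calP)$ are charged exactly once and with the correct multiplicity on both sides. Once this bookkeeping is in place, both properties follow directly from the per-cluster statement.
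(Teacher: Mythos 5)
Your proposal is correct and follows essentially the same strategy as the paper: reduce to the per-cluster Lemma~\ref{lem:sparsifier_terminal_cut_pre} and glue, using that intercluster edges and terminals are preserved verbatim. For property~2 your argument is nearly identical to the paper's (the paper phrases the gluing as an iterative cluster-by-cluster replacement, but it is the same bookkeeping, including the choice of the empty cut-set on clusters whose terminals are monochromatic). For property~1 there is a small organizational difference: the paper does not invoke Lemma~\ref{lem:sparsifier_terminal_cut_pre}(1) per cluster but instead argues globally with the refinement $\calR$ induced by $CC$, placing each vertex of $G$ according to the $\calR$-cluster it shares with a vertex on one side of the sparsifier cut. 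Your per-cluster route is fine, but note that the \emph{statement} of Lemma~\ref{lem:sparsifier_terminal_cut_pre}(1) only asserts equality of cut-sets and sizes, not that every vertex of the per-cluster sparsifier (in particular every terminal) keeps its side; you correctly lean on the construction inside that proof for this side-preservation, and a fully self-contained writeup should either state that strengthening explicitly or re-derive it (e.g., as the paper does, from the fact that a $c$-cut of the sparsifier cannot separate two vertices lying in a common cluster of $\calR$, since the tree edges have multiplicity $\gamma>c$). With that point made explicit, both properties follow exactly as you describe.
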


\begin{proof}
By the definition of $CC$, $CC$ is the union of $\partial_G(\calP)$ and $c$-cut containment sets of $G[P]$ with $T_{G, \calP}\cap P$ as terminals for all $P\in\calP$. By Definition~\ref{def:cut_containment_set}, $CC$ is 
the set of intercluster edges of a vertex partition of $G$. Let $\calR$ denote this vertex partition. 
$\calR$ is a refinement of $\calP$, i.e., for each $R\in\calR$, there is a $P\in\calP$ such that $R$ is a subset of $P$.

Recall that $\sparsifier_{c, \gamma}(G, \calP, CC, F)$  is obtained by replacing each induced subgraph $G[P]$ in $G$ by \[\sparsifier_{c, \gamma}(G[P], T_{G, \calP}\cap P, CC\cap E(G[P]))\] for all the $P \in \calP$.
By the construction of $\sparsifier_{c, \gamma}(G[P], T\cap P, CC\cap E(G[P]))$, the vertices of $\sparsifier_{c, \gamma}(G, \calP, CC, F)$, which contain all the endpoints of edges in $CC$, are also vertices of $G$.

Hence, any $c$-cut of $\sparsifier_{c, \gamma}(G, \calP, CC, F)$ only contains edges from $CC$. And for any pair of vertices in $\sparsifier_{c, \gamma}(G, \calP, CC, F)$ that are contained by a vertex set of $\calR$, 
they are in the same side of any $c$-cut of $\sparsifier_{c, \gamma}(G, \calP, CC, F)$. 

To prove the first property, let $C' = (V_1', V_2')$ be an arbitrary $c$-cut of $\sparsifier_{c, \gamma}(G, \calP, CC, F)$. 
We construct a cut $C = (V_1, V_2)$ of $G$ as follows: for any vertex $v$ of $G$, if there is a vertex $x\in V_1'$ such that there is a $R\in \calR$ containing both $v$ and $x$, then $v$ is in $V_1$, otherwise, $v$ is in $V_2$. 
Since vertices in $G$ from the same cluster of $\calR$ are in the same side of $C$, 
the cut-set of $C$ is a subset of $CC$. 
By the construction of $C$, two endpoints of any edge in $CC$ are on the same side of $C$ only if the two endpoints are on the same side of $C'$ for $\sparsifier_{c, \gamma}(G, \calP, CC, F)$. 
Hence, 
The cut-set of $C$ is the same as the cut-set of $C'$, and thus the first property holds.

Now we prove the second property. 
Let $S$ denote the cut-set of $G$. 
Consider an arbitrary $P_1 \in \calP$. Let $S_{P_1} = S \cap E(G[P_1])$. 
If $S_{P_1}$ is the cut-set of a terminal cut $C_{P_1}$ for $G[P_1]$ with $T_{G, \calP}\cap P_1$ as terminals, then by Lemma~\ref{lem:sparsifier_terminal_cut_pre}, there is a set of edges $S_{P_1}' \subset CC$ satisfying the following conditions:
\begin{itemize}
    \item  $S_{P_1}'$ is the cut-set of a cut $C_{P_1}'$ for $G[P_1]$.
    \item $C_{P_1}$ and $C_{P_1}'$ have the same partition on $T_{G, \calP}\cap P_1$. 
    \item The cut size of $C_{P_1}'$ is smaller than or equal to $C_{P_1}$. 
\end{itemize}
If $S_{P_1}$ is not the cut-set of a terminal cut for $G[P_1]$ with $T_{G, \calP}\cap P_1$ as terminals, then we let $S_{P_1}'$ be an empty set.

Let $G_1$ be the graph obtained by using $\sparsifier_{c, \gamma}(G[P_1], T_{G, \calP}\cap P_1, CC\cap E(G[P_1]))$ to replace $G[P_1]$ in graph $G$. 
$(S \setminus S_{P_1})\cup S_{P_1}'$ is the cut-set of a cut $C_1$ for graph $G_1$ satisfying the following conditions:
\begin{itemize}
    \item The cut size of $C_1$ is smaller than or equal to that of $C$.
    \item $C$ and $C_1$ have the same partition on $T_{G, \calP}$ for $G$ and $G_1$ respectively.
    \item The edges of $C_1$'s cut-set that are in $G[P_1]$ are in $CC$. 
\end{itemize} 

If the above process is repeated for all the other clusters in $\calP$, then at the end, we obtain a cut $C'$ for $\sparsifier_{c, \gamma}(G, \calP, CC, F)$ satisfying the following conditions:
\begin{itemize}
    \item The cut size of $C'$ is smaller than or equal to that of $C$.
    \item $C$ and $C'$ have the same partition of $T_{G, \calP}$ for $G$ and $\sparsifier_{c, \gamma}(G, \calP, CC, F)$ respectively.
    \item The cut-set of $C'$ are in $CC$. 
\end{itemize} 
Thus, the second property of the lemma holds. 
\end{proof}




By Lemma~\ref{lem:sparsifier_terminal_cut}, if there is a minimum cut of the input graph being a terminal cut, then the terminal cut can be obtained from the terminal edge connectivity sparsifier.
Otherwise, no minimum cut is a terminal cut, the terminal edge connectivity sparsifier is not helpful. Hence, we need to devise a new approach for the case that all the minimum $c$-cuts of the input graph are not terminal cuts.
The following lemma characterizes the minimum cuts of a graph with respect to a vertex partition. 


\begin{lemma}\label{lem:char}
Let $G$ be a graph, and $\mathcal P$ be a vertex partition of $G$. 
Let $C$ be a minimum cut on $G$. $C$ is either a terminal cut with $T_{G, \calP}$ as terminals or a cut with cut-set in $G[P]$ for some $P\in\mathcal P$ such that all the terminal vertices in $T_{G, \calP}$ are in the same side of $C$.
\end{lemma}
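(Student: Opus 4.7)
My plan is a short case analysis powered by one elementary observation: every intercluster edge has both of its endpoints in $T_{G, \calP}$, because $T_{G, \calP} = \endpoints(\partial_G(\calP))$ by definition. Equivalently, every non-terminal vertex has all of its incident edges confined to a single cluster $G[P]$.

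Writing $C = (A, B)$, if $C$ splits $T_{G, \calP}$ into two non-empty pieces then $C$ is already a terminal cut and we are in the first alternative of the statement. Otherwise, I may assume without loss of generality that $T_{G, \calP} \subseteq A$. The observation above then forces the cut-set of $C$ to consist entirely of intracluster edges, since any intercluster edge has both endpoints in $T_{G,\calP} \subseteq A$ and therefore does not cross $C$. What remains is to show that all of these intracluster edges actually lie inside a single $G[P]$.

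For each $P \in \calP$ I set $B_P = B \cap P$. Since every vertex of $B$ is a non-terminal, every edge incident to $B_P$ stays inside $G[P]$; in particular there are no edges between $B_P$ and $B_{P'}$ for distinct clusters $P, P'$, and every cut edge out of $B_P$ lands in $A \cap P$. Hence the cut-set of $C$ decomposes as a disjoint union of edge sets, one for each cluster $P$ with $B_P \neq \emptyset$, and for each such $P$ the bipartition $(B_P, V \setminus B_P)$ is itself a cut of $G$ whose size equals the size of the corresponding piece. If $B$ met two or more distinct clusters, each piece would either have size strictly less than $|C|$, or size $0$ (in which case $B_P$ is an isolated component and $(B_P, V \setminus B_P)$ is a cut of size $0 < |C|$); in both cases this contradicts minimality of $C$ whenever $|C| > 0$. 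The degenerate case $|C| = 0$ is vacuous, since an empty cut-set lies inside every $G[P]$. I do not anticipate any real obstacle here: once the ``non-terminals have no intercluster edges'' observation is in place, the rest is a clean decomposition-plus-minimality argument.
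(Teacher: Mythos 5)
Your proposal is correct and takes essentially the same route as the paper's proof: the observation that both endpoints of every intercluster edge are terminals forces a non-terminal cut's cut-set to be intracluster, and the per-cluster decomposition of the cut-set, with each piece lifting to a cut of $G$, combined with minimality, pins the whole cut-set inside a single cluster. One tiny phrasing nit: when $B$ meets two or more clusters it is ``at least one piece has size strictly less than $|C|$'' (since the pieces sum to $|C|$) rather than ``each piece,'' but that single piece already yields the contradiction you invoke, so the argument stands.
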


\begin{proof}
If the cut-set of $C$ contains an intercluster edge with respect to $\calP$, then $C$ is a terminal cut, because the two endpoints of the intercluster edge are terminals, and belong to different sides of the cut. 

On the other hand, let $P$ be a cluster of $\calP$ such that the edges in the cut-set of $C$ that are also in $G[P]$ form a terminal cut for $G[P]$ with $T_{G, \calP}\cap P$ as terminals. 
Since the two endpoints of any edge in the cut-set of $C$ are on different sides of $C$, 
vertices of $T_{G, \calP}\cap P$ are not in the same side of $C$ for $G$, and thus $C$ is a terminal cut.



Hence, if $C$ is not a terminal cut of $G$, then every edge in the cut-set of $C$ is an intracluster edge with respect to $\calP$, and 
for each $P\in \calP$, one of the following two conditions hold:
\begin{enumerate}
    \item The cut-set of $C$ does not contain any edge in $G[P]$.
    \item The edges in the cut-set of $C$ that are also in $G[P]$ form a non-terminal cut of $G[P]$ with $T_{G, \calP} \cap P$ as terminals. 
\end{enumerate}
For any $P \in \calP$, let $C'$ be an arbitrary cut of $G[P]$ such that the $T_{G, \calP}\cap P$ are in the same side of $C'$. 
The cut-set of $C'$ is also the cut-set of a cut in graph $G$. 
Hence, if $C$ is a minimum cut of $G$ that is a non-terminal cut with $T_{G, \calP}$ as terminals, then 
there is a $P\in \calP$ such that $G[P]$ contains all the edges in the cut-set of $C$, and the cut-set of $C$ forms a non-terminal cut of $G[P]$ with $T_{G, \calP}\cap P$ as terminals. 
Since any path from a vertex in $P$ to a vertex not in $P$ must contain a vertex in $T_{G, \calP} \cap P$, 
$C$ is a cut such that all the vertices of $T_{G, \calP}$ are in the same side of $C$.
\end{proof}

Based on Lemma~\ref{lem:sparsifier_terminal_cut} and Lemma~\ref{lem:char}, 
since terminal cuts of size at most $c$ are captured by the sparsifier, 
we use the terminal sparsifier to find the minimum terminal $c$-cuts, and need to devise a new approach to find the minimum non-terminal $c$-cuts.



\subsection{Local Cut}
We define local cut, the information we need to maintain in our global minimum cut data structure in order to find the minimum cut efficiently.


Recall that our goal is to find the minimum non-terminal $c$-cuts with $T_{G, \calP}$ as terminals for a graph $G$ and a vertex partition $\calP$. 
By Lemma~\ref{lem:char}, every minimum non-terminal $c$-cut must have the cut-set belong to a cluster in $\calP$.
Hence, we only need to focus on non-terminal cuts with the cut-set in a cluster.
On the other hand, note that each cluster is supposed to be a $1/n^{o(1)}$-expander as Theorem~\ref{thm:js21main}.
By the definition of expander, every cut of size at most $c$ has a side of volume at most $O(c\cdot n^{o(1)})$.  We use the following definition to describe non-terminal cuts in an expander.

\begin{Definition}\label{def:local_cut}
Let $G = (V, E)$ be a graph with $T$ as terminals, and $\alpha, c$ be two positive integers. We say a cut $C = (U, V\cut U)$ is an $(\alpha, c)$-local cut in $G$ with respect to $T$ if all of the following conditions hold
\begin{itemize}
    \item The size of cut $C$ is at most $c$;
    \item Either $U$ or $V\cut U$ has volume at most $\alpha$ in $G$;
    \item $T\subset U$ or $T\subset V\cut U$.
\end{itemize}
A cut $C$ is a minimum $(\alpha, c)$-local cut in $G$ with respect to $T$ if $C$ is a $(\alpha, c)$-local cut in $G$ with respect to $T$ and there is no other $(\alpha, c)$-local cuts in $G$ with respect to $T$ with the size smaller than $C$.
\end{Definition}

The following lemma shows that if there exists a $c$-cut in a graph, then there is a minimum cut that is either a minimum local cut with appropriate parameters or a cut with cut-set that can be obtained from the terminal edge connectivity sparsifier. 

\begin{lemma}\label{lem:min_cut_char}
Let $G$ be a graph, $\calP$ be a $\phi$-expander decomposition of $G$, $CC$ be a $c$-cut containment set for $G$ and $\calP$ with some fixed positive integer $c$, $F$ be an unweighted spanning forest of $G\setminus CC$, and $\sparsifier_{c, \gamma}(G, \calP, CC, F)$ be a terminal edge connectivity sparsifier for $G$ and $\calP$ with parameter $\gamma > c$. If the minimum cut of $G$ has size at most $c$, then there is a minimum cut $C$ of $G$ satisfies one of the following two conditions:
\begin{enumerate}
    \item 
The cut set of $C$ is also the cut set of a minimum cut of  $\sparsifier_{c, \gamma}(G, \calP, CC, F)$.
    \item $C$ is a minimum $(\alpha, c)$-local cut in $G[P]$ for some $P\in\mathcal P$ with respect to $\endpoints(\partial_G(\mathcal P))\cap P$, and any $\alpha \geq c / \phi$. 
\end{enumerate}
\end{lemma}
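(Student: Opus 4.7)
The plan is to pick any minimum cut $C^\star$ of $G$ and apply \Cref{lem:char} to split into two cases: (i) there is a minimum cut of $G$ that is a terminal cut with respect to $T_{G,\calP}$, or (ii) no minimum cut of $G$ is a terminal cut, in which case every minimum cut has its cut-set inside $G[P]$ for some $P \in \calP$ with all of $T_{G,\calP}$ on one side.

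In case (i), I would use \Cref{lem:sparsifier_terminal_cut} to transfer between $G$ and the sparsifier. Let $k$ denote the size of the minimum cut of $G$. By part~2 of \Cref{lem:sparsifier_terminal_cut}, the terminal minimum cut of $G$ induces a cut of size $\le k$ in $\sparsifier_{c,\gamma}(G,\calP,CC,F)$, so the sparsifier has a cut of size $\le k \le c$. By part~1 of the same lemma, any $c$-cut of the sparsifier corresponds to a cut of $G$ of the same size, so no cut of the sparsifier can have size smaller than $k$. Hence the minimum cut of the sparsifier has size exactly $k$; take any such minimum cut $C'$ of the sparsifier, and let $C$ be the cut of $G$ given by part~1 sharing the same cut-set. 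Then $C$ is a minimum cut of $G$ whose cut-set equals that of a minimum cut of the sparsifier, giving condition~1.

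In case (ii), I would pick any minimum cut $C^\star = (U, V \setminus U)$ of $G$ and use \Cref{lem:char} to locate the cluster $P \in \calP$ containing its cut-set, with $T_{G,\calP} \subseteq U$ (WLOG). The key structural observation is that $V \setminus U \subseteq P$: any vertex outside $P$ lies in some cluster $P' \ne P$ and is reachable from a boundary vertex of $P'$ through edges of $G[P']$ and intercluster edges, none of which lie in the cut-set, so it must end up on the terminal side $U$. Consequently $C^\star$ viewed inside $G[P]$ is a non-terminal cut of $G[P]$ (with terminals $\endpoints(\partial_G(\calP)) \cap P$) of size at most $c$. Since $G[P]$ is a $\phi$-expander and the cut has size at most $c$, one of its two sides has $G[P]$-volume at most $c/\phi \le \alpha$, certifying that $C^\star$ is an $(\alpha,c)$-local cut in $G[P]$.

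The last thing to verify is minimality among $(\alpha,c)$-local cuts in $G[P]$. Here I would argue by a simple extension: given any $(\alpha,c)$-local cut $(X, P \setminus X)$ of $G[P]$ with $\endpoints(\partial_G(\calP)) \cap P \subseteq X$, the pair $(X \cup (V \setminus P),\ P \setminus X)$ is a cut of $G$ with exactly the same cut-set, because all intercluster edges and all edges in $G[P']$ for $P' \ne P$ have both endpoints on the $X \cup (V \setminus P)$ side. Thus any $(\alpha,c)$-local cut in $G[P]$ that is strictly smaller than $C^\star$ would yield a cut of $G$ strictly smaller than the minimum, a contradiction. The mild obstacle I anticipate is just bookkeeping in this extension step and confirming that the roles of $U$ and $V\setminus U$ (which side contains $V\setminus P$) are chosen consistently with the terminal side, but no new ideas are needed beyond \Cref{lem:char}, \Cref{lem:sparsifier_terminal_cut}, and the expander property of $G[P]$.
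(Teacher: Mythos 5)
Your proposal is correct and follows essentially the same route as the paper's proof: the dichotomy from Lemma~\ref{lem:char}, Lemma~\ref{lem:sparsifier_terminal_cut} to transfer a minimum terminal cut to and from the sparsifier, and the expander property plus the fact that every $(\alpha,c)$-local cut of $G[P]$ extends to a cut of $G$ with the same cut-set to get minimality in the local case. You simply spell out details the paper states tersely (e.g., that the non-terminal side of a local minimum cut lies inside the cluster $P$), which is fine and introduces no gap.
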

\begin{proof}
By Lemma~\ref{lem:char}, any minimum cut of $G$ is either a terminal cut with $\endpoints(\partial_G(\calP))$ as terminals, or a cut within $G[P]$ for some $P \in \calP$ such that all the terminal vertices in $G[P]$ are in the same side of the cut. 

If there is a minimum $c$-cut of $G$ that is also a terminal cut, by the construction of the terminal cut sparsifier, the cut-set of a minimum cut of $\sparsifier_{c, \gamma}(G, \calP, CC, F)$ is also the cut-set of a minimum cut of $G$. 

If there is a minimum $c$-cut of $G$ which is a cut within a $G[P]$ for some $P \in \calP$ such that all the terminal vertices in $G[P]$ are on the same side of the cut, then one side of the cut is of volume at most $c / \phi$ since $G[P]$ is a $\phi$-expander, by Definition \ref{def:local_cut}, it is a $(\alpha, c)$-local cut in $G[P]$ with respect to $\endpoints(\partial_G(\mathcal P))\cap P$. Since it is a minimum cut, it is a minimum $(\alpha, c)$-local cut in $G[P]$ because every  $(\alpha, c)$-local cut is also a cut for the entire graph. 
\end{proof}

\def\parametertimes{\xi}
\def\parametertimesub{\zeta}
\def\parameterlength{w}
\def\dsinitialize{\textsc{DS-Initialize}}
\def\dsupdate{\textsc{DS-Update}}
\def\rti{t_{\mathrm{initialization}}}
\def\rtu{t_{\mathrm{update}}}
\def\parametertimes{\xi}
\def\parameterlength{w}
\def\parametertimesub{\zeta}

\newcommand{\condname}[1]{(#1)-small}

\subsection{The Global Min-Cut Sparsifier Data Structure}\label{sec:init_update}

Lemma \ref{lem:char} states that a global minimum cut is either a terminal cut or a local cut. Given a graph $G$, in order to keep track of the global minimum cut, we use a terminal cut data structure to keep track of all terminal cuts, and try including an additional data structure to find a minimum local cut. In order to keep track of all local cuts in $G$, we need help from auxiliary graphs.

\vspace{.15cm}\noindent \textbf{Auxiliary Graph and Partition \ \ } 
We first construct an auxiliary graph based on $G$ by connecting all terminal vertices. The auxiliary graph makes sure that in a local cut, the side containing all terminal vertices is always connected. 
This makes it easier to find minimum local cuts in the auxiliary graph.

\begin{Definition}\label{def:aux-graph}
 For a graph $G = (V, E)$ and a partition $\mathcal P$ of $V$,
 the auxiliary graph of $G$ with respect to $\mathcal P$, denoted as $G_*$,  is obtained from $G$ as follows: 
 For each $P\in\mathcal P$, add a special terminal vertex $t_P$ and connect $t_P$ to every terminal vertex in $P$. 
 
 The auxiliary partition $\calP_*$ is defined as  $\{P_* | P\in \mathcal P\}$, where $P_* = P\cup \set {t_P}$. 
The terminals of $G_*$ with respect to $\calP_*$ are the terminals of $G$ with respect to $\calP$ union all the special terminals.
 \end{Definition}

The following lemma shows how the auxiliary graph is able to help us find minimum local cuts in the original graph.
\begin{lemma}\label{lem:special-terminal}
Let $G$ be a graph and $\mathcal P$ be a $\phi$-expander decomposition of $G$. 
Let $G_*$ be the auxiliary graph of $G$ with respect to $\mathcal P$. Let $\alpha$ and $c$ be two parameters such that $\alpha\geq \ceil{c/\phi}$.
The following two conditions hold for any $P \in \calP$:
\begin{enumerate}
\item If $(U, P_*\cut U)$ is a minimum $(3\alpha,c)$-local cut in $G_*[P_*]$, then $(U\cap P, (P_*\cut U)\cap P)$ is a minimum $(\alpha, c)$-local cut in $G[P]$. 
\item Suppose $(U', P\cut U')$ is a minimum $(\alpha, c)$-local cut in $G[P]$. Without loss of generality, assume all terminals of $P$ are in $U'$. 
Then $(U'\cup\set{t_P}, P\cut U')$ is a minimum $(3\alpha, c)$-local cut in  $G_*[P_*]$. 
\end{enumerate}
\end{lemma}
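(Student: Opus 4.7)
The plan is to establish a cut-set-preserving correspondence between cuts of $G[P]$ in which $T_{G,\calP}\cap P$ lies entirely on one side and cuts of $G_*[P_*]$ in which $t_P$ lies on one side. The pivotal observation is that in $G_*[P_*]$, the special terminal $t_P$ is adjacent only to vertices in $T_{G,\calP}\cap P$ (by Definition~\ref{def:aux-graph}). Hence, in any cut of $G_*[P_*]$ that places $t_P$ and all of $T_{G,\calP}\cap P$ on the same side, no edge incident to $t_P$ is in the cut-set, so the cut-set lies entirely in $G[P]$. This yields an inverse pair of operations: the \emph{lift} sends a cut $(U', P\setminus U')$ of $G[P]$ with $T_{G,\calP}\cap P\subset U'$ to $(U'\cup\{t_P\},\, P\setminus U')$ in $G_*[P_*]$, and the \emph{projection} sends a cut $(U, P_*\setminus U)$ of $G_*[P_*]$ with $t_P\in U$ to $(U\cap P,\, P\setminus U)$ in $G[P]$; both preserve the cut-set and hence the cut size.

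Next, I would verify the volume conditions. Since the only degree changes in passing from $G[P]$ to $G_*[P_*]$ come from edges incident to $t_P$, a short calculation yields
\[\vol_{G_*[P_*]}(U'\cup\{t_P\}) = \vol_{G[P]}(U') + 2|T_{G,\calP}\cap P|,\qquad \vol_{G_*[P_*]}(P\setminus U') = \vol_{G[P]}(P\setminus U').\]
Because $G[P]$ is a $\phi$-expander with $\phi>0$, every vertex has positive degree in $G[P]$, so $|T_{G,\calP}\cap P|\le \vol_{G[P]}(T_{G,\calP}\cap P)\le \vol_{G[P]}(U')$. Consequently, if $U'$ is the low-volume side of the original cut, $\vol_{G_*[P_*]}(U'\cup\{t_P\})\le 3\vol_{G[P]}(U')\le 3\alpha$; if $P\setminus U'$ is the low-volume side, its volume in $G_*[P_*]$ equals its volume in $G[P]$, which is $\le \alpha\le 3\alpha$. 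Conversely, if $(U,P_*\setminus U)$ is any $(3\alpha,c)$-local cut of $G_*[P_*]$ with $t_P\in U$, the projection has size $\le c$ in $G[P]$, and since $G[P]$ is a $\phi$-expander with $\alpha\ge\lceil c/\phi\rceil$, the smaller side in $G[P]$ automatically has volume $\le c/\phi\le\alpha$, making the projection an $(\alpha,c)$-local cut.

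Each part of the lemma then follows from a contradiction argument using this correspondence. For part (1), suppose $(U,P_*\setminus U)$ is a minimum $(3\alpha,c)$-local cut of $G_*[P_*]$; after relabelling sides we may assume $t_P\in U$, and by the locality condition every terminal lies in $U$, so the projection is an $(\alpha,c)$-local cut of $G[P]$ of the same size. If some $(\alpha,c)$-local cut of $G[P]$ had smaller size, relabelling to put its terminals on the first side and lifting would give a strictly smaller $(3\alpha,c)$-local cut of $G_*[P_*]$, contradicting minimality. Part (2) is symmetric: any strictly smaller $(3\alpha,c)$-local cut of $G_*[P_*]$ would project to a strictly smaller $(\alpha,c)$-local cut of $G[P]$, contradicting the minimality of $(U',P\setminus U')$.

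The main obstacle is the quantitative step in the volume computation, namely establishing the factor $3$ in $3\alpha$; this rests on the inequality $|T_{G,\calP}\cap P|\le \vol_{G[P]}(U')$, which uses that $G[P]$ has no isolated vertices (a consequence of being a $\phi$-expander with $\phi>0$). A minor wrinkle is the degenerate case $T_{G,\calP}\cap P=\emptyset$, in which $t_P$ is isolated in $G_*[P_*]$ and $(\{t_P\},P)$ is a zero-size local cut; the lemma is only applied to clusters that carry at least one terminal, with degenerate clusters handled separately by the outer algorithm.
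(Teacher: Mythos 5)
Your proposal is correct and follows essentially the same route as the paper's proof: a cut-set-preserving lift/projection between local cuts of $G[P]$ and $G_*[P_*]$, the volume accounting that yields the factor $3$ (bounding the number of $t_P$-edges by $\vol_{G[P]}(U')$), the expander property with $\alpha\geq\lceil c/\phi\rceil$ to bound the projected cut's small side, and cross-contradiction arguments for minimality in both directions. Your treatment is in fact slightly more explicit than the paper's (notably the case where $P\setminus U'$ is the low-volume side and the terminal-free degenerate cluster), so no gaps to report.
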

\begin{proof}
Let $(U, P_*\cut U)$ be a minimum $(3\alpha,c)$-local cut in $G_*[P_*]$. Whether $t_P\in U$ or $t_P\in P_*\cut U$, since all terminals are on the same side of the cut, removing edges between $t_P$ and other terminals does not affect the number of edges in the cut-set. Therefore, $(U\cap P, (P_*\cut U)\cap P)$ is still a $c$-cut in $G[P]$. By the assumption that $G[P]$ is a $\phi$-expander, one side of $(U\cap P, (P_*\cut U)\cap P)$ has volume at most $\alpha$. Therefore, $(U\cap P, (P_*\cut U)\cap P)$ is a $(\alpha, c)$-local cut in $G[P]$ that has the same size as $(U, P_*\cut U)$.
 $(U\cap P, (P_*\cut U)\cap P)$ must be a minimum $(\alpha, c)$-local cut in $G[P]$ because otherwise $(U, P_*\cut U)$ is not a minimum $(3\alpha,c)$-local cut in $G_*[P_*]$.

Let  $(U', P\cut U')$ be a minimum $(\alpha, c)$-local cut in $G[P]$ such that the terminals are in $U'$. We look at cut $(U'\cup\set{t_P}, P\cut U')$ in $G_*[P_*]$. Since the edges between $t_P$ and other terminals are not in the cut-set, the size of this cut is at most $c$. Since $G[P]$ is a connected graph and the volume of $U'$ in $G[P]$ is at most $\alpha$,
we add at most $\alpha$ new edges to turn $G[P]$ to $G_*[P_*]$. The volume of $U'\cup\set{t_P}$ in $G_*[P_*]$ is at most $3\alpha$. Therefore, $(U'\cup\set{t_P}, P\cut U')$ is a $(3\alpha, c)$-local cut in  $G_*[P_*]$ that has the same cut size as $(U', P\cut U')$. 
In addition, $(U'\cup\set{t_P}, P\cut U')$ must be a minimum $(3\alpha, c)$-local cut in $G_*[P_*]$, because if there exists a smaller $(3\alpha, c)$-cut, then its corresponding cut in $G$ would be an $(\alpha, c)$-local cut which has a size smaller than $(U', P\cut U')$, and thus $(U', P\cut U')$ would not be a minimum $(\alpha,c)$-local cut in $G[P]$.
\end{proof}


As a result of Lemma \ref{lem:special-terminal}, instead of finding minimum local cuts in $G$ with respect to $\calP$, we work on finding the minimum local cuts in $G_*$ with respect to $\calP_*$. The major advantage of using $G_*$ and $\calP_*$ is that we only need to consider cuts with both sides connected when considering the connected components of $G_*[\calP_*]$. 
We note that our approach to finding the minimum cut is different from~\cite{saranurak2021simple}, which also finds the minimum cut of the graph by investigating a vertex partition. 
Our approach is based on finding minimum local cuts, whereas \cite{saranurak2021simple} relies on Gabow's matroid characterization for edge connectivity~\cite{gabow1995matroid}. 




\begin{lemma}\label{lem:special-terminal-simple-cut}
Let $G$ be a graph and $\mathcal P$ be a $\phi$-expander decomposition of $G$. 
Let $G_*$ and $\calP_*$ be the auxiliary graph and partition of $G$ with respect to $\mathcal P$. Let $\alpha$ and $c$ be two parameters such that
$\alpha\geq \ceil{c/\phi}$.
Let $C$ be the cut-set of a minimum $(3\alpha, c)$-local cut of $G_*$ with respect to $\calP_*$, and $P_* \in \calP_*$ be the vertex partition such that $C$ contains only edges in $G_*[P_*]$. 
 The graph obtained by removing  $C$ from $G_*[P_*]$ has only two connected components.

\end{lemma}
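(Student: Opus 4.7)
The plan is to prove the lemma by contradiction. Let $(U, P_* \setminus U)$ be the bipartition of $P_*$ induced by the minimum $(3\alpha, c)$-local cut with cut-set $C$, and assume WLOG that the terminal side is $U$. Since $C$ consists exactly of the $G_*[P_*]$-edges crossing this bipartition, the conclusion that $G_*[P_*] \setminus C$ has two connected components is equivalent to the two induced subgraphs $G_*[U]$ and $G_*[P_* \setminus U]$ both being connected; if either were disconnected, I will exhibit a strictly smaller $(3\alpha, c)$-local cut, contradicting minimality.

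For $G_*[U]$: the special vertex $t_P$ is joined by an edge to every terminal of $G$ in $P$, and these edges all lie in $U$, so the terminals together form one connected piece $T \subseteq U$ of $G_*[U]$. If $G_*[U]$ had another (non-terminal) component $U_2$, the cut $(U_2, P_* \setminus U_2)$ would have all its edges inside $C$, since $U_2$ has no $G_*[P_*]$-edges to any other part of $U$; moreover connectivity of $G_*[P_*]$ forces at least one $C$-edge to be incident to $U \setminus U_2$, so this cut has size strictly less than $|C|$. For the volume requirement, if $\vol_{G_*[P_*]}(U) \leq 3\alpha$ then $\vol(U_2) \leq 3\alpha$ trivially; otherwise $\vol_{G_*[P_*]}(P_* \setminus U) \leq 3\alpha$, and I would apply the $\phi$-expander property of $G[P]$ to the $G[P]$-cut $(U_2, P \setminus U_2)$ (whose edges all go to $P \setminus U$ and therefore number at most $c$), obtaining either $\vol_{G[P]}(U_2) \leq c/\phi \leq \alpha$ (and the cut $(U_2, P_* \setminus U_2)$ already suffices) or $\vol_{G[P]}(P \setminus U_2) \leq \alpha$, which pins down $\vol_{G[P]}(T \setminus \{t_P\}) \leq \alpha$ and hence $\vol_{G_*[P_*]}(T) \leq 3\alpha$ after accounting for the degrees of $t_P$ and the terminals; the cut $(T, P_* \setminus T)$ then works instead. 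The argument for $G_*[P_* \setminus U]$ being connected is symmetric: given components $W_1, W_2, \ldots$, the cut $(W_1, P_* \setminus W_1)$ is strictly smaller than $|C|$ by the same connectivity argument, and the $\phi$-expander property applied to the $G[P]$-cut $(W_1, P \setminus W_1)$ yields either $\vol_{G[P]}(W_1) \leq \alpha$ (the same cut works since $W_1$ contains no terminals) or $\vol_{G[P]}(P \setminus W_1) \leq \alpha$ (in which case the cut $(U \cup W_1, P_* \setminus U \setminus W_1)$ is a local cut whose small side $W_2 \cup W_3 \cup \cdots$ has $G[P]$-volume at most $\alpha$).

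The main obstacle is the volume bookkeeping when the small side of the original local cut is the non-terminal side $P_* \setminus U$: naively flipping a disconnected sub-component does not automatically fit under the $3\alpha$ cap, so one must go through the $\phi$-expander property of $G[P]$ on an auxiliary $G[P]$-cut and then pick between two candidate replacement cuts depending on which of the two pieces has small $G[P]$-volume. The slack factor of $3$ in ``$3\alpha$'' is exactly what is needed to absorb the contribution of the $t_P$-star (every terminal's degree grows by $1$ and $t_P$'s degree equals the number of terminals in $P$) when translating expander-style volume bounds from $G[P]$ to $G_*[P_*]$.
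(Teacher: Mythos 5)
Your proof is correct, and its skeleton is the same as the paper's: assume a side of the cut induces a disconnected subgraph, peel off a component $X$ that contains no terminals (possible because the $t_P$-star forces all terminals into one component), and observe that the edges of $C$ not incident to $X$ form a proper subset of $C$ that is still the cut-set of a non-terminal cut, contradicting minimality. Where you diverge is in verifying the volume constraint of Definition~\ref{def:local_cut} for the replacement cut. The paper handles this once and for all by first proving that $G_*[P_*]$ is a $\phi/3$-expander (bounding the star edges $F_2$ by $\vol_{G[P]}(X)$), after which \emph{any} candidate cut of size at most $c$ with all terminals on one side automatically has a side of volume at most $3c/\phi\leq 3\alpha$, so no case analysis is needed. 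You instead keep the expansion of $G[P]$ and do the bookkeeping by hand: you apply $\phi$-expansion of $G[P]$ to an auxiliary cut ($(U_2,P\setminus U_2)$ or $(W_1,P\setminus W_1)$), branch on which side has $G[P]$-volume at most $c/\phi\leq\alpha$, and in the awkward branch switch to a different replacement cut ($(T,P_*\setminus T)$ or $(U\cup W_1,\text{rest})$), translating volumes from $G[P]$ to $G_*[P_*]$ using the factor-$3$ slack for the $t_P$-star. Both routes are sound; the paper's is cleaner and reusable (the $\phi/3$-expansion of the auxiliary clusters is also what makes Lemma~\ref{lem:special-terminal} and the query argument smooth), while yours avoids proving expansion of $G_*[P_*]$ at the cost of a two-level case analysis. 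One small caveat, shared equally by the paper's own accounting: bounding the number of terminals in $P$ by $\vol_{G[P]}$ of the terminal set implicitly assumes every terminal has at least one edge inside $G[P]$, which is fine for connected expander clusters with at least two vertices and degenerate only for singleton clusters, where no local cut exists anyway.
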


\begin{proof}


Let $P$ be the cluster in $\mathcal P$ that corresponds to $P_*\in \mathcal P_*$. Since $\mathcal P$ is a $\phi$-expander decomposition, $G[P]$ is a $\phi$-expander. By construction, $G_*[P_*]$ is obtained by adding a new terminal vertex $t_P$ to $G[P]$ and connecting $t_P$ to existing terminal vertices. For any bipartition $(X_*, Y_*)$ of $P_*$, such that neither $X_*$ nor $Y_*$ is $\set{t_P}$ (we can assume $t_P\in X_*$ without loss of generality), there exists a corresponding bipartition $(X  = X_*\cap P, Y = Y_*\cap P)$ of $P$. Since $G[P]$ is a $\phi$-expander, 
$$\frac{\abs{\partial_{G[P]}(X)}}{\min (\vol_{G[P]}(X), \vol_{G[P]}(Y))}\geq \phi.$$
Suppose among all edges added to construct $G_*[P_*]$ from $G[P]$, $F_1$ is the set of edges that cross bipartition $(X_*, Y_*)$ and $F_2$ is the set of edges that do not cross the bipartition, then
$$\frac{\abs{\partial_{G_*[P_*]}(X_*)}}{\min (\vol_{G_*[P_*]}(X_*), \vol_{G_*[P_*]}(Y_*))} =\frac{\abs{\partial_{G[P]}(X)} + \abs{F_1}}{\min (2\cdot \abs{F_2 }+\vol_{G[P]}(X), \vol_{G[P]}(Y)) }.$$
Since $F_2$ consists only of edges connecting $t_P$ and terminals in $X$, $\abs{F_2}$ is at most $\vol_{G[P]}(X)$.
Therefore, 
\begin{align*}
\frac{\abs{\partial_{G[P]}(X)} + \abs{F_1}}{\min (2\cdot \abs{F_2 }+\vol_{G[P]}(X), \vol_{G[P]}(Y)) } &\geq \frac{\abs{\partial_{G[P]}(X)}+ 0}{\min (3\cdot \vol_{G[P]}(X), \vol_{G[P]}(Y)) }\\&\geq \frac{\abs{\partial_{G[P]}(X)}}{3\cdot \min ( \vol_{G[P]}(X), \vol_{G[P]}(Y)) }\\ & \geq  \phi/3.
\end{align*}

If $X_*$ or $Y_*$ is $\set {t_P}$, then $$\frac{\abs{\partial_{G_*[P_*]}(X_*)}}{\min (\vol_{G_*[P_*]}(X_*), \vol_{G_*[P_*]}(Y_*))}=1 \geq \phi.$$


Hence $G_*[P_*]$ is a $\phi/3$-expander. Since $C$ forms a minimum cut, no proper subset of $C$ can form a local cut in $G_*[P_*]$ (otherwise because $G_*[P_*]$ is a $\phi / 3$-expander, any cut with terminals on one side has a side of size at most $3\alpha$, resulting in a smaller $(3\alpha, c)$-local cut).

We look at the induced subgraphs after removing $C$ from $G_*[P_*]$. Note that since all terminals including $t_P$ are on the same side of the cut and $t_p$ connects to all the other terminals, there can be only one connected component containing all terminals. 

Consider the two subgraphs induced by the two sides of the cut are disjoint. 
Suppose at least one of these two subgraphs contains more than one connected component. Let $X$ be one of these connected components. Since $G_*[P_*]$ is connected, $X$ is connected to the other side of the cut by edges in $C$.
Removing edges connecting $X$ and the other side will result in a proper subset of $C$ that still forms a local cut (equivalent to moving $X$ to the other side; because all terminals are in the same connected component, this resulting cut is still local).  We obtained a contradiction. Thus, both subgraphs induced by the two sides of the cut are connected.
\end{proof}

\vspace{.15cm}\noindent \textbf{One-level Sparsifier Definition \ \ }
Since a global minimum cut is either a terminal cut or a local cut, in order to maintain a global minimum cut, it suffices to maintain a minimum terminal cut and a minimum local cut. Therefore, a one-level global minimum cut data structure has an internal one-level terminal edge connectivity data structure (to keep track of terminal cuts),
and a priority queue $\Lambda_{G_*}$ to store local cuts in $G_*$ (to keep track of local cuts in $G$, as per Lemma \ref{lem:special-terminal}). Here $G_*$ is the auxiliary graph of $G$ with respect to $\mathcal P$. We rigorously define the one level sparsifier data structure below.
\begin{Definition}[One-level Sparsifier]\label{def:lambda}
 Given parameters $c, \phi$, $c'\geq c$, $\gamma\geq c'$,  and $\alpha\geq \ceil{c/\phi}$, a one-level global minimum cut data structure w.r.t. these parameters consists of two parts:
\begin{enumerate}
    \item[(1)] A one-level terminal edge connectivity data structure  $\sparsifier_{c', \gamma}(G, \calP, CC, F)$ (Section \ref{sec:js20_review}), which contains the input graph $G = (V, E)$, a vertex partition $\calP$ of $G$ given by a $\phi$-expander decomposition, and a $c'$-cut containment set $CC$ with respect to $G$ and $\calP$.

    \item[(2)] Assuming $G_*$ is the auxiliary graph of $G$ with respect to $\mathcal P$, a priority queue $\Lambda_{G_*}$ that contains all vertex sets $U$ such that
    \begin{itemize}
    \item There exists a $P\in\mathcal P$ such that $U\subset P_*$, where $P_*$ is the corresponding cluster of $P$ in $G_*$
    \item $U$ induces a connected subgraph in $G_*$
    \item  $(U, P_*\cut U)$ is a $(3\alpha, c)$-local cut in $G_*[P_*]$. The terminal set is $(\endpoints(\partial_G(\mathcal P))\cap P) \cup\set{t_P}$.
\end{itemize}
The elements in $\Lambda_{G_*}$ are ordered by the size of each cut $(U, P_*\cut U)$.



\end{enumerate}
\end{Definition}

 Note that, by storing a vertex subset $U$ of $G_*$, we are essentially storing the cut $(U, P_*\cut U)$. Since $(U, P_*\cut U)$ is a cut in $G_*[P_*]$ with all boundary vertices on the same side, the cut-set of $(U, P_*\cut U)$ induces a cut $(U, V_*\cut U)$ in $G_*$, with the same size.  
 By Lemma \ref{lem:special-terminal-simple-cut}, all minimum $(3\alpha, c)$-local cuts in $G_*[P_*]$ have two connected sides. 
 Therefore, if $\Lambda_{G_*}$ contains all \textit{connected} $U$ such that $(U, P_*\cut U)$ is a $(3\alpha, c)$-local cut, $\Lambda_{G_*}$ contains all possible $U$ such that $(U, P_*\cut U)$ is a minimum $(3\alpha, c)$-local cut.
 We can see $\Lambda_{G_*}$ as a priority queue that allows us to keep track of minimum $(3\alpha, c)$-local cuts in $G_*$.

\begin{Remark}
In the definition of the one-level data structure, we actually store $c'$-cuts in the terminal sparsifier for $c'\geq c$. This is because in the terminal sparsifier update algorithm (Lemma \ref{lem:old_main}), 
the edge connectivity parameter for the sparsifier reduces after applying an update sequence to the input graph. 
However, this does not affect the query result -- if the minimum terminal cut has a size at most $c$, it is also a minimum $c'$-terminal cut. We will refer to these cuts as ``minimum $c$-cuts in $\sparsifier_{c', \gamma}(G, \mathcal P, \cc, F)$''.
\end{Remark}

\begin{Remark}
Because $G_*[P_*]$ is an expander, $\alpha$ can be set to any number greater than $c/\phi$. This does not change the vertex sets stored in $\Lambda_{G_*}$. In fact, in Section \ref{sec:one-level-init}, we will set $\alpha$ to be $\ceil{c'/\phi}$ for a simpler complexity expression.
\end{Remark}

 Below (Lemma \ref{lem:one-level-query}) we show that the highest priority item in the priority queue gives us a minimum local cut in $G_*$, and thus a minimum local cut in $G$ (by Lemma \ref{lem:special-terminal}). The global minimum cut of $G$ can be restored by comparing the size of the minimum terminal cut found in the terminal cut edge connectivity data structure and the minimum local cut found in $\Lambda_{G_*}$.
 
 \begin{lemma}\label{lem:one-level-query}
Let $\sparsifier_{c', \gamma}(G, \mathcal P, CC, F)$ be a one-level terminal edge connectivity sparsifier data structure with parameters $c'$, $\gamma$ such that $c'\geq c$ and $\gamma \geq c'$. 
Let $\Lambda_{G_*}$ be the priority queue defined in Definition \ref{def:lambda}.
Let $C_1$ be the cut in $G$ induced by the cut-set of a minimum $c$-cut in $\sparsifier_{c', \gamma}(G, \mathcal P, CC, F)$ (if one such cut exists in $\sparsifier_{c', \gamma}(G, \mathcal P, CC, F)$); otherwise $C_1 = \perp$.



Let $U_2$ be a vertex set in $\Lambda_{G_*}$ with highest priority. If $U_2$ exists, let $P_*$ be the cluster in $G_*$ that contains $U_2$ and $P$ be the cluster in $\mathcal P$ corresponding to $P_*$. We use $C_2$ to denote cut $(U_2\cap V, V_*\cut U_2\cap V)$. If $U_2$ does not exist, $C_2 = \perp$.

If $ C_1$ and $C_2$ are both $\perp$, then $G$ does not have a $c$-cut. Otherwise, the smaller one between $ C_1$ and $C_2$ is a minimum $c$-cut of $G$.
\end{lemma}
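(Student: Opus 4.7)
The plan is to combine Lemma \ref{lem:min_cut_char}, Lemma \ref{lem:sparsifier_terminal_cut}, and Lemma \ref{lem:special-terminal}. First I would verify that whenever $C_1 \neq \perp$ it is a $c$-cut of $G$: by the first part of Lemma \ref{lem:sparsifier_terminal_cut}, the cut-set of any $c$-cut of $\sparsifier_{c', \gamma}(G, \calP, CC, F)$ is the cut-set of a cut in $G$ of the same size, so $C_1$ is a $c$-cut of $G$. Next I would verify the analogous property for $C_2$. By Definition \ref{def:lambda}, any $U$ stored in $\Lambda_{G_*}$ comes with some $P \in \calP$ such that $(U, P_* \setminus U)$ is a $(3\alpha, c)$-local cut of $G_*[P_*]$ with all terminals (including $t_P$) on one side. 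The edges added to $G[P]$ when constructing $G_*[P_*]$ have both endpoints among the terminals of $P$, hence they never appear in the cut-set of such a cut. Therefore the cut $C_2 = (U \cap V, V \setminus (U \cap V))$ in $G$ has the same cut-set size as $(U, P_* \setminus U)$, and in particular $C_2$ is a $c$-cut of $G$.

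The heart of the argument is the converse: if $G$ has a $c$-cut, then at least one of $C_1, C_2$ is non-$\perp$ and its size equals the minimum $c$-cut size of $G$. For this I invoke Lemma \ref{lem:min_cut_char}: any minimum $c$-cut of $G$ is either (case 1) a cut whose cut-set is the cut-set of a minimum cut of $\sparsifier_{c', \gamma}(G, \calP, CC, F)$, or (case 2) a minimum $(\alpha, c)$-local cut in some $G[P]$ with respect to $\endpoints(\partial_G(\calP)) \cap P$, for any $\alpha \geq c/\phi$. In case 1 the sparsifier admits a $c$-cut, so $C_1 \neq \perp$ and $|C_1|$ equals the minimum $c$-cut size of $G$. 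In case 2, the second part of Lemma \ref{lem:special-terminal} lifts the local cut to a minimum $(3\alpha, c)$-local cut in $G_*[P_*]$ of the same size, and by Lemma \ref{lem:special-terminal-simple-cut} both sides of this cut in $G_*[P_*]$ are connected, so the vertex set corresponding to either side satisfies the connectedness requirement in Definition \ref{def:lambda} and is therefore present in $\Lambda_{G_*}$. Thus $U_2 \neq \perp$ and the highest-priority element of $\Lambda_{G_*}$ has size at most the minimum $c$-cut size of $G$; combined with the preceding paragraph, $|C_2|$ equals this minimum.

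Putting these pieces together, if both $C_1$ and $C_2$ are $\perp$, then neither case of Lemma \ref{lem:min_cut_char} can occur, so $G$ has no $c$-cut. Otherwise, both of the first two paragraphs apply to whichever of $C_1, C_2$ is non-$\perp$: the smaller one is a $c$-cut in $G$ whose size is at most the minimum $c$-cut size of $G$, and hence it is itself a minimum $c$-cut of $G$. The main delicate point I expect is checking that the highest-priority element of $\Lambda_{G_*}$ yields the minimum local cut size across \emph{all} clusters $P_* \in \calP_*$ simultaneously; this follows because $\Lambda_{G_*}$ pools together, over all $P \in \calP$, every connected $U \subseteq P_*$ that realizes a $(3\alpha, c)$-local cut of $G_*[P_*]$, ordered by cut size, and Lemma \ref{lem:special-terminal-simple-cut} guarantees that at least one side of any minimum local cut is connected and hence is represented in the queue.
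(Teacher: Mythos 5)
Your proposal is correct and follows essentially the same route as the paper's own proof: it combines Lemma \ref{lem:min_cut_char} (every minimum $c$-cut is either captured by the sparsifier or is a local cut), Lemma \ref{lem:sparsifier_terminal_cut} for transferring $C_1$ back to $G$, and Lemmas \ref{lem:special-terminal} and \ref{lem:special-terminal-simple-cut} to show that the minimum local cut is represented by a connected side stored in $\Lambda_{G_*}$. The only difference is that you spell out the size bookkeeping (each candidate is a genuine $c$-cut of $G$, hence of size at least the minimum, and at most the minimum in the relevant case) more explicitly than the paper's terser argument, which is fine.
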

\begin{proof}
By Lemma~\ref{lem:min_cut_char}, any minimum cut of $G$ is either a terminal cut with $\endpoints(\partial_G(\calP))$ as terminals, or a local cut within a $G[P]$ for some $P \in \calP$ with respect to $\endpoints(\partial_G(\mathcal P))\cap P$, $\alpha \geq \ceil{c / \phi}$. By Lemma \ref{lem:sparsifier_terminal_cut}, $C_1$ is a minimum terminal cut in $G$. 

By definition of $\Lambda_{G_*}$, it contains all $(3\alpha, c)$-local cuts $(U, P_*\cut U)$ in $G_*[P_*]$ such that $U$ induces a connected subgraph. 
By Lemma \ref{lem:special-terminal-simple-cut},
the minimum $(3\alpha, c)$-local cut in $G_*$ must have a side of volume at most $3\alpha$ that is also connected. Then by Lemma \ref{lem:special-terminal}, $(U_2\cap P, P\cut U_2\cap P)$ is a minimum local cut in $G[P]$. Therefore, $(U_2\cap V, V\cut U_2\cap V)$ is a minimum local cut in $G[\mathcal P]$.
\end{proof}


In summary, a one-level global minimum cut sparsifier data structure is a tuple $(G, \mathcal P, CC, F, \Lambda_{G_*})$, where $(G, \mathcal P, CC, F)$ is a one-level terminal edge connectivity sparsifier, and $\Lambda_{G_*}$ is a priority queue of vertex sets.

\vspace{.15cm}\noindent \textbf{Multi-level Sparsifier Definition \ \ }
A multi-level global minimum cut data structure 
    $$\set{(G^{(j)}, \mathcal P^{(j)}, CC^{(j)}, F^{(j)},\Lambda_{G_*^{(j)}}) }_{j=1}^{\ell}$$
 contains a set of $\ell$ one-level  global minimum cut data structures such that conditions (c1), (c2) and (c3) in Section~\ref{sec:multi-s-t-ds} hold.

For a multi-level global minimum cut data structure, if we collect all one-level terminal edge connectivity data structures inside each one-level global minimum cut data structure for all the levels, 
these one-level terminal edge connectivity data structures together form a multi-level terminal edge connectivity data structure.
Thus, the multi-level minimum cut data structure can be viewed as a multi-level terminal edge connectivity data structure with additional priority queues on each level. 

Throughout this section, we assume that the parameters $c, \phi, c', \gamma$, and $ \alpha$ defining the one-level and multi-level global minimum cut data structure are set such that $c'\geq c$ always holds, $\gamma$ is always greater than $c'$, and $\alpha\geq \ceil{c/\phi}$.

\section{Initialization and Update Algorithms for the Global Minimum Cut Data Structure}\label{sec:algorithms}
With the new global minimum cut data structure defined, we proceed with presenting the algorithms to initialize and update it. In Sections \ref{sec:one-level-algs} and \ref{sec:multi-level-algs}, we will discuss the algorithms for one-level data structure and multi-level data structure, respectively. In Section \ref{sec:main-proof}, we will use these algorithms to prove Theorem \ref{thm:main}.
\subsection{One-level Initialization and Update Algorithms}\label{sec:one-level-algs}

\subsubsection{Subroutine: Cut Enumeration}
We begin with a subroutine to find all cuts in which a given vertex belongs to a side with bounded volume.
In Section~\ref{sec:one-level-init}, we will use this subroutine to find all the local cuts.

We define a subroutine called \texttt{Enumerate-Cuts}, which is invoked in the one-level global minimum cut data structure initialization and update algorithms. 
\texttt{Enumerate-Cuts} takes arguments $G, \mathcal P, G_*, v, c, \alpha$, where $\mathcal P$ is a partition of $G$, $G_* = (V_*, E_*)$ is the auxiliary graph of $G$ w.r.t. $\mathcal P$ (Definition \ref{def:aux-graph}), 
$v$ is a vertex in $G_*$, $c$ and $\alpha$ are positive integers. The returned value is a list of subsets of $V_*$ that can represent a cut in $G_*$. We will later use these subsets to construct $\Lambda_{G_*}$. 

\SetKwFunction{FLocalEnum}{Enumerate-Cuts}
\SetKwFunction{FLocal}{Minimum-Local-Cut}
\SetKwFunction{FCutRecursive}{Cut-Enumeration-Recursive}
\SetKwFunction{FCutVerify}{Atomic-Cut-Verification}

Algorithm \ref{alg:enum-new} is the detailed implementation of \texttt{Enumerate-Cuts}.

\begin{algorithm}[htb]
\caption{Cut Enumeration}\label{alg:enum-new}
\KwIn{$G_*=(V_*, E_*)$\newline 
$v$: a vertex in $G_*$, \newline $ \alpha, c$: positive integers
}
\KwOut{$\mathcal{C}$: a list of subsets of $V_*$} 
\SetKwFunction{FLocalEnum}{Enumerate-Cuts}
\SetKwFunction{FLocal}{Minimum-Local-Cut}
\SetKwFunction{FCutRecursive}{Cut-Enumeration-Recursive}
\SetKwFunction{FCutVerify}{Atomic-Cut-Verification}
\SetKwProg{Fn}{Function}{:}{}
\Fn{\FLocalEnum{$G_*, v, \alpha, c$}} {
    $T_{\mathcal P}\gets$ the set of special terminals added when constructing $G_*$, as per Definition \ref{def:aux-graph}\;
    $\mathcal{C} \gets \emptyset$\;
    \FCutRecursive($G_*, v, \alpha, c, \emptyset, \mathcal{C}$)\;
    \Return $\mathcal{C}$
}
\Fn{\FCutRecursive{$G_*, v, \alpha, c, F,  \mathcal{C}, T_{\mathcal P}$}} {
    \lIf{$c < 0$} {\Return}
    Run a DFS from $v$ in $G_*$ until the DFS stops or $\vol_{G}(U) > 3\alpha$, where $U$ is the set of visited vertices\label{alg:new-enum:line7}\; 
    \If{
    $F$ is exactly the set of edges between $U$ and $V\cut U$ in $G_*$, $\sum_{(u, w, \beta)\in F}\beta \leq c$, and $\vol_{G}(U)+ \sum_{(u, w, \beta)\in F}\beta\leq 3\alpha$} {\label{alg:new-enum:line8}
        $\mathcal{C} \gets \mathcal{C} \cup \{U\}$\;
        \Return
    }

    \For{every edge $(u, w, \beta)$ of $G_*$ such that $u, w\in U$ and $u,v\not\in T_{\mathcal P}$}{\label{alg:new-enum:line12}
        $F \gets F\cup \set{(u, w, \beta)}$\;
        Remove edge $(u, w, \beta)$ from $G[\mathcal P]$

            \FCutRecursive{$G_*,  v, \alpha,c-\beta,  F,  \mathcal{C}$}\;
        Add edge $(u, w, \beta)$ back to $G[\mathcal P]$\;
    }
}
\end{algorithm}
\begin{lemma}\label{lem:enum-new}
Let $G = (V, E)$ be a graph, $\mathcal P$ be a partition of $G$, $G_*$ be the auxiliary graph of $G$ w.r.t. $\mathcal P$.
For a given vertex $v$ in $G$ and two integers $\alpha, c > 0$,
Algorithm \ref{alg:enum-new} \FLocalEnum
finds all subsets of $V_*$, denoted by $U$, such that 
\begin{itemize}
    \item There exists a $P\in\mathcal P$ such that the corresponding $P_*$ in $G_*$ (Definition \ref{def:aux-graph}) contains $U$
    \item $v\in U$
    \item  $U$ induces a connected subgraph in  $G_*$
    \item $(U, P_*\cut U)$ is a $(3\alpha, c)$-local cut in $G_*[P_*]$ with respect to terminal set $\endpoints(\partial_G(\mathcal P))\cap P \cup\set{t_P}$
\end{itemize}

The running time of the algorithm is $O(\alpha)^{c}$.
\end{lemma}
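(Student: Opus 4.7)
The plan is to prove both directions of correctness (every $U$ returned lies in the target class, and every member of the target class is returned) and then analyze the running time by bounding the recursion tree.

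For \textbf{soundness}, I would read off all four conditions directly from the check at Line~\ref{alg:new-enum:line8}. The condition ``$F$ is exactly the set of edges between $U$ and $V_* \setminus U$ in $G_*$'' together with $\sum_{(u,w,\beta)\in F}\beta \leq c$ certifies that $(U, V_* \setminus U)$ is a $c$-cut in $G_*$; the volume inequality on the same line and the DFS truncation at Line~\ref{alg:new-enum:line7} give $\vol_G(U) \leq 3\alpha$; $U$ is a DFS-visited set started at $v$, so it is connected and contains $v$; containment in a single $P_*$ and all terminals on one side follow from the fact that the branching at Line~\ref{alg:new-enum:line12} only ever places internal edges of the current DFS component into $F$, so any inter-cluster edge of $G_*$ incident to $U$ cannot lie in $F$, forcing $U \subseteq P_*$ for the cluster containing $v$.

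For \textbf{completeness}, I would induct on $|C^*|$, where $C^*$ is the cut-set of a target valid set $U^*$. The invariant is: every recursive call entered with some $F \subseteq C^*$ and with $U^*$ lying inside the DFS-visited set $U_0$ eventually enumerates $U^*$. In the base case $F = C^*$, the DFS from $v$ in $G_* \setminus F$ visits exactly the connected component of $v$, which is $U^*$ itself; since $\vol_G(U^*) \leq 3\alpha$, the DFS terminates naturally, $F$ equals the boundary of $U_0 = U^*$, and Line~\ref{alg:new-enum:line8} fires. For the inductive step with $F \subsetneq C^*$, any $e \in C^* \setminus F$ has one endpoint in $U^*$ and one in $U_0 \setminus U^*$ (since $U_0 \supseteq U^*$ and the DFS must have reached across $e$), so both endpoints of $e$ lie in $U_0$; therefore $e$ is an eligible branching choice at Line~\ref{alg:new-enum:line12}, and in that branch $F \cup \{e\} \subseteq C^*$ with the remaining budget reduced by at least $1$.

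For the \textbf{running time}, the recursion depth is bounded by $c$ because every recursive call strictly decreases the remaining cut budget by $\beta \geq 1$. At each recursion node the branching factor is at most the number of edges inside $U_0$, which is bounded by $\vol_G(U_0) \leq 3\alpha = O(\alpha)$. Per-node work is dominated by the DFS, which is truncated at volume $3\alpha$ and therefore costs $O(\alpha)$. The recurrence $T(c) \leq O(\alpha) + O(\alpha)\cdot T(c-1)$ solves to $T(c) = O(\alpha)^c$.

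The main obstacle is the inductive step: ensuring that at the moment the algorithm branches, the DFS-visited set $U_0$ really contains $U^*$ as a subset so that each edge of $C^* \setminus F$ is actually an internal edge eligible for branching. This needs care because the DFS truncates as soon as the volume exceeds $3\alpha$, and an unfortunate exploration order could, in principle, use up the volume budget on detours outside $U^*$. The cleanest way I see to close this is to fix a canonical exploration order (e.g.\ visiting neighbours by a stable ordering), argue that in a suitable branch the edges added to $F$ progressively prune those detours, and then invoke the invariant that once $F$ contains all edges by which the DFS leaves $U^*$, the DFS stays inside $U^*$ and completes (using $\vol_G(U^*) \leq 3\alpha$). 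Handling this book-keeping carefully is the only non-routine part of the argument.
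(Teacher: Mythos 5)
Your soundness check and running-time analysis are fine and coincide with the paper's. The problem is in the completeness direction, and it is exactly at the point you flag yourself: your inductive invariant asserts that the truncated DFS-visited set $U_0$ contains $U^*$, and your inductive step uses this containment to claim that \emph{every} edge of $C^*\cut F$ has both endpoints in $U_0$ and is therefore an eligible branching choice at Line~\ref{alg:new-enum:line12}. Neither claim is justified: the DFS at Line~\ref{alg:new-enum:line7} is cut off as soon as the visited volume exceeds $3\alpha$, and nothing prevents it from spending that budget outside $U^*$ before covering $U^*$, so $U^*\subset U_0$ can simply fail. Your proposed repair (a canonical exploration order plus an argument that branching ``progressively prunes detours'') is not a proof --- it is precisely the missing step restated, and it is not clear how a fixed visiting order would force the truncated DFS to exhaust $U^*$ at intermediate stages of the recursion.

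The paper closes this differently, and more simply: it never needs $U^*\subset U_0$, only that at each recursion node with $F\subsetneq C^*$ \emph{some single} edge of $C^*\cut F$ has both endpoints visited. This follows from a volume comparison. The visited set is connected and contains $v\in U^*$; if the DFS is truncated, its volume exceeds $3\alpha\geq \vol_{G_*}(U^*)\geq \vol_{G_*\cut F}(U^*)$, so it must contain a vertex outside $U^*$, and the first edge by which it leaves $U^*$ lies in $C^*\cut F$ with both endpoints visited; if instead the DFS halts naturally, it has visited the whole component of $v$ in the current graph, which strictly contains $U^*$ whenever $F\subsetneq C^*$ (since $U^*$ is connected and some crossing edge remains), again yielding such an edge. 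Branching on that one edge keeps $F\subset C^*$ with the budget reduced by its multiplicity, and induction on $\abs{C^*\cut F}$ reaches $F=C^*$, at which point the DFS stays inside $U^*$, completes within the volume bound, and the test at Line~\ref{alg:new-enum:line8} fires. (One also needs the small observation, absent from your sketch, that no edge of $C^*$ is incident to a special terminal --- all terminals including $t_P$ lie on one side of a local cut --- so the restriction $u,w\notin T_{\mathcal P}$ at Line~\ref{alg:new-enum:line12} never blocks the needed branch.) So the gap you identified is real, and the fix is not the bookkeeping you propose but the weaker one-crossing-edge argument above.
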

\begin{proof}
We check the requirements one by one.

\begin{itemize}
    \item Since the DFS is conducted on $G_*$ and $U$ is the set of visited vertices, by the definition of $P_*$  (Definition \ref{def:aux-graph}), there exists a $P_*$ such that $U\subset P_*$.
    \item Since the DFS (line \ref{alg:new-enum:line7}) starts from $v$, each $U$ contains $v$.
    \item Since each $U$ is obtained from a DFS, $G_*[U]$ is connected.
    \item \begin{itemize}
        \item Since the DFS stops when the visited volume reaches $3\alpha$, the volume of $U$ in $G_*$ is at most $3\alpha$.
        \item Since the number of edges between $U$ and $V_*\cut U$ is at most $c$ (line \ref{alg:new-enum:line8}), the size of $(U, V_*\cut U)$ is at most $c$ in the returned pairs.
        \item Fix a vertex $t\in \endpoints(\partial_G(\mathcal P))\cap P$. $(t, t_P)$ is not in  the cut-set of $(U, V_*\cut U)$ (line \ref{alg:new-enum:line12}), $t$ and $t_P$ are on the same side of cut $(U, V_*\cut U)$. Therefore, all terminals in $ \endpoints(\partial_G(\mathcal P))\cap P\cup \set{t_P}$ are on the same side of the cut.
    \end{itemize}
\end{itemize}




To show that all $U$ satisfying the constraints are added to $\mathcal C$, let $U'$ be a set of vertices satisfying the desired conditions, $E'$ be the edge set between $U'$ and $V_*\cut U'$. We look at the recursion tree of \texttt{Cut-Enumeration-Recursive}. Since $U$ induces a connected subgraph in $G_*$, no proper subset of $E'$  form another cut in $G_*$.
Since $v\in U'$ and the volume of $U'$ in $G_*$ is at most $3\alpha$, by performing a DFS from $v$ until the visited volume is strictly greater than $3\alpha$, it is guaranteed that there exists an edge $e\in E'$ such that both endpoints of $e$ are visited. 

Suppose after a depth $i$ recursion, $i$ edges of $E'$ are added to $F$. $E'\cut F$ is the cut-set of cut $(U',V_*\cut U')$ in graph $G_*\cut F$. The algorithm performs another DFS from $v$ that stops when the volume of visited vertices is greater than $3\alpha$. Since $\vol_{G_*\cut F}(U')\leq \vol_{G_*}(U')\leq 3\alpha$, it is still guaranteed that there is at least one edge $e\in E'\cut F$ such that both endpoints of $e$ are visited. By induction, at the end of this chain of recursion, $F=\partial_{G_*}(U')$. Then, after one more recursion, $U'$ is added to $\mathcal C$.

Now we prove the time complexity. In each recursion, the DFS takes $O(\alpha)$ time. The conditions on line \ref{alg:new-enum:line8} can be checked by running another DFS and takes another $O(\alpha)$ time. Since the volume of the visited vertices is at most $\alpha$, at most $O(\alpha)$ edges are enumerated in line \ref{alg:new-enum:line12}. Therefore, the recursion tree is an $\alpha$-ary tree of depth at most $c$, and can be completed in time $O(\alpha)^{c}$. 
\end{proof}
\subsubsection{One-level Initialization}\label{sec:one-level-init}
We initialize the global minimum cut data structure by separately initializing the underlying terminal edge connectivity sparsifier and $\Lambda_{G_*}$.
\SetKwFunction{Fdeltainit}{Min-Cut-One-Level-Initialization}

\begin{algorithm}[htb]
\SetKw{Continue}{continue}
\caption{One-level Minimum Cut Data Structure Initialization}\label{alg:one-level-init-new}
\KwIn{$G$: a graph,\newline 
$c, \phi, c', \gamma, \alpha$: parameters.
}

\SetKwProg{Fn}{Function}{:}{}

\Fn{\Fdeltainit}
{
$(G,\mathcal P, CC, F)\gets \FTerminalInitialization(G, c', \phi, \gamma)$\;
\For{all $P\in\mathcal P$} {\label{alg:one-init-new-line4}
    Add vertex $t_P$ to $G$\;
    \For {each $t\in \endpoints(\partial_G(\mathcal P))\cap P$} {
        Add edge $(t_P, t)$ to $G$ \label{alg:one-init-new-line7}\; 
    }
}
$G_* = (V_*, E_*)\gets$ the resulting graph\;
    \For{all $v\in V_*$}{
        $\mathcal C\gets\texttt{Enumerate-Cuts}(G_*, v, \alpha, c)$\label{alg:one-init-new-line6}\;
        \For{$U\in \mathcal C$}{
                Insert $U$ into $\Lambda_{G_*}$ with priority $\abs{\partial_{G_*}(U)}$\hfill\tcp{$\abs{\partial_{G_*}(U)}$ can be computed and returned by \texttt{Enumerate-Cuts}. We do not need to find all edges in the cut-set here}
        }
        
    }
\Return $(G,\mathcal P, CC, F,  \Lambda_{G_*})$\;
}
\end{algorithm}

\begin{lemma}\label{lem:one-init-new}
For a graph $G$ with parameters $c,\phi, \gamma,\alpha$ satisfying $c'\geq c$, $\gamma\geq c'$, and $\alpha\geq \ceil{c'/\phi}$, Algorithm \ref{alg:one-level-init-new} computes a one-level global minimum cut data structure of $G$ w.r.t. parameters $c$, $\phi$, $c'$, $\gamma$, and $\alpha$ in time $m^{1+o(1)}O(\alpha)^{O(c')}$.
\end{lemma}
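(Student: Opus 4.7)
The plan is to verify the two halves of the data structure separately, leaning on previously established lemmas for each half. First I would argue correctness of the terminal edge connectivity sparsifier part $(G,\calP,CC,F)$ by directly invoking Lemma~\ref{lem:old_main_initialization} applied to \FTerminalInitialization with parameters $c',\phi,\gamma$; this gives a valid one-level terminal edge connectivity data structure with the required $\phi$-expander decomposition $\calP$ and $c'$-cut containment set $CC$.

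Next I would verify that the constructed $G_*$ matches Definition~\ref{def:aux-graph}: the inner loop on lines starting at line~\ref{alg:one-init-new-line4} introduces one special vertex $t_P$ per cluster and connects it to exactly the terminal vertices of that cluster. Then, to argue that the priority queue $\Lambda_{G_*}$ satisfies the conditions of Definition~\ref{def:lambda}, I would combine two observations. On one hand, Lemma~\ref{lem:enum-new} tells us that for a fixed vertex $v$, \texttt{Enumerate-Cuts}$(G_*,v,\alpha,c)$ returns exactly the subsets $U$ such that $U$ is contained in some $P_*$, $v\in U$, $U$ induces a connected subgraph of $G_*$, and $(U,P_*\setminus U)$ is a $(3\alpha,c)$-local cut in $G_*[P_*]$. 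On the other hand, because the outer loop ranges over all $v\in V_*$ and every valid $U$ from Definition~\ref{def:lambda} contains at least one vertex, the union of the outputs is exactly the desired collection of vertex sets. (Duplicates discovered through different starting vertices can be stored as duplicates in the queue without affecting the minimum query, or filtered with a hash on the canonical description of $U$.) Each $U$ is inserted with priority $|\partial_{G_*}(U)|$, which is its cut size, matching the ordering required by Definition~\ref{def:lambda}.

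For the running time, I would bound three contributions. The call to \FTerminalInitialization costs $m^{1+o(1)}(c')^{O(c')}/\mathrm{poly}(\phi)$ by Lemma~\ref{lem:old_main_initialization}. Constructing $G_*$ adds at most $|\partial_G(\calP)|\le m$ edges and $|\calP|\le n$ vertices, which is $O(m)$ work. The enumeration loop runs \texttt{Enumerate-Cuts} once per vertex of $V_*$, each call costing $O(\alpha)^c$ by Lemma~\ref{lem:enum-new}, for a total of $|V_*|\cdot O(\alpha)^c = O(m)\cdot O(\alpha)^c$; each insertion into the priority queue costs polylogarithmic time. Using the hypotheses $\alpha\ge\lceil c'/\phi\rceil$ and $c'\ge c$, we have $1/\mathrm{poly}(\phi) = O(\alpha)^{O(1)}$ and $(c')^{O(c')} = O(\alpha)^{O(c')}$, so the full cost collapses to $m^{1+o(1)}O(\alpha)^{O(c')}$.

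The main routine obstacle is simply bookkeeping: making sure that the enumeration returns each desired $U$ and does not spuriously miss any, and that the volume and size thresholds in \texttt{Enumerate-Cuts} line up with the $(3\alpha,c)$-local cut definition (in particular that inflating the terminal side by the edges to $t_P$ does not violate the volume bound of $3\alpha$, which is exactly why the factor $3$ appeared in Lemma~\ref{lem:special-terminal}). Both points are handled cleanly once we cite Lemmas~\ref{lem:enum-new} and~\ref{lem:special-terminal}, so no new combinatorial argument is required and the lemma reduces to composing previously established results.
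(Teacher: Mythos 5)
Your proposal is correct and follows essentially the same route as the paper: cite Lemma~\ref{lem:old_main_initialization} for the terminal sparsifier part, use Lemma~\ref{lem:enum-new} to show every valid $U$ (via an arbitrary starting vertex $v\in U$) and no invalid $U$ enters $\Lambda_{G_*}$, and sum the three cost contributions, absorbing $1/\mathrm{poly}(\phi)$ and $(c')^{O(c')}$ into $O(\alpha)^{O(c')}$ via $\alpha\geq\lceil c'/\phi\rceil$. No gaps; your extra remarks on duplicates and on the factor $3$ in the volume bound are harmless additions beyond what the paper's proof records.
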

\begin{proof} By Lemma \ref{lem:old_main_initialization}, \texttt{Edge-Connectivity-One-Level-Initialization} initializes the edge connectivity sparsifier correctly. We only need to show that after the execution of Algorithm \ref{alg:one-level-init-new},  $\Lambda_{G_*}$ is correctly initialized.

Recall the definition of $\Lambda_{G_*}$. It should contain exactly all subsets $U$ such that  
\begin{itemize}
    
    \item There exists a $P\in\mathcal P$ such that $U\subset P_*$, where $P_* = P\cup \set{t_P}$ 
    \item  $(U, P_*\cut U)$ is an $(3\alpha, c)$-local cut in $G_*[P_*]$  for some $\alpha\geq \ceil{c'/\phi}$. The terminal set is $\endpoints(\partial_G(\mathcal P))\cap P \cup\set{t_P}$.
\end{itemize}
Fix such a $U$ with corresponding $P_*$. Let $v$ be an arbitrary vertex in $U$. By Lemma \ref{lem:enum-new}, $U$ is in the list of subsets returned by \texttt{Enumerate-Cuts} (line \ref{alg:one-init-new-line6}).

For any $U$ that is not supposed to be in $\Lambda_{G_*}$. Lemma \ref{lem:enum-new} guarantees that $U$ would not be returned by \texttt{Enumerate-Cuts}.

For the time complexity, by Lemma \ref{lem:old_main_initialization}, \texttt{Edge-Connectivity-One-Level-Initialization} takes $m^{1+o(1)}c'^{O(c')}/\text{poly}(\phi)$ time. Constructing $G_*$ takes at most $O(m)$ time (line \ref{alg:one-init-new-line4} to \ref{alg:one-init-new-line7}). Each invocation of \texttt{Enumerate-Cuts} runs in $O(\alpha)^{c}$ time, and returns $O(\alpha)^{c}$ vertex subsets. Therefore, the total time needed to insert all subsets into $\Lambda_{G_*}$ is at most $n\cdot O(\alpha)^{c}\cdot O(c\log \alpha + \log n)$. 

Therefore, the overall time complexity is
\[m^{1+o(1)}c'^{O(c')}/\text{poly}(\phi) + n\cdot O(\alpha)^{c}\cdot O(c\log \alpha + \log n) = m^{1+o(1)} O(\alpha)^{O(c')}\]
\end{proof}
\subsubsection{One-level Update}\label{sec:one-level-update}
When updates occur, in addition to updating  the terminal cut sparsifier, we need to also update $\Lambda_{G_*}$. To achieve this, we remove and add some vertex sets  and make sure that after the update, $\Lambda_{G_*}$  still satisfies Definition \ref{def:lambda} for the updated graph.


To achieve this, we select a list of vertices $S$,  remove from $\Lambda_{G_*}$ all $U$  that contain at least one vertex in $S$. Then we run \texttt{Enumerate-Cuts}  from every vertex in $S$ and insert the returned vertex sets back to  $\Lambda_{G_*}$. We observe that it suffices to let $S$ contain only the following vertices:
\begin{enumerate}
    \item Terminals vertices that switch to non-terminals after the updates, or non-terminals vertices that switch to terminals after the update. These are also the vertices returned by the terminal cut sparsifier update algorithm \texttt{Edge-Connectivity-One-Level-Update} (Lemma \ref{lem:old_main}).
    \item Vertices that are endpoints of edges involved in the update sequence.
\end{enumerate}

We formally state and prove the observation above with the following lemma.

\begin{lemma}\label{lem:delta-update-new} Let $G$ be a graph, and $\Lambda_{G_*}$ be the priority queue in the one-level global minimum cut data structure of $G$ with parameters $c, \phi,c', \gamma$ and $\alpha$.
After an update sequence updates $G$ to $G'$, suppose an expander decomposition $\mathcal P$ of $G$ is updated to $\mathcal P'$ of $G'$ such that for each $P'\in \calP'$, $P'$ either contains only a single vertex or is a subset of some $P \in \calP$ such that $G[P'] = G'[P']$. 
Let $S$ be a list of vertices $v$ that satisfy one of the three following conditions:
\begin{itemize}
    \item $v$ is a terminal vertex before the update and a non-terminal after the update
    \item $v$ is a non-terminal vertex before the update, and a terminal after the update 
    \item $v$ is an endpoint of an edge involved in the update sequence. 
\end{itemize}
Let $\Lambda'_{G_*} = \set{U\in \Lambda_{G_*}: U\cap S\neq\emptyset}$, let $\Lambda''_{G_*}$ be the set of all vertex set $U'$ such that 
\begin{enumerate}[label=(\alph*)]
    \item $U'\cap S\neq \emptyset$
    \item There exists $P'\in \mathcal P'$, such that $U'\subset P'_*$, where $P'_*$ is the cluster corresponding to $P'$ in $\mathcal P'_*$, per Definition \ref{def:aux-graph}.
    \item $U'$ induces a connected subgraph of $G'_*$

    \item $(U', P_*'\cut U')$ is a $(3\alpha, c)$-local cut in $G'_*[P'_*]$ with respect to terminal set  $\endpoints(\partial_{G'}(\mathcal P'))\cup \set{t_{P'}}$
\end{enumerate}
 $\Lambda_{G'_*} = \Lambda_{G_*}\cut \Lambda'_{G_*}\cup \Lambda''_{G_*}$, ordered by the the size of each  $(U', P_*'\cut U')$, is a valid priority queue  in the one-level global minimum cut data structure of $G'$ w.r.t parameters $c, \phi,c', \gamma$ and $\alpha$.
\end{lemma}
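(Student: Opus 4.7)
The plan is to prove the claimed equality $\Lambda_{G'_*} = (\Lambda_{G_*} \setminus \Lambda'_{G_*}) \cup \Lambda''_{G_*}$ by double inclusion, anchored on a single invariance claim: for any vertex set $U$ with $U \cap S = \emptyset$, the set $U$ is a valid element of $\Lambda_{G_*}$ if and only if—after the natural identification of the old special terminal $t_P$ with the corresponding new special terminal $t_{P'}$ across the refinement—it is a valid element of $\Lambda_{G'_*}$.

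To establish the invariance, I would unpack the hypothesis $U \cap S = \emptyset$ using the three conditions defining $S$: no ordinary vertex in $U$ has changed terminal status, and no ordinary vertex in $U$ is an endpoint of any edge touched by the update sequence. The first consequence is that every edge of $G$ incident to a vertex of $U \cap V$ appears unchanged in $G'$. The second consequence is that the terminals among $U \cap V$ are identical in both settings. Combining these with the refinement assumption that $G[P'] = G'[P']$ for every non-singleton $P' \in \calP'$, I would then argue that the ordinary vertices of $U$ lie inside a single new sub-cluster $P' \in \calP'$, so that the correspondence $t_P \leftrightarrow t_{P'}$ is unambiguous. From here, the four defining conditions of Definition~\ref{def:lambda} carry over symmetrically: the containment $U \subset P'_*$ is direct, the connectivity of $U$ in $G'_*$ follows because the edges witnessing connectivity in $G_*[U]$ (both ordinary edges and edges to the special terminal) all have counterparts in $G'_*[U]$, and the cut-set, cut size, volume, and side-of-terminals conditions transfer verbatim since the neighborhoods of $U$ are unchanged.

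With the invariance in hand, both inclusions follow immediately. For $\supseteq$: any $U \in \Lambda''_{G_*}$ satisfies the $\Lambda_{G'_*}$ conditions by construction, while any $U \in \Lambda_{G_*} \setminus \Lambda'_{G_*}$ has $U \cap S = \emptyset$ and so satisfies them by invariance. For $\subseteq$: take $U'$ satisfying the $\Lambda_{G'_*}$ conditions; if $U' \cap S \neq \emptyset$ then $U' \in \Lambda''_{G_*}$ directly, and if $U' \cap S = \emptyset$ then by invariance $U' \in \Lambda_{G_*}$, with $U' \notin \Lambda'_{G_*}$ since it misses $S$. The priority ordering by cut size is preserved because the invariance also guarantees that the cut size of any preserved $U$ is the same in both auxiliary graphs.

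The main obstacle I anticipate is the careful handling of the special terminal vertices $t_P$ and $t_{P'}$ across the partition refinement, and specifically the sub-cluster containment step that justifies the identification $t_P \leftrightarrow t_{P'}$. This relies on a precise interaction between the definition of $S$ and the refinement structure: if $U$'s ordinary vertices were to span two different sub-clusters of some old $P \in \calP$, then a newly-created inter-sub-cluster boundary would have to force either a terminal status change or an edge update that shows up as an $S$-member touching $U$—and ruling out the case where all such ``witnesses'' sit outside $U$ requires exploiting the local cut structure (in particular, that a set $U$ containing $t_P$ must contain every terminal of $P$) together with how connectivity in $G_*[U]$ is routed through the special terminal.
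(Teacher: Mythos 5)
Your overall route is the same as the paper's: both argue the set equality by double inclusion, reduced to the single invariance claim that a stored set $U$ with $U\cap S=\emptyset$ is valid before the update if and only if it is valid after, via a containment $U\subset P'_*\subset P_*$. So you have correctly identified the structure of the argument and the place where all the work sits.

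However, the step you defer as the ``main obstacle'' --- that $U\cap S=\emptyset$ forces the ordinary vertices of $U$ into a single cluster of $\calP'$ --- is exactly the crux, and your sketch does not close it. The witness argument you outline does work when $U$ contains no terminals: then $G_*[U]=G[U]$, and any edge of $G[U]$ joining two different clusters of $\calP'$ would force its endpoints (which lie in $U$) either to be endpoints of updated edges or to switch from non-terminal to terminal, hence into $S$. But when $U$ contains $t_P$, and therefore all terminals of $P$, the connectivity of $G_*[U]$ may be routed entirely through $t_P$, so no such crossing edge inside $U$ need exist, and the extra fact you hoped to exploit (that such a $U$ contains every terminal of $P$) does not save the argument. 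Concretely, two terminals $u,v$ of $P$ can each keep an intercluster edge to another cluster (so they are terminals both before and after the update and never enter $S$), be untouched by the update sequence, and nevertheless land in different clusters $P'_1\ni u$ and $P'_2\ni v$ of $\calP'$, e.g.\ when the update deletes edges elsewhere in $P$ and the decomposition splits $P$ along a sparse cut all of whose affected vertices avoid $U$. A set such as $U=\{u,v,t_P\}$ then satisfies $U\cap S=\emptyset$ yet is contained in no $P'_*$ (and refers to a special terminal that no longer exists), so the invariance your plan relies on fails for terminal-containing sets and the four conditions cannot be ``carried over verbatim'' for them. A complete proof has to treat this case separately, using properties of how $\calP'$ and $S$ are produced by the update algorithm beyond the bare refinement hypothesis you invoke, or else confine the queue to non-terminal sides; I note that the paper's own proof asserts the containment $U\subset P'_*\subset P_*$ at this same point with no further justification, so you have located the genuine difficulty, but as written your proposal leaves it unresolved.
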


\begin{proof}

After comparing Definition \ref{def:lambda} to conditions (b), (c) and (d),  we essentially need to prove 
\begin{enumerate}
    \item For any $U' $ that should be in $\Lambda_{G'_*}$, if $U'\not\in\Lambda''_{G_*}$, then $U'\in\Lambda_{G_*}$  and $U'\not \in \Lambda'_{G_*}$.
    
    Fix  a $U'\in\Lambda_{G'_*}\cut \Lambda''_{G_*}$. It has to be true that $U'\cap S = \emptyset$.  In this case, there exists a $P'\in\mathcal P'$ and a $P\in\mathcal P$ such that $U'\subset P'_*\subset P_*$. $U'$ has the same volume in $G_*[P_*]$ and $G'_*[P'_*]$, and thus $U'\in \Lambda_{G_*}$. Since $U'\cap S = \emptyset$, $U'\not\in \Lambda'_{G_*}$.
    
    \item For  any $U''\in \Lambda_{G_*}$ that does not satisfy all of conditions (a) (b) (c) and (d), $U''\in \Lambda'_{G_*}$.
    
    Fix a $U''\in \Lambda_{G_*}$ that should not be in $\Lambda_{G'_*}$. We only need to show that $U''\cap S\neq\emptyset$. Assume to the contrary. There exists a $P'\in\mathcal P'$ and a $P\in\mathcal P$ such that $U''\subset P'_*\subset P_*$. Therefore, $U''$ induces the same subgraph in $G_*$ and $G'_*$ and should also be in a pair in $\Lambda_{G'_*}$. We reached a contradiction.
\end{enumerate}
This concludes the proof.
\end{proof}

\begin{Remark}
Since the update of $\Lambda_{G_*}$ does not involve the terminal cut sparsifier, $c'$ and $\gamma$ can be any number satisfying $c'\geq c$ and $\gamma\geq c'$. $c'$ and $\gamma$ do not have to be the same for the data structures before and after the update.
\end{Remark}

We have an update routine in Algorithm \ref{alg:one-up-new}:  

\begin{algorithm}[htb!]
\caption{Update One Level Minimum Cut Data Structure}\label{alg:one-up-new}
\KwIn{
$(G, \mathcal P, CC, F, \Lambda_{G_*})$: one-level minimum cut data structure, w.r.t. parameters $c, \phi, c', \gamma, \alpha$ \newline
$\updateseq$: update sequence for multigraph $G$
}
\KwOut{
$(G', \mathcal P', CC', F',\Lambda_{G'_*})$: one-level minimum cut data structure for the up-to-date graph,  w.r.t parameters $c, \phi/\factor, c, \gamma, \alpha$ \newline
$\updateseq'$: update sequence that should be applied to the next sparsifier}
\SetKwFunction{Fenuupdate}{Update-Enumeration}
\SetKwFunction{Fdeltaupdate}{Minimum-Cut-One-Level-Update}
\SetKwProg{Fn}{Function}{:}{}
\SetKw{Continue}{continue}

\Fn{\Fdeltaupdate}
{
    $(G', \calP', CC', F'), \updateseq', S\gets $ \FTerminalUpdatae{$(G, \calP, CC, T), \updateseq, c', \phi, \gamma$}\;
    Add all vertices involved in $\updateseq$ to $S$\;\label{alg:one-up-new:line5}
    \tcp{Update $G_*$}

    \For{all $P'\in\mathcal P'$ such that $P'\cap S\neq \emptyset$} {\label{alg:one-up-new:line4}
        $P\gets$ the cluster in $P$ such that $P'\subset P$, as per Lemma \ref{lem:old_main}\;
        Remove $t_P$ and all incident edges from $G_*$\;
        Add vertex $t_{P'}$ to $G_*$\;
        \For {all $t\in \endpoints(\partial_{G'} (\mathcal P'))\cap P'$}{
            Add edge $(t, t_{P'})$ to $G_*$\label{alg:one-up-new:line9}\;
        }
    }
    $G'_* = (V'_*, E'_*)\gets$ the resulting graph\label{alg:one-up-new:line10}\;
    \tcp{Update $\Lambda_{G_*}$}
    \For {all $v\in S$}{\label{alg:one-up-new:line15}
        $P'\gets$ the cluster in $\mathcal P'$ that contains $v$\;
        Remove from each $\Lambda_{G_*}$ all $U$ such that $v\in U$\label{alg:one-up-new:line18}\;
        $\mathcal C\gets $\texttt{Enumerate-Cuts}$(G'_*, v, \alpha, c )$\label{alg:one-up-new:line18-2}\;
        \For{each $U\in\mathcal C$} {\label{alg:one-up-new:line21}
            $j\gets $ the size of $(U, V\cut U)$\;
            Insert $U$ into $\Lambda_{G_*}$ with priority $j$\label{alg:one-up-new:line26}
        }
    }
    Change the name of  $\Lambda_{G_*}$ to $\Lambda_{G_*'}$

    \Return $(G', \mathcal P', CC', F',\Lambda_{G_*'})$, $\updateseq'$
}
\end{algorithm}

\SetKwFunction{Fenuupdate}{Update-Enumeration}
\SetKwFunction{Fdeltaupdate}{Minimum-Cut-One-Level-Update}

\begin{lemma}\label{lem:one-up-new}
Given a one-level minimum cut data structure $(G, \calP, CC, F, \Lambda_{G_*})$ with parameters $c, \phi, c',  \gamma, \alpha$ satisfying the following inequalities 
\[c' = c^2 + 2c,\gamma > c',  \alpha \geq (c^2 + 2c)\cdot 2^{O(\log^{1/3}n\log\log n)} / \phi, \text{ and } \phi \leq \frac{1 }{2^{\log^{3/4} n}},\]
and an update sequence $\updateseq$ of $G$ which consists of vertex/edge insertions/deletions, 
there is an algorithm 
\Fdeltaupdate of running time $O(|\updateseq|)\cdot O(\alpha)^{O(c^2)}$ which updates the data structure to $(G', \calP', CC', F',\Lambda_{G'_*})$ with parameters $c, \phi/2^{O(\log^{1/3}n\log\log n)}$, $c$, $\gamma, \alpha$, and outputs an update sequence $\updateseq'$ satisfying the following properties:
\begin{enumerate}
 
    \item $G'$ is the resulted graph of applying $\updateseq$ to $G$.
    \item $\calP'$ is a $\phi/2^{O(\log^{1/3}n\log\log n)}$-expander decomposition of $G'$. Furthermore, each cluster $P'\in\mathcal P'$ either contains only a single vertex or is a subset of a cluster $P\in \mathcal P$ such that $G'[P'] = G[P']$.
    \item $CC'$ is a $c$-cut containment set for $G'$ and $\calP'$. 
    \item $F'$ is an unweighted spanning forest of $G'\setminus CC'$.     
    
    \item The sparsifier maintained in the data structure is $\sparsifier_{c, \gamma}(G', \calP', CC', F')$.
    \item $\updateseq'$ updates $\sparsifier_{c', \gamma}(G, \calP, CC, F)$ to $\sparsifier_{c, \gamma}(G', \calP', CC', F')$. The length of $\updateseq'$ is at most $\abs{\updateseq}(10c)^{O(c)}$
    \item $\Lambda_{G'_*}$ is a priority queue containing every vertex set $U'$ such that
    \begin{itemize}
 \item There exists a $P'\in\mathcal P'$ such that $U'\subset P'_*$, where $P'_*$ is the corresponding cluster of $P'$ in $G'_* = (V_*', E_*')$
    \item $U'$ induces a connected subgraph in $G'_*$
    \item  $(U', P'_*\cut U')$ is an $(3\alpha, c)$-local cut in $G'_*[P'_*]$  for some $\alpha\geq \ceil{c/\phi}$. The terminal set is $\endpoints(\partial_{G'}(\mathcal P'))\cap P' \cup\set{t_{P'}}$.
\end{itemize}
The priority of vertex subsets in $\Lambda_{G'_*}$ is given by the size of $(U', V'_*\cut U')$ in increasing order.

\end{enumerate}
\end{lemma}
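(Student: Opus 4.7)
The plan is to implement Algorithm~\ref{alg:one-up-new} and show that its two phases -- updating the underlying terminal sparsifier and updating the priority queue $\Lambda_{G_*}$ -- together produce a valid one-level minimum cut data structure with the claimed parameters and running time.

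First, I would invoke \FTerminalUpdatae on $(G, \mathcal{P}, CC, F)$ with the sequence $\updateseq$. By Lemma~\ref{lem:old_main}, this returns $(G',\mathcal{P}',CC',F')$ satisfying properties (1)--(6) of our lemma verbatim, together with a vertex set $S_0$ of size $O(|\updateseq|)$ consisting of all vertices that switch between terminal and non-terminal status, and an update sequence $\updateseq'$ of length at most $|\updateseq|(10c)^{O(c)}$. I then define $S := S_0 \cup \endpoints(\updateseq)$, which has size $O(|\updateseq|)$.

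Second, I would update the auxiliary graph $G_*$ to $G'_*$. Only clusters $P'\in\mathcal{P}'$ that intersect $S$ can differ from their pre-update counterpart in $\mathcal{P}$ (since by property (2) each other cluster $P'$ is contained in some $P\in\mathcal{P}$ with $G'[P'] = G[P']$, and its boundary vertices are unchanged). So iterating over such $P'$ and rebuilding the star at $t_{P'}$ produces $G'_*$ correctly; this costs $O(|S|) \cdot (c/\phi)^{O(1)}$. Third, I would update the priority queue as prescribed in Lemma~\ref{lem:delta-update-new}: for each $v \in S$, delete from $\Lambda_{G_*}$ every $U$ containing $v$, then run \FLocalEnum$(G'_*, v, \alpha, c)$ (Lemma~\ref{lem:enum-new}) and insert each returned subset into the queue with priority equal to its cut size. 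By Lemma~\ref{lem:delta-update-new}, the resulting queue equals the queue prescribed by Definition~\ref{def:lambda} for $(G',\mathcal{P}')$, which gives property (7).

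For the running time, the call to \FTerminalUpdatae costs $O(|\updateseq|)\cdot(c/\phi)^{O(c^2)}$ by Lemma~\ref{lem:old_main}. For each $v\in S$, both the deletion phase and the enumeration phase produce at most $O(\alpha)^{c}$ subsets (by Lemma~\ref{lem:enum-new}), each handled in $O(c\log\alpha+\log n)$ priority-queue time. Summing over $|S|=O(|\updateseq|)$ vertices and absorbing logarithmic factors into the $O(\alpha)^{O(c^2)}$ term (permissible because $\alpha \ge c/\phi$ and $\phi\le 2^{-\log^{3/4}n}$ forces $\log n = \alpha^{o(1)}$) yields the claimed bound $O(|\updateseq|)\cdot O(\alpha)^{O(c^2)}$.

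The main obstacle is the correctness of step three, that is, verifying that the replacement rule of Lemma~\ref{lem:delta-update-new} applied only at vertices of $S$ really captures every subset that must be added to or removed from $\Lambda_{G_*}$. This reduces to the structural fact, already extracted in Lemma~\ref{lem:delta-update-new}, that a subset $U\subset P'_*$ disjoint from $S$ lives inside a cluster $P\in\mathcal{P}$ whose induced graph and terminal incidences are unchanged, so $U$ is a $(3\alpha,c)$-local cut in $G_*[P_*]$ if and only if it is a $(3\alpha,c)$-local cut in $G'_*[P'_*]$, with the same cut size and the same inducing edges. Once this invariance is in hand, the fact that both the pre- and post-update priority queues coincide outside of subsets touching $S$ finishes the argument.
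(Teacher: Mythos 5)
Your proposal is correct and follows essentially the same route as the paper's proof: properties (1)--(6) come verbatim from Lemma~\ref{lem:old_main}, the affected-vertex set $S$ is the returned switching set plus endpoints of $\updateseq$, only clusters meeting $S$ have their special-terminal stars rebuilt, and property (7) plus the $O(|\updateseq|)\cdot O(\alpha)^{O(c^2)}$ running time follow from Lemma~\ref{lem:delta-update-new} and Lemma~\ref{lem:enum-new} exactly as in Algorithm~\ref{alg:one-up-new}. No substantive differences.
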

\begin{proof}

Properties 1, 2, 3, 4, 5, 6 are proven in Lemma \ref{lem:old_main}. For a $P'\in\mathcal P'$, by 2, there is a $P\in\mathcal P$ such that $P'\subset P$. If $P'$ does not contain any terminal vertex that is a non-terminal before the update, or any non-terminal vertex that is a terminal before the update, or any vertex that is involved in the update sequence, then  $P'=P$ and $P'$ and $P$ have the same set of terminals. The $c$-cuts in $P'$ are the same as those in $P$. Therefore, in order to update $G_*$ and $\Lambda_{G_*}$, we only need to consider clusters $P'\in\mathcal P'$ that contains at least one vertex in $S$ (line \ref{alg:one-up-new:line4}). For these clusters, we remove the old special terminal and add new special terminals for each of them (line \ref{alg:one-up-new:line4} to \ref{alg:one-up-new:line9}). Since these are all of the clusters affected by the update, in line \ref{alg:one-up-new:line10}, $G'_*$ is same as the graph obtained by directly adding special terminals and edges to the up-to-date graph $G'[\mathcal P']$.

Since $\Lambda_{G'_*}$ is obtained by removing from $\Lambda_{G_*}$ all $U$ such that $U$ contains at least one vertex in $S$ and then adding all new $U'$ returned by running \texttt{Enumerate-Cuts} from vertices in $S$, by Lemma \ref{lem:delta-update-new}, the resulting data structures satisfy Definition \ref{def:lambda}.

For the time complexity, Lemma \ref{lem:old_main} shows that \texttt{Edge-Connectivity-One-Level-Update} runs in $O\wrap{\abs{\updateseq}(c/\phi)^{O(c^2)}}$ time (because $c'$ is $O(c^2)$), and that the vertex set $S$ returned by this subroutine contains at most $O(\abs{\updateseq})$ vertices . After adding vertices involved in $\updateseq$ to $S$ (line \ref{alg:one-up-new:line5}), $S$ still has $O(\abs{\updateseq})$ vertices. Therefore,  line \ref{alg:one-up-new:line9} is executed for at most $O(\abs{\updateseq})$ times. The overall time for updating $G_*$ is at most $O(\abs{\updateseq})$.

The block from line \ref{alg:one-up-new:line15} to line \ref{alg:one-up-new:line26}  is executed for at most $O(\abs{S})  =O(\abs{\updateseq})$ times.
Since for a fixed $v$, every $U$ that contains $v$ can be enumerated by running \texttt{Enumerate-Cuts} again on a copy of the data structure before the update, line \ref{alg:one-up-new:line18} takes $O(\alpha)^{c}\cdot O(c\log\alpha)$ time. Since $\alpha> c/\phi$ and $0< \phi< 1$, $O(\alpha)^{c}\cdot O(c\log\alpha) = O(\alpha)^{O(c)}$.
By Lemma \ref{lem:enum-new}, the invocation of \texttt{Enumerate-Cuts} on line \ref{alg:one-up-new:line18-2} takes $O(\alpha)^{O(c)}$ time. By an argument similar to Lemma \ref{lem:one-init-new}, the block from line \ref{alg:one-up-new:line21} to \ref{alg:one-up-new:line26} takes $O(\alpha)^{O(c)}$ time. Therefore, the block updating  $\Lambda_{G_*}$ takes $O(\abs{\updateseq} O(\alpha)^{O(c)})$ time. Since $\alpha> c/\phi$, the overall time complexity of Algorithm \ref{alg:one-up-new} is $O(\abs{\updateseq} O(\alpha)^{O(c^2)})$.
\end{proof}


\subsection{Multi-level Initialization and Update Algorithms}\label{sec:multi-level-algs}
With one-level initialization and update algorithms, we can define the multi-level initialization and update algorithms.
\SetKwFunction{Fmultiinit}{Multi-Level-Min-Cut-Initialization}
\SetKwFunction{Fmultiupdate}{Multi-level-Update}
\begin{lemma}\label{lem:multi-init-new}
For a graph $G$ with $n$ vertices and $m$ edges and parameters $c, \phi, c', \gamma, \alpha$ satisfying the following inequalities 
\[c' > c, \gamma > c', \alpha \geq c / \phi, \text{ and } \phi \leq \frac{1 }{2^{\log^{3/4} n}}\] 
Algorithm \ref{alg:multi-init-new} constructs an at most $O(\log^{1/10} n)$-level global minimum cut data structure for $G$ with parameters $c, \phi, c', \gamma, \alpha$
in time \[m^{1+o(1)} \cdot O(\alpha)^{O(c')}\]
\end{lemma}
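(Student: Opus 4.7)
The plan is to apply the one-level initialization algorithm \Fdeltainit (Algorithm~\ref{alg:one-level-init-new}) iteratively. Setting $G^{(0)} = G$, at each level $j$ I would call \Fdeltainit on $G^{(j)}$ with parameters $c, \phi, c', \gamma, \alpha$ to obtain a one-level data structure $(G^{(j)}, \calP^{(j)}, CC^{(j)}, F^{(j)}, \Lambda_{G_*^{(j)}})$, and then define $G^{(j+1)} = \sparsifier_{c', \gamma}(G^{(j)}, \calP^{(j)}, CC^{(j)}, F^{(j)})$ to feed into the next call. The iteration stops at the first level $\ell$ where $\calP^{(\ell)}$ consists of a single cluster, since at that point the conditions (c1)--(c3) required of a multi-level global minimum cut data structure hold by construction (each $(G^{(j)}, \calP^{(j)}, CC^{(j)}, F^{(j)}, \Lambda_{G_*^{(j)}})$ is a valid one-level data structure by Lemma~\ref{lem:one-init-new}, and consecutive levels are glued via the sparsifier as required).

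The key quantitative ingredient is the shrinkage guarantee in Lemma~\ref{lem:old_main_initialization}: the one-level sparsifier $\sparsifier_{c',\gamma}(G^{(j)}, \calP^{(j)}, CC^{(j)}, F^{(j)})$ has at most $m_j \cdot \phi \cdot \factor$ vertices and edges, where $m_j$ denotes the edge count of $G^{(j)}$. The hypothesis $\phi \leq 2^{-\log^{3/4} n}$ together with $\factor = 2^{o(\log^{3/4} n)}$ implies $\phi \cdot \factor \leq 2^{-\Omega(\log^{3/4} n)}$, so the sparsifier shrinks by a factor of $2^{\Omega(\log^{3/4} n)}$ per level. This is precisely the shrinkage used to bound the number of levels by $O(\log^{1/10} n)$ in Theorem~\ref{thm:js21main}, and the same bound applies verbatim here since the local-cut priority queue $\Lambda_{G_*^{(j)}}$ does not affect how sparsifiers compose between levels.

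For the running time, Lemma~\ref{lem:one-init-new} gives time $m_j^{1+o(1)} \cdot O(\alpha)^{O(c')}$ for initializing the $j$-th level. Because the $m_j$ form a geometric sequence shrinking by the factor above, $\sum_{j=0}^{\ell} m_j^{1+o(1)}$ is dominated by $m^{1+o(1)}$, and multiplying by the common per-level overhead $O(\alpha)^{O(c')}$ yields the claimed total $m^{1+o(1)} \cdot O(\alpha)^{O(c')}$.

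The main thing to be careful about is that the one-level initialization hypothesis $\alpha \geq \lceil c'/\phi\rceil$ and $c' \geq c$, $\gamma \geq c'$ remain valid at every level: since we reuse the same $c, \phi, c', \gamma, \alpha$ throughout (the multi-level initialization, unlike the update routine, does not degrade $\phi$ from one level to the next), the inductive call is well-defined, and the termination-at-single-cluster condition is met within $O(\log^{1/10} n)$ iterations. Once these checks are in place, combining Lemma~\ref{lem:one-init-new} with the shrinkage above finishes the proof.
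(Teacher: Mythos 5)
Your proposal follows essentially the same route as the paper's proof: iterate the one-level initialization (Lemma~\ref{lem:one-init-new}) until the partition has a single cluster, bound the number of levels, and sum the per-level costs, which is exactly what Algorithm~\ref{alg:multi-init-new} and its analysis do. The only differences are cosmetic --- you sum a geometric series of level sizes whereas the paper multiplies the per-level bound $m^{1+o(1)}\cdot O(\alpha)^{O(c')}$ by the $O(\log^{1/10}n)$ level count and absorbs the polylogarithmic factor --- though note that the $O(\log^{1/10}n)$ level bound ultimately rests on Theorem~\ref{thm:js21main} (as the paper itself cites), since the $2^{-\Omega(\log^{3/4}n)}$ per-level shrinkage you compute would on its own only give an $O(\log^{1/4}n)$ bound.
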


\begin{algorithm}[htb]
\caption{Multi-Level Minimum Cut Data Structure Initialization} \label{alg:multi-init-new}
\KwIn{$G$: graph before update,\newline
$c, \phi,c', \gamma, \alpha$: parameters.
}
\KwOut{$\{(G^{(i)}, \calP^{(i)}, CC^{(i)}, F^{(i)},    \Lambda_{G^{(i)}_*} \}_{i=1}^\ell$: multi-level minimum cut data structure
}

\SetKwFunction{Foneinit}{$c$-terminal-one-level-init}
\SetKwProg{Fn}{Function}{:}{}
\Fn{\Fmultiinit{$G, c, \phi,c', \alpha, \gamma$}}{
$\ell\gets 0, G^{(0)}\gets G$\;
\Repeat{$\calP^{(\ell - 1)}$ contains only one cluster}{
$\wrap{G^{(\ell)}, \calP^{(\ell)}, CC^{(\ell)}, F^{(\ell)},  \Lambda_{G_*^{(i)}}}\gets $ \Fdeltainit{$G^{(\ell)}, c, \phi, c',\gamma, \alpha$}\;
$G^{(\ell + 1)}\gets \sparsifier_{c', \gamma}(G^{(\ell)}, \calP^{(\ell)}, CC^{(\ell)}, F^{(\ell)})$\;
    $\ell\gets \ell + 1$\;}
    \Return $\set{(G^{(i)}, \mathcal P^{(i)},  CC^{(i)}, F^{(i)}, \Lambda_{G_*^{(i)}})}_{i=0}^{\ell-1}$
}

\end{algorithm}

\begin{proof}
By the definition of the multi-level global minimum cut data structure and the algorithm, the output is a multi-level global minimum cut data structure for the input graph with parameters $c, \phi, \gamma$, and $ \alpha$. 
By Theorem~\ref{thm:js21main}, 
$\ell $ is at most $ O(\log^{1/10} n)$.

By Lemma~\ref{lem:one-init-new}, the initialization of a one-level minimum cut data structure takes \[m^{1+o(1)} \cdot O(\alpha)^{O(c')}\] time for a graph with $m$ edges. 
Since the multi-level data structure has at most $O(\log^{1/10}n)$ levels, \\\Fdeltainit is called at most $O(\log^{1/10}n)$ times, the total time of \\executing \Fdeltainit is $O(m ^{1+o(1)}\log n) \cdot O(\alpha)^{O(c')}$.
Therefore, the total running time of the algorithm is \[O(m ^{1+o(1)}) \cdot O(\alpha)^{O(c')}\]
\end{proof}

In our multi-level update algorithm, for a given update sequence, we use the one level update algorithm to update the minimum cut data structure for each level starting from level 0. After the update on  level $i$, the algorithm returns a new update sequence for the level $i+1$.
If at some level, the update sequence contains too many updates such that the resulted sparsifier is too large, we reconstruct the sparsifier to make sure that all the sparsifiers get smaller as the level increases. 

\begin{algorithm}[htb]
\caption{Multi-Level Minimum Cut Data Structure Update}\label{alg:multi-up-new}
\KwIn{$\set{G^{(i)}, \mathcal P^{(i)}, CC^{(i)}, F^{(i)},   \Lambda_{G_*^{(i)}} }_{i=0}^\ell$: multi-level global cut data structure,\newline
$\updateseq$: a sequence of update to  graph $G^{(0)}$,\newline
$c, \phi, c'=c^2 +2c,  \gamma, \alpha$:  parameters of the multi-level global cut data structure
}
\KwOut{$\set{G'^{(i)}, \mathcal P'^{(i)}, CC'^{(i)}, F'^{(i)},   \Lambda_{G_*'^{(i)}}}_{i=0}^{\ell'}$: multi-level global cut data structure for the graph after applying $\updateseq$\newline}
\SetKwFunction{Foneupdate}{$c$-terminal-one-level-update}
\SetKwProg{Fn}{Function}{:}{}
\Fn{\Fmultiupdate{}}
{
$i\gets 0, \updateseq^{(0)}\gets \updateseq$\;
\While{$\updateseq^{(i)}$ contains no more than $n^{(i)} / \log n$ updates, where $n^{(i)}$ is the number of vertices in $G^{(i)}$}{
    $( G'^{(i)}, \mathcal P'^{(i)}, CC'^{(i)}, F'^{(i)},   \Lambda_{G_*'^{(i)}}  ), \updateseq^{(i+1)} \gets$ \Fdeltaupdate
    ($(G^{(i)}, \mathcal P^{(i)}, CC^{(i)}, F^{(i)},  \Lambda_{G_*^{(i)}}   ), \updateseq^{(i)}, c, \phi,c^2 + 2c, \gamma, \alpha)$\; 
    $i \gets i+1$\;
}
$\phi\gets \phi/2^{O(\cdot \log^{1/3}n\log^{2/3}\log n)}$\;

$\set{G'^{(j)}, \mathcal P'^{(j)}, CC'^{(j)}, F'^{(j)},  \Lambda_{G_*'^{(i)}}}_{j = i}^{\ell'}\gets $ \Fmultiinit{$G'_{(i)}, c, \phi, c, \gamma, \alpha$}\label{alg:multi-up-new-line7}\;
\Return $\set{G'^{(i)}, \mathcal P'^{(i)}, CC'^{(i)}, F'^{(i)}, \Lambda_{G_*'^{(i)}}}_{i = 0}^{\ell'}$\;
}
\end{algorithm}

\begin{lemma}\label{lem:multi-up-new}
Given a multi-level minimum cut data structure $\set{G^{(i)}, \mathcal P^{(i)}, CC^{(i)}, F^{(i)}, \Lambda_{G_*^{(i)}}}_{i=0}^\ell$ with parameters $c, \phi, c', \alpha, \gamma$ satisfying the following inequalities 
\[c'= c^2 + 2c, \gamma > c' ,  \alpha \geq (c^2 + 2c)\cdot 2^{O(\log^{1/3}n\log\log n)} / \phi, \text{ and } \phi \leq \frac{1 }{2^{\log^{3/4} n}}\] 
such that the data structure is updated using at most  $O(\log \log n)$ update sequences since initialized,
and an update sequence $\updateseq$ of graph $G^{(0)}$, 
then 
Algorithm \ref{alg:multi-up-new} update the multi-level minimum cut data structure of $O(\log^{1/10} n\log \log n)$ levels with respect to the given update sequence with parameters $c, \phi / 2^{O(\log^{1/3}n\log^{2/3}\log n)}, \gamma, \alpha$ in time  \[O(\abs{\updateseq}O(\alpha )^{O(c^2 \log ^{1/10}n \log \log n)})\]
\end{lemma}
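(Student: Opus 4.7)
The plan is to prove Lemma \ref{lem:multi-up-new} by tracing the algorithm level by level, invoking Lemma \ref{lem:one-up-new} inside the while loop and Lemma \ref{lem:multi-init-new} for the reconstruction, and then separately accounting for correctness, the number of levels, the length of each induced update sequence $\updateseq^{(i)}$, and the total time spent.

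For correctness I would argue by induction on $i$ that at the start of each while-loop iteration, $\updateseq^{(i)}$ is a valid update sequence transforming the stored $G^{(i)}$ into the level-$i$ sparsifier of the updated graph. The base case is the input assumption for $\updateseq^{(0)}$. For the inductive step, property 6 of Lemma \ref{lem:one-up-new} produces $\updateseq^{(i+1)}$ that updates $\sparsifier_{c', \gamma}(G^{(i)}, \calP^{(i)}, CC^{(i)}, F^{(i)})$; by condition (c2) in Section~\ref{sec:multi-s-t-ds} this sparsifier is exactly $G^{(i+1)}$, so the invariant propagates. Once the loop exits at the first level $i^*$ with $|\updateseq^{(i^*)}|> n^{(i^*)}/\log n$, we apply $\updateseq^{(i^*)}$ to $G^{(i^*)}$ and invoke \Fmultiinit, whose correctness follows from Lemma \ref{lem:multi-init-new}. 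Concatenating the top and bottom portions yields a valid multi-level global minimum cut data structure for the updated graph.

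For the number of levels, the top portion inherited from the input has at most $O(\log^{1/10} n \log\log n)$ levels by the hypothesis of at most $O(\log\log n)$ prior updates, and the reconstructed bottom adds at most $O(\log^{1/10} n)$ further levels by Lemma \ref{lem:multi-init-new}, keeping the total within $O(\log^{1/10} n \log\log n)$. For the lengths, iterating property 6 of Lemma \ref{lem:one-up-new} gives $|\updateseq^{(i)}| \leq |\updateseq| \cdot (10c)^{O(ci)} \leq |\updateseq| \cdot (10c)^{O(c \log^{1/10} n \log\log n)}$ at every level reached by the loop.

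For the running time, the while-loop contribution is bounded by summing $O(|\updateseq^{(i)}|) \cdot O(\alpha)^{O(c^2)}$ (Lemma \ref{lem:one-up-new}) over $O(\log^{1/10} n \log\log n)$ levels, which yields at most $|\updateseq| \cdot O(\alpha)^{O(c^2 \log^{1/10} n \log\log n)}$ since the factor $(10c)^{O(c\log^{1/10}n \log\log n)}$ is absorbed into $O(\alpha)^{O(c^2\log^{1/10}n\log\log n)}$ given $c=(\log n)^{o(1)}$. For the reconstruction starting at level $i^*$, the trigger condition $|\updateseq^{(i^*)}|>n^{(i^*)}/\log n$ together with the constant-maximum-degree invariant preserved across levels via the multigraph reduction of Section~\ref{sec:preliminary} gives $m^{(i^*)}=O(n^{(i^*)})\leq O(|\updateseq^{(i^*)}|\log n)$; Lemma \ref{lem:multi-init-new} then bounds the reconstruction by $(m^{(i^*)})^{1+o(1)} O(\alpha)^{O(c')}$, which stays within the claimed bound. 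Adding the two contributions gives the stated running time.

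The main obstacle I expect is the parameter bookkeeping across the chain of calls: after the loop the algorithm decreases $\phi$ by a factor of $2^{O(\log^{1/3} n \log^{2/3} \log n)}$ before calling \Fmultiinit, and I must verify that the new $\phi$ still satisfies $\phi \leq 1/2^{\log^{3/4} n}$ and that $\alpha \geq \ceil{c/\phi}$ continues to hold after the $O(\log\log n)$ cumulative reductions of $\phi$ permitted by the hypothesis. Checking these arithmetic conditions, together with the fact that the constant-maximum-degree invariant from the degree reduction survives every sparsifier level so that $m^{(i^*)}$ is indeed linear in $n^{(i^*)}$, is the most delicate piece of the proof.
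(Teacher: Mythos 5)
Your proposal follows essentially the same route as the paper's proof: iterate Lemma~\ref{lem:one-up-new} through the while loop (yielding per-level sequence lengths $|\updateseq|(10c)^{O(ic)}$ and per-level cost $O(|\updateseq|)\cdot O(\alpha)^{O(ic^2)}$), invoke Lemma~\ref{lem:multi-init-new} for the reconstruction, bound the level count by $O(\log^{1/10}n\log\log n)$ via the $O(\log\log n)$-update-sequence hypothesis, and sum over levels. You are in fact somewhat more thorough than the paper, which silently omits the reconstruction cost and defers the parameter bookkeeping to the proof of Lemma~\ref{lem:update-new}; the only small correction is that the bound $m^{(i^*)}=O(n^{(i^*)})$ at the reconstruction level should be justified by the sparsifier size bounds (Lemma~\ref{lem:old_main_initialization} and the $O(|CC|+|T|)$ vertex/edge count of the sparsifier) rather than by the input degree reduction surviving across levels, which it need not.
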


\begin{proof}
The correctness of the algorithm is implied by Lemma~\ref{lem:one-up-new} and Lemma~\ref{lem:multi-init-new}. 
Since the multi-level data structure has $O(\log^{1/10} n)$ levels after initialization (Lemma~\ref{lem:multi-init-new}), and after applying an update sequence, the number of levels increases by at most $O(\log^{1/10} n)$  (new layers are added in line \ref{alg:multi-up-new-line7} by running the initialization algorithm again), the total number of levels is  $O(\log^{1/10} n \log \log n)$ after applying $O(\log \log n)$ update sequences to the graph.

Now we bound the running time. By Lemma~\ref{lem:one-up-new}, at level $i$, the update sequence has length at most $|\updateseq|(10 c)^{O(i\cdot c)}$, and  the running time of \Fdeltaupdate for level $i$ is at most 
\[O(|\updateseq|) \cdot (10c)^{O(i\cdot c)}\cdot O(\alpha )^{O(c^2)} =O(|\updateseq|) \cdot  O(\alpha )^{O(i \cdot c^2)}\]
Together with the fact that the minimum cut data structure has at most $O(\log^{1/10} n \log \log n)$ levels, the overall running time is $O(\abs{\updateseq}O(\alpha )^{O(c^2 \log ^{1/10}n \log \log n)})$.
\end{proof}


\subsection{Returning the Global Minimum Cut and Proof of Theorem \ref{thm:main}}\label{sec:main-proof}
\SetKwFunction{Ffinal}{Dynamic-Min-Cut-DS}
\begin{lemma}\label{lem:update-new}
For any $c = (\log n)^{o(1)}$, 
there is a fully dynamic algorithm \Ffinal  which maintains a set of multi-level global minimum cut data structure such that each maintained multi-level global minimum cut data structure has $O(\log^{1/10} n \log \log n)$ levels all the time, and after processing each update, the algorithm provides the access to one of the maintained multi-level data structure   $\set{G^{(i)}, \mathcal P^{(i)}, CC^{(i)}, F^{(i)},\Lambda_{G_*^{(i)}}   }_{i=0}^\ell$ with parameters $c, \phi, c', \gamma$, and $\alpha$  for the up-to-date graph satisfying the following conditions:
\[\phi = 1/n^{o(1)},  c' \geq c, \gamma > c',  \text{ and } \alpha \geq c' / \phi\]
The initialization time of the algorithm is $m^{1+o(1)}$, and the update time of the algorithm is $n^{o(1)}$ per update.
\end{lemma}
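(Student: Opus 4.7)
The approach is to follow exactly the recipe used in \cite{js20} to prove Theorem~\ref{thm:js21main}, substituting our multi-level global minimum cut initialization and batched update (Lemmas~\ref{lem:multi-init-new} and~\ref{lem:multi-up-new}) for the analogous terminal edge connectivity primitives. Concretely, I will feed these two routines into the online-to-batched-dynamic framework of Nanongkai, Saranurak and Wulff-Nilsen \cite{nanongkai2017dynamic} reviewed in Appendix~\ref{sec:online_batch_general}, which maintains a small staggered collection of copies of the data structure so that at every time step one copy is synchronized with the current graph and available for queries, while the worst-case work per update equals the amortized batched cost up to constant factors. Since, by the remark following Definition~\ref{def:lambda}, a $c'$-terminal sparsifier still supports $c$-cut queries for any $c'\ge c$, the query semantics stated in the lemma are preserved throughout.

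\textbf{Parameter Choice and Time Bounds.} I will initialize with $\phi_0 = 1/2^{\log^{3/4} n}$, $c' = c^2+2c$, $\gamma > c'$, and $\alpha_0 = \lceil c'/\phi_0\rceil$. Since $c=(\log n)^{o(1)}$ we have $c'=(\log n)^{o(1)}$ and $\alpha_0 = n^{o(1)}$. Plugging into Lemma~\ref{lem:multi-init-new}, initialization costs $m^{1+o(1)}\cdot O(\alpha_0)^{O(c')} = m^{1+o(1)}$. For batched updates, Lemma~\ref{lem:multi-up-new} processes a sequence of length $|\updateseq|$ in time $|\updateseq|\cdot O(\alpha)^{O(c^2 \log^{1/10} n \log\log n)}$; since $\log\alpha = o(\log n)$ and the outer exponent is $(\log n)^{o(1)}\cdot\log^{1/10} n\cdot\log\log n = o(\log n)$, this is $|\updateseq|\cdot n^{o(1)}$, giving amortized $n^{o(1)}$ per update in a batch. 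The Nanongkai--Saranurak--Wulff-Nilsen framework then converts this into worst-case $n^{o(1)}$ per update, with each arriving update paying $n^{o(1)}$ bookkeeping work to each staggered copy to advance its background batched update.

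\textbf{Level Count and Parameter Stability.} Lemma~\ref{lem:multi-init-new} produces at most $O(\log^{1/10} n)$ levels, and Lemma~\ref{lem:multi-up-new} adds at most $O(\log^{1/10} n)$ new levels per batched update while degrading $\phi$ by a multiplicative $2^{O(\log^{1/3} n\log^{2/3}\log n)}$. The framework triggers a full reinitialization on each copy after every $O(\log\log n)$ batches, so the total level count stays at $O(\log^{1/10} n \log\log n)$ and $\phi$ remains $1/n^{o(1)}$ throughout, exactly matching the output conditions of the lemma.

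\textbf{Main Obstacle.} The subtlest piece is verifying that, as the staggered copies are advanced and reinitialized in lockstep with arriving updates, the preconditions of Lemma~\ref{lem:multi-up-new} (namely $c' = c^2 + 2c$, the lower bound on $\alpha$, and the upper bound on $\phi$) hold at every invocation of the batched update routine on every copy. This is a bookkeeping argument: we schedule reinitializations based on the number of updates applied since the last reinitialization, and we choose initial parameters with enough slack (larger $c'$ and $\alpha$, smaller $\phi$) so that the required inequalities persist through all the batched updates applied before the next scheduled reinitialization. Because our Lemmas~\ref{lem:multi-init-new} and~\ref{lem:multi-up-new} have the same interface shape and the same parameter-degradation bounds as the terminal edge connectivity counterparts used in \cite[Theorem~9.3 and~9.4]{js20}, the same scheduling constants work here verbatim, which completes the plan.
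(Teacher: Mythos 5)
Your high-level route is the same as the paper's (feed the multi-level initialization and batched update, Lemmas~\ref{lem:multi-init-new} and~\ref{lem:multi-up-new}, into the framework of Lemma~\ref{lem:fully_framework}), but there is a genuine gap: the concrete parameters you commit to only survive a \emph{single} batched update, and the part you defer to ``the same scheduling constants as \cite{js20} verbatim'' is exactly the substance of the proof. Each invocation of the batched update consumes the cut-containment slack: Lemma~\ref{lem:one-up-new} (inherited from Lemma~\ref{lem:old_main}) requires the input structure to have containment parameter $c^2+2c$ and outputs one with containment parameter $c$. So initializing with $c'=c^2+2c$, as you propose, permits one call to Lemma~\ref{lem:multi-up-new}, after which its precondition $c'=c^2+2c$ fails. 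But Lemma~\ref{lem:fully_framework} only yields $n^{o(1)}$ worst-case update time if the offline update can be applied $\xi=\omega(1)$ times sequentially: its bound $O\bigl(4^\xi(\rti/w + w^{1/\xi}\rtu)\bigr)$ with $\rti=m^{1+o(1)}$ is polynomial for any constant $\xi$. The paper handles this with a backward cascade $c_{\parametertimesub}=c$, $c_i=c_{i+1}(c_{i+1}+2)$, initializing with containment parameter $c_0\approx (4c)^{2^{\parametertimesub}}$, and setting $\gamma=c_0+1$ and $\alpha=\lceil c_0/\phi_{\parametertimesub}\rceil$, i.e.\ $\alpha$ is pinned to the \emph{final, smallest} conductance $\phi_{\parametertimesub}$; your $\alpha_0=\lceil c'/\phi_0\rceil$ already violates the requirement $\alpha\ge (c^2+2c)\cdot 2^{O(\log^{1/3}n\log\log n)}/\phi$ at the second batch, once $\phi$ has degraded.

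This squaring per batch is also why your claim that copies are reinitialized ``after every $O(\log\log n)$ batches'' cannot work and why the hypothesis $c=(\log n)^{o(1)}$ enters. Supporting $\Theta(\log\log n)$ batches would force $c_0=(4c)^{2^{\Theta(\log\log n)}}=2^{\log^{\Theta(1)}n}$, making the $O(\alpha)^{O(c_0)}$ initialization and $O(\alpha)^{O(c_i^2\log^{1/10}n\log\log n)}$ update factors superpolynomial. The paper instead caps the number of sequential batches at $\parametertimesub=\bigl\lfloor \log\bigl(\frac{\log\log n/100}{\log(4c)}\bigr)\bigr\rfloor=O(\log\log\log n)$, which keeps $c_0\le\log^{1/100}n$ (so all running-time factors stay $n^{o(1)}$) while still having $\parametertimesub=\omega(1)$ precisely because $c=(\log n)^{o(1)}$; the $O(\log\log n)$ figure in Lemma~\ref{lem:multi-up-new} is only a hypothesis used for the level count, not the number of batches the parameters can actually sustain. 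Without spelling out this doubly-exponential parameter schedule (and the matching choices of $\gamma$ and $\alpha$), the proof is incomplete and the stated parameter choice would fail.
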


\begin{proof}
Suppose parameter $\parametertimesub$ is defined as 
\begin{align*}
\parametertimesub \eqdef \left \lfloor \log \left( \frac{\log \log n / 100}{\log (4c)} \right)\right \rfloor.
\end{align*}
Since $c=(\log n)^{o(1)}$, we have $\parametertimesub= \omega(1)$ and $\parametertimesub= O(\log \log \log n)$.

We define parameters $c_i, \phi_i$ for all $0, \leq i\leq \parametertimesub$, and $\gamma, \alpha$ such that the parameters for a multi-level minimum cut data structure are $c, \phi_0, c_0, \gamma, \alpha$ after initialization, and 
after executing Algorithm \Fmultiupdate on the multi-level minimum cut data structure $i$ times, the  parameters for the multi-level minimum cut data structure are $c, \phi_i, c_i, \gamma, \alpha$.
We set $c_i$ as follows
\[c_{\parametertimesub} \eqdef c \text{ and } c_i \eqdef c_{i+1} (c_{i+1} + 2)  \text{ for } 0 \leq i < \parametertimesub.
\]
We set $\phi_{i}$ as follows
\[\phi_0 \eqdef \frac{1 }{2^{\log^{3/4} n}}, \phi_i \eqdef \frac{\phi_{i-1} }{ \truefactor} \text{ for } 1 \leq i \leq \parametertimesub, \text{where } \delta \text{ is a constant}. \]
We set $\gamma, \alpha$ as 
\[\gamma \eqdef c_{0} + 1, \alpha \eqdef \lceil c_0 / \phi_{\parametertimesub} \rceil\]
We have the following observations for our parameters
\begin{itemize}
    \item $c_i < (4c)^{2^{\parametertimesub}} = \log^{1/100} n$ for any $0 \leq i \leq \parametertimesub$.
    \item $\phi_i \leq  \frac{1}{2^{\log^{3/4} n}}$ and $\phi_i =  \frac{1}{2^{O(\log^{3/4} n)}}$ for any $0 \leq i \leq \parametertimesub$.
    \item $\gamma > c_i, \alpha \geq c_i / \phi_i$ for any $0 \leq i \leq \parametertimesub$.
    \end{itemize}
By Lemma~\ref{lem:multi-init-new} and Lemma~\ref{lem:multi-up-new}, if we have a multi-level minimum cut data structure initialized by Algorithm \Fmultiinit with parameters $c, \phi_0, c_0, \gamma, \alpha$, and the data structure has been updated $i$ times by Algorithm \Fmultiupdate for some $i\leq \parametertimesub-1$ such that after $j$-th for each $1 \leq j \leq i$, the parameters of the data structures are $c, \phi_j, c_j, \gamma, \alpha$, then if we use Algorithm \Fmultiupdate to update the data structure again, the parameters of the data structures are $c, \phi_{i+1},c_{i+1}, \gamma, \alpha$. By our choice or parameters and the time complexity of Lemma \ref{lem:multi-init-new} and \ref{lem:multi-up-new}, since $c_i < \log^{1/100} n$ and $\phi_i = 1/2^{O(\log^{3/4} n)}$, 
$O(\alpha)^{O(c^2\log ^{1/10} n \log \log n)}$ is always $n^{o(1)}$. Therefore, \Fmultiinit can initialize the data structure in $m^{1+o(1)}$ time and \Fmultiupdate can handle a sequence of updates with $n^{o(1)}$ time per update.

By our selection of parameters and Lemma~\ref{lem:fully_framework}, we obtained a fully dynamic algorithm with $m^{1+o(1)}$ initializaiton time and $n^{o(1)}$ update time.
\end{proof} 

\SetKwFunction{Fquery}{Fully-Dynamic-Min-Cut}
Now we prove Theorem~\ref{thm:main} using Algorithm \Fquery (Algorithm~\ref{alg:query-new}). In our dynamic algorithm, we maintain a set of multi-level minimum cut data structures such that after processing each update, the algorithm returns the access to one of the maintained multi-level minimum cut data structure with parameters $c, \phi, c', \alpha,  \gamma$ which is for the up-to-date graph.
Then for each one-level data structure $(G^{(i)}, \calP^{(i)},  CC^{(i)}, F^{(i)}, \Lambda_{G_*^{(i)}}  )$, one of the following cases hold: 
\begin{enumerate}
    \item The cut-set of any minimum cut of size $c$ for graph $G^{(i)}$ is also the cut-set of a minimum cut for $G$.

    \item The minimum cut of $G^{(i)}$ is of size greater than $c$.

\end{enumerate}

Furthermore, by Lemma \ref{lem:one-level-query}, the minimum cut of $G^{(i)}$ can be obtained by finding the smallest cut  from 
\begin{itemize}
    \item The cut induced by the cut-set of a minimum cut in $\sparsifier_{c,\gamma}(G^{(i)}, \mathcal P^{(i)}, CC^{(i)}, F^{(i)})$, which is also $G^{(i+1)}$
    \item A cut $(U\cap V, V_*\cut U\cap V)$, where $U $ is  the vertex set in $\Lambda_{G_*^{(i)}}$ with highest priority.
\end{itemize}


Recall that terminals in one level minimum cut data structures are boundary vertices of the expander decomposition. When a terminal cut exists in the one-level minimum cut data structure for level $i$,  there are more than one cluster in the expander decomposition in level $i$. Therefore, we can retrieve the minimum cut from sparsifiers of the current level, i.e., the one-level minimum cut data structure for level $i+1$. 
We have the following algorithm to return the minimum cut using the multi-level minimum cut data structure. 
\SetKwFunction{Fquery}{Fully-Dynamic-Min-Cut}
\SetKwProg{Fn}{Function}{:}{}
\begin{algorithm}[htb]
\caption{Fully Dynamic Minimum Cut}\label{alg:query-new}
\KwIn{$\updatesingle$: a multigraph vertex/edge insertion/deletion update
}

\Fn{\Fquery}{
$\set{G^{(i)}, \mathcal P^{(i)},  CC^{(i)}, F^{(i)},\Lambda_{G^{(i)}_*}}_{i=0}^\ell\gets \Ffinal(\updateseq)$\;
\For{$i\gets$ $\ell$ down to $0$}{
    $U\gets \text{highest priority item in }\Lambda_{G_*^{(i)}}$\label{alg:query-new-line7}\;
    $A_i\gets$ edges between $U$ and $V\cut U$ or $\perp$ if it doesn't exist\label{alg:query-new-line8}\;
    \If {$i < \ell$ and ($A_{i} = \bot$ or $|A_i| > |A_{i+1}|$)} {\label{alg:query-new-line10}
        $A_i \gets A_{i+1}$\;
    }
}
\If {$A_0 = \bot$} {\Return $\emptyset$} \Else{\Return $A_0$}
}
\end{algorithm}

\begin{proof}[Proof of Theorem~\ref{thm:main}]
We first prove the correctness of Algorithm~\ref{alg:query-new}. 
By Lemma~\ref{lem:update-new}, 
the multi-level minimum cut data structure obtained is for the up-to-date graph with $\ell = O(\log^{1/10} n\log \log n)$ levels.

We use induction to show that for any $0\leq i\leq \ell$, $A_i$ is the cut-set of a minimum cut of $G^{(i)}$ if the minimum cut of $G^{(i)}$ is of size at most $c$, otherwise, $A_i$ is $\bot$. 
Since $G^{(\ell)}$ only contains a single cluster, all the vertices of $G^{(\ell)}$ are non-terminal vertices. Therefore, if a minimum cut of size  $j\leq c$ exists, a side of this cut is stored in $\Lambda_{G_*^{\ell}}$.
Hence, $A_\ell$ is the cut-set of a minimum cut of  $G^{(\ell)}$ if the minimum cut of $G^{(\ell)}$ is of size at most $c$, otherwise, $A_\ell$ is $\bot$. 

For any $0\leq i < \ell$, by Lemma~\ref{lem:one-level-query}, if the minimum cut of $G^{(i)}$ is of size at most $c$, then it is either obtained from a minimum cut in the sparsifier ($A_{i+1}$ here), or a vertex set with the lowest priority in $\Lambda_{G_*^{(i)}}$ (line \ref{alg:query-new-line7} and \ref{alg:query-new-line8}).
In line \ref{alg:query-new-line10}, the algorithm sets $A_i$ to be the smaller of the two cuts above. Therefore, $A_i$ is the cut-set of a minimum cut of  $G^{(i)}$ if the minimum cut of $G^{(i)}$ is of size at most $c$, otherwise, $A_i$ is $\bot$.

To bound the running time, we note that for any $0 \leq i \leq \ell$, 
the minimum cut from $\Lambda_{G_*^{(i)}}$ can be obtained in $O(c\log n)$ time. Since the multi-level data structure has $O(\log^{1/10} n\log \log n)$ levels, the running time to find the cut-set of minimum $c$-cut is polylogarithmic. 
By Lemma~\ref{lem:update-new}, the processing time for each update is $n^{o(1)}$.
\end{proof}

\bibliography{reference}
\bibliographystyle{plain}
\appendix

\section{Dynamic Simple Graph to Dynamic Multigraph with A Constant Number of Neighbors}\label{sec:transform}
Let $\overline{G} = (\overline{V}, \overline{E})$ be the original dynamic simple graph with an arbitrary degree. 
For a fixed positive integer $c$,
we use the degree reduction technique~\cite{harary6graph} to transform $\overline{G}$ to a multigraph $G  = (V, E)$ 
such that every vertex has at most a constant number of distinct neighbors
as follows:
\begin{enumerate}
\item The vertex set $V$ of $G$ is \[V = \left\{v_{u, w} : (u, w) \in \overline{E}\right\} \cup \left\{v_{u, u} : u \in \overline{V}\right\}.\] 
\item 
For any edge $(u, w) \in \overline{E}$, add edge $(v_{u, w}, v_{w, u}, 1)$ to $E$.
\item For every vertex $u$ of $\overline{G}$ with degree at least $1$, 
let $w_{u, 0} = u$ and  $w_{u, 1}, w_{u, 2}, \dots, w_{u, \deg(u)}$ be the neighbors of $u$ in $\overline{G}$.
Add edge $(v_{u, w_{u, i}}, v_{u, w_{u, i+1}}, c+1)$ to $E$ for every $0 \leq i < \deg(u)$.
\end{enumerate}
To maintain the correspondence between $\bar G$ and $G$, for every $u \in \bar V$, we maintain the list of $w_{u, 0}, \dots, $ $w_{u, \deg(u)}$.

Suppose $\overline E' \subset \overline E$ is the cut-set of a cut $\overline C$ in $\overline G$ with cut size at most $c$. 
One can verify that the edge set 
\[E' = \{(v_{u, w}, v_{w, u}, 1): (u, w) \in E'\}\]
is a subset of edges of $E$, and $E'$ is the cut-set of a cut $C$ in $G$. 
The cut size of $C$ in $G$ is the same as the cut size of $\overline C$ in $\overline G$.

Similarly, suppose $E'$ is the cut-set of a cut $C$ in $G$ with cut size at most $c$. Then every edge in $E'$ is an edge of form $(v_{u, w}, v_{w, u}, 1)$ for some $u, w \in \overline V$. 
Let $\overline E'$ be the edge subset defined as 
\[\overline E' = \{(u, v) : (v_{u, w}, v_{w, u}, 1) \in E'\}. \]
$\overline E'$ is the cut-set of a cut $\overline C$ in $\overline G$.
The cut size of $\overline C$ in $\overline G$ is the same as the cut size of $C$ in $G$.
Hence, there is a bijection between all the $c$-cuts in $\overline G$ and all the $c$-cuts in $G$. 

If edge $(u, w)$ is inserted to graph $\overline G$, then we update graph $G$ and the neighbor lists for the vertices of $\overline G$ as follows: Let $u'$ be the last element in the neighbor list of $u$, and $w'$ be the last element in the neighbor list of $w$. 
We apply the following updates to multigraph $G$
\[\insertt(v_{u, w}), \insertt(v_{w, u}), \insertt(v_{u, w}, v_{w, u}, 1), \insertt(v_{u, w}, v_{u, u'}, c+1), \insertt(v_{w, u}, v_{w, w'}, c+1),\]
and add $w$ to the end of the neighbor list of $u$, and add $u$ to the end of the neighbor list of $w$. 

If edge $(u, w)$ is deleted from graph $\overline G$, then we update graph $G$ and the neighbor lists for the vertices of $\overline G$ as follows: Let $u'$ be the previous element of $w$ in the neighbor list of $u$, 
$u''$ be the next element after $w$ in the neighbor list of $u$ if exists, 
$w'$ be the previous element in the neighbor list of $w$, 
and $w''$ be the next element in the neighbor list of $w$ if exists. 
We apply the following updates to multigraph $G$
\[\begin{split}
& \ \ \ \ \delete(v_{u, w}, v_{w, u}, 1),
\delete(v_{u, w}, v_{u, u'}, c+1), 
\delete(v_{u, w}, v_{u, u''}, c+1), 
\delete(v_{w, u}, v_{w, w'}, c+1),\\
& 
\delete(v_{w, u}, v_{w, w''}, c+1),
\insertt(v_{w, w'}, v_{w, w''}, c+1),
\insertt(v_{u, u'}, v_{u, u''}, c+1), 
\delete(v_{u, w}), \delete(v_{w, u}),
\end{split}\]
where the operations related to $u''$ (or $w''$) are omitted if $u''$ (or $w''$) does not exist. 
We also remove $w$ from the neighbor list of $u$, and remove $u$ from the neighbor list of $w$.

\section{Contraction Technique}\label{sec:contraction}

In this section, we summarize the construction  $\contract_K(F)$ for a given unweighted forest $F$ and a set of terminals $K$.

We start by defining the \emph{connecting paths} for a tree $T = (V, E)$ and a set of terminals $K \subset V$. 
The set of connecting paths of $T$ with respect to $K$ is the minimal collection of edge disjoint paths in $T$ that connect vertices in $K$.
Formally, given $T$ and $K$, 
the set of connecting paths, denoted as $\mathsf{Path}_{K}(T)$,  is a set of edge disjoint paths of $T$ satisfying the following conditions:
\begin{enumerate}
    \item The union of all the paths in  $\mathsf{Path}_{K}(T)$ is a connected subtree of $T$. 
    \item For any $v \in K$, $v$ is an end point of some path $\mathsf{Path}_{K}(T)$. 
    \item For any endpoint $v$ of a path in $\mathsf{Path}_{K}(T)$, 
    $v$ is either a vertex in $K$ or $v$ is an endpoint for at least three paths in $\mathsf{Path}_{K}(T)$. 
\end{enumerate}

It was shown in~\cite{henzinger1997fully} that for fixed $T$ and $K$, $\mathsf{Path}_{K}(T)$ is uniquely defined, and has $O(K)$ paths. 

The contracted tree with respect to $T$ and $K$, denoted as $\contract_K(T)$, is constructed as follows:
\begin{enumerate}
    \item The vertices of $\contract_K(T)$ are the endpoints of paths in $\mathsf{Path}_{K}(T)$. 
    \item There is an edge $(u, v)$ in $\contract_K(T)$ iff $u$ and $v$ are the two endpoints of a path in $\mathsf{Path}_{K}(T)$. 
\end{enumerate}
Without loss of generality, we assume if $|K|=1$,  $\contract_K(T)$ contains only the isolated vertex of $K$, and 
if $K = \emptyset$, $\contract_K(F)$ is an empty graph. 

The contracted tree can be extended to the contracted forest as follows: For an unweighted forest $F = (V, E)$ and a set of terminals $K \subset V$, 
$\contract_K(F)$ is the union of $\contract_{K \cap V(T)}(T)$ for each tree $T$ in $F$.

\begin{lemma}
For an unweighted forest $F = (V, E)$ and a set of terminals $K \subset V$,
$\contract_K(F)$ is a forest that contains at most $O(|K|)$ vertices and edges. 
\end{lemma}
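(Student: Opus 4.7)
The plan is to reduce to a single tree and then apply a handshake counting argument. Since $\contract_K(F)$ is defined as the union of $\contract_{K\cap V(T)}(T)$ over the trees $T$ of $F$, and the sets $K\cap V(T)$ partition $K$, it suffices to show that for any tree $T$ and terminal set $K\subseteq V(T)$, $\contract_K(T)$ is a tree with $O(|K|)$ vertices and edges; summing over components then yields the bound for the whole forest.

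First I would verify that $\contract_K(T)$ is itself a tree. By condition 1 of the definition of $\mathsf{Path}_K(T)$, the union of the paths in $\mathsf{Path}_K(T)$ forms a connected subtree $T'\subseteq T$. The graph $\contract_K(T)$ is obtained from $T'$ by suppressing exactly the degree-$2$ internal vertices of these paths, an operation that preserves the tree structure. Hence $\contract_K(T)$ is a tree, its edge count is one less than its vertex count, and it suffices to bound the number of vertices.

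For the vertex bound I would apply the handshake lemma to $\contract_K(T)$. By condition 3 of the definition of $\mathsf{Path}_K(T)$, every vertex of $\contract_K(T)$ is either in $K$ or is the endpoint of at least three paths, which translates into degree at least $3$ in $\contract_K(T)$; in particular every leaf must lie in $K$. Writing $n_1$ and $n_{\geq 3}$ for the numbers of leaves and of vertices of degree at least $3$, the handshake identity $\sum_v \deg(v)=2(|V(\contract_K(T))|-1)$ combined with $\deg(v)\geq 3$ on the $n_{\geq 3}$ vertices yields $n_{\geq 3}\leq n_1-2$. Since $n_1\leq |K|$ and every non-branching vertex of $\contract_K(T)$ lies in $K$, the total vertex count is at most $2|K|$, which is the desired $O(|K|)$ bound.

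The main subtlety is mostly bookkeeping in degenerate cases: when $|K\cap V(T)|\leq 1$ the contracted tree $\contract_{K\cap V(T)}(T)$ is either empty or a single isolated vertex, so the handshake identity is vacuous but the size bound holds trivially. One also needs to confirm that condition 3 genuinely forces every leaf of $\contract_K(T)$ to be a terminal; this is essentially the contrapositive of condition 3, since a non-terminal leaf would be an endpoint of exactly one path, contradicting the ``endpoint of at least three paths'' clause.
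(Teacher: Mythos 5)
Your proof is correct, but note that the paper itself gives no argument for this lemma: it is stated as a known fact, with the size bound attributed to Henzinger and King \cite{henzinger1997fully} (the same reference used earlier in Section 3 when $\contract_K(F)$ is introduced). Your handshake argument supplies a valid self-contained proof of exactly what is claimed: reducing to one tree $T$ is fine since the trees of $F$ have disjoint vertex sets; $\contract_K(T)$ is a tree because its edges correspond to the edge-disjoint paths of $\mathsf{Path}_K(T)$ whose union is a connected subtree, and a cycle in the contraction would lift to a cycle in $T$; condition 3 forces every vertex of degree $1$ or $2$ in the contraction to lie in $K$, and the identity $\sum_v \deg(v)=2(N-1)$ then gives $n_{\geq 3}\leq n_1-2$, hence at most $2|K|$ vertices and at most $2|K|-1$ edges, with the degenerate cases $|K\cap V(T)|\leq 1$ handled by the paper's conventions. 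The only detail worth making explicit is that the degree of a vertex in $\contract_K(T)$ equals the number of paths having it as an endpoint; this needs that no two paths of $\mathsf{Path}_K(T)$ share both endpoints and no path is closed, both of which follow from $T$ being acyclic, so the gap is trivial to fill. What your route buys, compared with the paper's citation, is a short elementary argument that also reproves the auxiliary fact (stated in the paper only as a remark from \cite{henzinger1997fully}) that $\mathsf{Path}_K(T)$ has $O(|K|)$ paths, since the number of paths equals the number of edges of $\contract_K(T)$.
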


\section{Fully Dynamic Algorithm From Offline Update Algorithm}
\label{sec:online_batch_general}

The following lemma, which was implicitly given in~\cite{nanongkai2017dynamic}, shows that for a data structure of a graph, 
if there is an offline update algorithm (i.e., the update algorithm can update the data structure with respect to a given update sequence of the graph) such that the data structure can be maintained by applying the update algorithm for a bounded number of update sequences since initialized, then there is a fully dynamic algorithm to maintain the data structure. 

\begin{lemma}[Section 5, \cite{nanongkai2017dynamic}]\label{lem:fully_framework}
Let $G$ be a dynamic graph, and $\mathfrak{D}$ be a data structure of $G$.
Let $\parametertimesub$, $ \parameterlength$, $\rti$, and $\rtu$ be four parameters. 
If 
there is an initialization algorithm for $\mathfrak{D}$ with running time $\rti$
and 
an offline update algorithm satisfying the following two conditions:
\begin{enumerate}
    \item the data structure $\mathcal{D}$ can be maintained after applying the offline update algorithm with any $\parametertimesub$ different update sequences sequentially; 
    \item for an update sequence $\updateseq$ for $G$ such that $|\updateseq| \leq \parameterlength$, the running time of the update algorithm is at most $\rtu \cdot |\updateseq|$,
\end{enumerate}
then for any $\parametertimes \leq \parametertimesub$ satisfying $\parameterlength \geq 2\cdot 6^\parametertimes$, 
there is a fully dynamic algorithm  with initialization time $O(2^\parametertimes \cdot \rti)$
and worst-case update time $O\left(4^\parametertimes \cdot \left( \rti  /w  + w^{(1/\parametertimes)}\rtu\right)\right)$
to maintain a set of $O(2^\parametertimes)$ instances of the data structure $\mathfrak{D}$
such that after each update, 
the update algorithm specifies one of the maintained data structure instances satisfying the following conditions
\begin{enumerate}
\item The specified data structure  instance is for 
the up-to-date graph.
\item The online-batch update algorithm is executed for at most $\parametertimes$ times on the specified data structure instance with each update batch of size at most $\parameterlength$.
\end{enumerate}
\end{lemma}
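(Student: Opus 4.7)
The plan is to prove Lemma~\ref{lem:fully_framework} by constructing a hierarchical pipeline of $O(2^\parametertimes)$ copies of $\mathfrak{D}$ and carefully scheduling batched offline updates and background reinitializations so that, after every real update, one specified copy has received all updates applied so far, and the incremental work done per real update meets the stated worst-case bound.

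First, I will fix a geometric batching schedule. Define batch sizes $b_i = \parameterlength^{i/\parametertimes}$ for $0 \le i \le \parametertimes$, so $b_0 = 1$ and $b_\parametertimes = \parameterlength$. At level $i$ I maintain a buffer of pending operations; whenever the buffer reaches $b_i$ operations it is flushed as a single sequence to the level-$(i{+}1)$ copy via the provided offline update algorithm at cost $\rtu \cdot b_i$. Spreading this work across the $b_i$ real updates that filled the buffer yields per-update contribution $O(\rtu)$ at level $0$ and, after the geometric summation over the $\parametertimes$ levels, per-update contribution $O(\parametertimes \cdot \parameterlength^{1/\parametertimes} \rtu)$.

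Second, I will handle reinitializations. Each copy absorbs at most $\parametertimesub$ batches before it must be rebuilt from scratch at cost $\rti$. I keep two copies at each level: one is serving updates while the other is rebuilt in the background, performing $O(\rti/\parameterlength)$ incremental work per real update; after $\parameterlength$ real updates the rebuild finishes, the newly rebuilt copy replays the cached operations that arrived in the interim (at most $\parametertimes$ batches, each of length at most $\parameterlength$, which is exactly what the offline algorithm can handle), and the two copies swap roles. Composing this doubling across the $\parametertimes$ levels gives the $O(2^\parametertimes)$ copies stated in the lemma, and the up-front cost of cloning gives the initialization bound $O(2^\parametertimes \cdot \rti)$. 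After every real update, the copy specified to the caller is the level-$0$ copy that has absorbed the single most recent operation on top of a chain of ancestors each lagging by strictly less than one of their batches.

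The main obstacle will be the combinatorial bookkeeping to convert the two separate amortized guarantees into a single worst-case bound of $O(4^\parametertimes(\rti/\parameterlength + \parameterlength^{1/\parametertimes}\rtu))$. I must argue that, at every single step, the incremental work across all levels plus the incremental background reinitialization plus any cached-update replay simultaneously sum to this bound, which is where the second factor of $2^\parametertimes$ enters: one factor comes from the pipelined rebuilds and the other from the scheduling slack needed when flushes at multiple levels coincide. I will also verify that the hypothesis $\parameterlength \ge 2 \cdot 6^\parametertimes$ gives every level enough real updates between successive flushes to complete both its share of the background rebuild and its pending batches without ever exceeding the input constraint $|\updateseq| \le \parameterlength$ of the offline update algorithm, and that the total number of batches applied to any single copy before it is retired stays below $\parametertimesub$.
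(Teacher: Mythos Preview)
The paper does not give its own proof of this lemma: it is stated in Appendix~C with the attribution ``[Section 5, \cite{nanongkai2017dynamic}]'' and introduced as ``implicitly given in~\cite{nanongkai2017dynamic}'', with no accompanying argument. So there is nothing in the paper to compare your proposal against.

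That said, your outline is broadly the right reconstruction of the Nanongkai--Saranurak framework: a $\parametertimes$-level hierarchy with geometrically increasing batch sizes $b_i = \parameterlength^{i/\parametertimes}$, two copies per level so that one can be rebuilt in the background while the other serves, and the work of each batch flush and each reinitialization spread evenly over the real updates that accumulate it. This is exactly the scheme that yields the $O(2^\parametertimes)$ instances and the additive $\rti/\parameterlength + \parameterlength^{1/\parametertimes}\rtu$ structure in the update bound.

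A few places in your sketch are still loose. First, your description of which copy is ``specified'' after each update (``the level-$0$ copy \dots on top of a chain of ancestors each lagging by strictly less than one of their batches'') is not quite how the framework works: the up-to-date copy is one that has had \emph{all} pending batches at every level applied to it, not a chain of partially-lagging ancestors; you need to argue that such a copy always exists among your $O(2^\parametertimes)$ instances. Second, the replay step after a background rebuild finishes is delicate: the freshly rebuilt copy is behind by up to $\parameterlength$ operations, and replaying those must itself be spread over future updates (this is where the extra slack from $\parameterlength \ge 2\cdot 6^\parametertimes$ is really used), not done in one shot. Third, your accounting for the $4^\parametertimes$ factor is asserted rather than derived; you should trace how the doubling at each level compounds both in the number of copies and in the per-update work. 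These are bookkeeping issues rather than conceptual gaps, but a complete proof needs them pinned down.
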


\end{document}